\newtheoremstyle{break}
  {\topsep}{\topsep}%
  {\itshape}{}%
  {\bfseries}{}%
  {\newline}{}%
\theoremstyle{break}
\newtheorem{theorem}{Theorem}
\newtheorem{proposition}{Proposition}
\newtheorem{lemma}{Lemma}
\newtheoremstyle{algorithms}
  {\topsep}{\topsep}%
  {\upshape}{}%
  {\bfseries}{}%
  {\newline}{}%
\theoremstyle{algorithms}
\newtheorem*{one sided}{Algorithm One-Sided}%
\newtheorem*{one sided star}{Algorithm One-Sided$^*$}%
\newtheorem*{two sided}{Algorithm Two-Sided}%
\newtheorem*{two sided star}{Algorithm Two-Sided$^*$}%
\newtheorem*{example}{Example:~Regression with Sign-Restricted Control Coefficients}
\DeclareMathOperator{\argmin}{argmin}
\DeclareMathOperator{\diag}{Diag}
\begin{document}

\title{Short and Simple Confidence Intervals when the Directions of Some Effects are Known\thanks{
We thank Timothy Armstrong, Christopher Blattman, Michal Koles\'ar, 
Koohyun Kwon, Soonwoo Kwon, Julian Jamison, Margaret Sheridan, Liam Wren-Lewis, and Kaspar Wuthrich for helpful comments and suggestions and Chad Brown for excellent research assistance.}} 

%\title{Simple Short Confidence Intervals in the Presence of Restricted Nuisance Parameters with an Application to Regression with Sign-Restricted Controls} 

%Initial version posted May 10, 2018.

\author{Philipp Ketz\footnote{Paris School of Economics, philipp.ketz@psemail.eu} \qquad Adam McCloskey\footnote{Department of Economics, University of Colorado, adam.mccloskey@colorado.edu}  
}

\maketitle

\begin{abstract}

We provide adaptive confidence intervals on a parameter of interest in the presence of nuisance parameters when some of the nuisance parameters have known signs.  The confidence intervals are adaptive in the sense that they tend to be short at and near the points where the nuisance parameters are equal to zero.  We focus our results primarily on the practical problem of inference on a coefficient of interest in the linear regression model when it is unclear whether or not it is necessary to include a subset of control variables whose partial effects on the dependent variable have known directions (signs).  Our confidence intervals are trivial to compute and can provide significant length reductions relative to standard confidence intervals in cases for which the control variables do not have large effects.  At the same time, they entail minimal length increases at any parameter values.  We prove that our confidence intervals are asymptotically valid uniformly over the parameter space and illustrate their length properties in an empirical application to a factorial design field experiment and a Monte Carlo study calibrated to the empirical application.

\bigskip

\noindent \textsc{Keywords:~confidence intervals, adaptive inference, uniform inference, sign restrictions, boundary problems}
\end{abstract}

\thispagestyle{empty}
\setcounter{page}{0}

\newpage

%%%%%%%%%%%%%%%%%%%%%%%%%%%%%%%%%%%%%%%%%%%%
%%%%%%%%%%%%%%%%%%%%%%%%%%%%%%%%%%%%%%%%%%%%%
\section{Introduction}\label{sec:intro}

Consider the common empirical setting for which a researcher is interested in estimating the causal effect of one variable on another via a linear regression in the presence of one or more observed potential control variables.  The researcher believes that a regression including this full set of controls should not suffer from omitted variables bias but is uncertain whether it is necessary to include them all to overcome this bias.  To obtain more informative inference, the researcher would prefer not to include controls unnecessarily and knows that if some of these controls indeed influence the outcome variable, it must be in a known positive or negative direction.  Indeed, typical heuristic explanations for the potential inclusion of a control variable to mitigate omitted variables bias involve a known ``direction'' for the effect of the omitted variable on the outcome of interest.  In this paper, we develop confidence intervals (CIs) with desirable properties for these types of settings.

More specifically, we develop CIs for a parameter of interest in the presence of nuisance parameters with a known sign.  Our CIs are designed to have uniformly correct (asymptotic) coverage and desirable length properties across the entire parameter space while becoming particularly short when these nuisance parameters are small or zero.  In the regression context, this latter property is motivated by common practical situations for which the researcher believes the regression coefficients on a subset of control variables with known partial effect directions are likely to be small or zero.  In general, our CIs can be used for inference on a parameter in any well-behaved finite-dimensional model with a large-sample normally distributed estimator when some nuisance parameters are restricted above or below by zero (possibly after a location shift). This includes regression models estimated by ordinary, generalized and two stage least squares as well as models with bounded parameter spaces such as (G)ARCH \citep[see e.g.,][]{Bollerslev:86} and random coefficient models \citep[see e.g.,][]{BLP:95,Andrews:99}. Even though the standard ``constrained'' estimator is not normally distributed in large samples when the true parameter vector is at (or close to) the boundary of the parameter space \citep[see e.g.,][]{Andrews:99}, there often exists a ``quasi-unconstrained'' estimator that is \citep{Ket18}. While noting this generality, we mainly focus on regression models for ease of exposition.%While noting this generality {\color{red}applicability}, we mainly focus on regression models with control coefficients of known sign since it is natural to desire CIs that shorten when these coefficients are small or zero.

%In case of the latter, the sign restrictions are implied by the model. In general, they may be implied by economic theory or prior knowledge, e.g., derived from existing literature or pilot studies.

To construct our CIs, we use the fact that knowledge of the signs of control variable coefficients, in addition to a standard consistent estimator of the covariance matrix of the underlying coefficient estimates, can be used to  determine the sign of the corresponding omitted variables biases incurred by omitting the corresponding control variables. In turn, standard one-sided CIs for the coefficient of interest based on regressions that omit some of these control variables maintain correct coverage. We show that a particular form of these latter CIs is expected excess length-optimal (among affine CIs\textemdash see Proposition \ref{prop:one-sided optimality} for details) when the corresponding control coefficients are equal to zero. It also has low expected excess length when the control coefficients are close to zero but its expected  excess length grows without bound as the control coefficients grow larger.  On the other hand, we show that standard one-sided CIs based upon the regression including all controls have the minimal maximum expected excess length (among affine CIs) over the parameter space that imposes the sign of the control coefficients.\footnote{In fact, we show that these two types of CIs are optimal at each quantile of the excess length distribution greater than one minus their nominal coverage probabilities.  See Proposition \ref{prop:one-sided optimality} below for details.}  They also have correct coverage and expected excess length that does not depend upon the true values of the control coefficients.  We propose adaptive one-sided CIs that utilize the strengths of both of these types of CIs by intersecting them.  We make use of the same logic for constructing two-sided CIs essentially by intersecting our lower- and upper- one-sided CIs.

In particular, we propose a computationally trivial method to find the subset of controls that is able to produce the largest expected (excess) length reductions when using this intersection principle.  In addition, the restricted parameter space implies that the coverage of these intersected CIs is lowest at its boundary.  This feature allows us to provide the user a simple means to compute the smallest CI endpoints that yield correct coverage uniformly across the parameter space via response surface regression output, rather than using a conservative Bonferroni correction.  Using our reported response surface regression coefficients, the user can immediately compute these CI endpoints as a function of one or two empirical correlation parameters, depending upon whether they are forming a one- or two-sided CI. A Stata package available in the SSC archive automatically computes the CIs we propose.\footnote{The package name is ``ssci''. Corresponding Matlab code is available on the authors' webpage.}

We show that our proposed CIs are uniformly asymptotically valid and characterize their length properties.  The latter depend upon the correlation structure of the underlying data and the true values of the unknown control coefficients.  For extreme values of correlation between the estimators of the coefficient of interest and sign-restricted controls, the expected (excess) length of our CIs can be close to 100\% smaller than that of standard CIs based upon the regression including all controls. For correlation values more likely to be encountered in practice, these expected (excess) length reductions can still exceed 30\% for commonly used confidence levels.  On the other hand, for a confidence level of 95\%, for example, our proposed two-sided CIs cannot be more than 2.28\% longer than the corresponding standard CI for \emph{any} realization of the data and the expected excess length of our one-sided CIs cannot be more than 3\% longer than that of the corresponding standard CI.

A leading example of where our proposed CIs should prove useful is in the context of factorial (or ``cross-cutting'') designs in field experiments. Take, for example, the $2 \times 2$ factorial design where two treatments are administered independently such that there are three treatment arms, the two ``main'' treatments (separately) and the combination of the two, and a control arm. The corresponding treatment effects can be consistently estimated by OLS using the ``long regression'', i.e., the regression of the outcome variable on a constant and three dummy variables, one for each treatment arm.\footnote{Recently, \cite*{Muralidharan19} have highlighted the importance of using the long regression, as opposed to the ``short'' regression, i.e., the regression of the outcome variable on a constant and a dummy for the treatment of interest, to avoid omitted variable biases and accompanying size distortions.} In many cases, researchers have prior knowledge about the signs of the main treatment effects.  For example, ethics boards for research grants are unlikely to fund experiments unless they are very likely to entail non-negative average treatment effects.  Moreover, experimenters often conduct pilot studies prior to conducting full scale experiments in part to confirm their prior beliefs about the direction of average treatment effects.\footnote{For example, possible negative effects of a treatment may be closely monitored during the pilot phase and, if realized, even lead to early termination of the experiment. Note also that imposing the absence of any negative effects on \textit{individual} participants (that could be associated with the treatment) is stronger than necessary, because our CIs only require sign restrictions on \textit{average} (treatment) effects in this context.} Such prior knowledge can then be used to obtain sign restrictions on the corresponding regression coefficients. In this context, our proposed CIs will be ``short'' if the \textit{estimated} effects of the main treatments are small and/or have the ``wrong'' sign, which is likely to occur if the unknown \textit{population} effects are small or zero. To illustrate the potential usefulness of our CIs in the context of factorial designs, we revisit \cite*{blattman2017} who study the effect of ``therapy'' and ``cash'' on violent and criminal behavior in Liberia using a $2 \times 2$ factorial design. Indeed, for some of the treatment effects under study, we find our proposed CIs to be up to 36\% shorter than the corresponding standard CIs.

\subsection{Relationship with the Literature}

Several results in the statistics and econometrics literatures provide bounds on the ability for CIs to simultaneously maintain uniformly correct coverage over a class of data-generating processes (DGPs) while adapting to a given subclass.  Here we develop CIs with this very goal in mind:~our CIs maintain correct coverage for the parameter of interest uniformly across the parameter space for the nuisance parameters while becoming shorter when these nuisance parameters are equal to zero.  Although most of this literature is devoted to nonparametric methods (e.g.,~\citealp{Low97}; \citealp{CL04}), the recent work of \cite{AK18} has produced similar implications for parametric models like those in the asymptotic versions of the problems we study.  Indeed, \cite{AKK20} provide bounds on the ability to shorten CIs while maintaining correct coverage for regression coefficients at points for which potential control coefficients are zero.  However, all of the aforementioned results rely upon an assumption of symmetry about zero for the underlying parameter space (among others).  Because we are interested in problems with sign-restricted nuisance parameters, the underlying parameter space is asymmetric and these results do not apply, allowing for us to achieve the goal of constructing CIs that become significantly shorter at empirically-relevant parameter values.

Depending on the application, it may, of course, be possible that a researcher has prior knowledge on the magnitude of the control variables' coefficients rather than their sign. In this case, the recent work by \cite{AKK20} can be employed \citep[see also][]{LM21}. Indeed, \cite*{Muralidharan19} study and suggest (among others) the CI proposed by \cite{AKK20} as a means to improve over standard CIs in the context of factorial designs. In particular, they argue that researchers may, depending on the application, be willing to assume prior knowledge of the maximum (absolute) value of an ``interaction effect'', e.g., the effect of providing two treatments jointly minus the sum of the two main treatment effects. Here, we provide complementary results to be applied in settings for which it is natural for researchers to know the direction, rather than the magnitude, of control variables' coefficients.

%This allows to improve on the standard (two-sided) CI for the two main treatment effects. 

This is certainly not the first paper to produce CIs that adapt to subclasses of DGPs while retaining uniform control of coverage probability.  Several authors have provided such adaptive CIs for various smoothness classes and shape constraints in the nonparametric literature.  See, e.g.,~\cite{CL04}, \cite{CLX13}, \cite{Arm15}, \cite{KK20} and \cite{KK20b}.  Given our focus on finite-dimensional models, we are not concerned with the rate of convergence adaptation in this literature but rather finite-sample length adaptation for CIs.  Nevertheless, our CIs share some similarities with some of the CIs in this literature.  Like the ones we propose, the CIs of \cite{CL04}, \cite{KK20} and \cite{KK20b} are obtained by intersecting CIs that are optimal under different subclasses of DGPs.    Within this literature, \cite{KK20} and \cite{KK20b} are probably the closest studies to ours as they focus on nonparametric regression models with coordinate-wise monotone regression functions.  In addition, both \cite{KK20} and \cite{KK20b} provide a means of shortening adaptive nonparametric CIs relative to simple Bonferroni corrections in a similar spirit to our CI endpoints computed from response surface regression output.  However, in contrast to the existing literature on adaptive CIs for nonparametric models, we prove our CIs are uniformly asymptotically valid without assuming Gaussian disturbances or fixed regressors.

Finally, our work is related to the literature on uniform inference when nuisance parameters may be at or near a boundary, e.g.,~\cite{AG09}, \cite{McC17} and \cite{Ket18}.  While CIs with uniform asymptotic validity could in principle be computed by inverting the tests in this literature, this is often computationally prohibitive, especially when the nuisance parameter exceeds one or two dimensions.  Similarly, inverting weighted average power maximizing tests such as those of \cite{MM13} or \cite{EMW15} is computationally intractable for most realistic applications.  In contrast, our CIs are direct and trivial to compute since they do not rely on test inversion.  Moreover, our CIs are designed to have length properties that are desirable from a practical perspective without requiring the user to specify weights or tuning parameters to optimize over.

\subsection{Outline of Paper}

The remainder of this paper is organized as follows.  Section \ref{NMLSP} imparts the basic intuition of our CI constructions in a stylized asymptotic version of the inference problem we consider before providing computationally trivial algorithms for constructing the one- and two-sided CIs we propose in the general asymptotic setting.  Section \ref{FS} then shows how our CIs are constructed in practical finite-sample applications and provides theoretical results establishing their uniform asymptotic validity across a wide variety of applications.  In Section \ref{EA}, we illustrate the usefulness of our CIs in an empirical application of inference on treatment effects in a factorial design field experiment while Section \ref{Sims} examines their finite-sample properties in a simulation study calibrated to the empirical application.  Appendix \ref{sec:proofs} provides the mathematical proofs of our theoretical results and Appendix \ref{sec:reg_space} specifies a parameter space for the standard linear regression model that satisfies the requirements for some of our theoretical results.  Appendix \ref{AT} contains additional tables referenced in the text while the online supplemental appendix provides details on the numerical computations underlying some of the results in this paper.

Throughout this paper, we use the following notational conventions. For any two column vectors $a$ and $b$, we sometimes write $(a, b)$ instead of $(a', b')'$ and let $a \geq b$ denote the element-by-element inequality. Let $\mathbb{R}_{+} = [0,\infty)$, $\mathbb{R}_{+,\infty}=\mathbb{R}_{+}\cup\{\infty\}$, $\mathbb{R}_{\infty}=\mathbb{R}\cup\{\infty\}\cup\{-\infty\}$ and $z_{\xi}$ denote the $\xi^{th}$ quantile of the standard normal distribution.  For a square matrix $A$, $\diag(A)$ denotes the diagonal matrix with the same diagonal entries as $A$ and $\lambda_{\min}(A)$ and $\lambda_{\max}(A)$ denote its smallest and largest eigenvalues.

\section{Normal Means Large Sample Problem} \label{NMLSP}

LeCam's Limits of Experiments Theory provides that inference on the parameter of a well-behaved model is equivalent to inference on the mean of a Gaussian random vector with known variance matrix in large samples.  This powerful result incorporates regression models, instrumental variables models, maximum likelihood models and models estimated by the generalized method of moments.\footnote{This result holds under assumptions ensuring the model is well-behaved, amounting to the existence of an asymptotically normally distributed estimator for the (finite-dimensional) parameter vector in our context. For example, in the context of instrumental variables models or models estimated by the generalized method of moments, this does not allow for weak instruments or other forms of weak identification.  As alluded to in the Introduction, one can use the results of \cite{Ket18} to obtain this limit experiment even for models that may not be defined outside the parameter space, such as the random coefficients logit \citep{BLP:95} and (G)ARCH models for which variance parameters must be non-negative.  Indeed, such models provide other natural applications for the CIs we introduce in this paper.} See Chapter 9 of \cite{vdV98} and Chapter 13 of \cite{LR05} for textbook treatments of this theory. %\footnote{\cite{Ket18} also provides a means of obtaining this limit experiment even when the true parameter may be at or near the boundary of its parameter space via a quadratic expansion of the underlying objective function used for estimation.  Indeed, models for which parameters are necessarily bounded, such as coefficients in (G)ARCH models or variance parameters in models for consumer demand, provide natural applications for the inference approach we take in this paper.}  
Since the variance matrix is known in this setting, each element of the Gaussian random vector can be scale-normalized so that the large sample inference problem reduces to inference on the mean vector $h$ from a single observation $Y\overset{d}\sim \mathcal{N}(h,\Omega)$, where $\Omega$ is a known correlation matrix.

It is often the case in econometric applications that the researcher is interested in constructing a CI for a scalar parameter of interest in the presence of nuisance parameters.  In addition, the researcher often has knowledge about the sign of the nuisance parameters.  For example, when performing inference on a single coefficient in the linear regression model when ``control'' variables may be included in the regression to mitigate potential omitted variable bias, the researcher often knows the direction of the partial effects of some of the controls from economic theory or logical reasoning.  In the large sample problem, this corresponds to conducting inference on a scalar $\beta$ from a single observation
\begin{equation}
\left(\begin{array}{c}
Y_{\beta} \\
Y_{\delta}
\end{array}\right)\sim \mathcal{N}
\left(\left(\begin{array}{c}
\beta \\
\delta
\end{array}\right), 
\Omega\right), \label{Limit Exp}
\end{equation}
where $\Omega$ is a known positive-definite correlation matrix and $\delta$ is a finite-dimensional nuisance parameter whose elements are known to be greater than or equal to zero.\footnote{The restriction $\delta\geq 0$ is without loss of generality because parameters without sign restrictions may be dropped from the analysis in the limiting problem and limiting Gaussian random variables corresponding to parameters restricted to be greater/less than or equal to a known number may be linearly transformed to conform to \eqref{Limit Exp}.}

In many contexts, it is natural for the researcher to desire a CI with the following properties:~(i) correct coverage $1-\alpha$ (coverage of at least $1-\alpha$) across the entire $\delta\geq 0$ parameter space, (ii) good length properties across the entire $\delta\geq 0$ parameter space and (iii) shortness when $\delta$ is equal or close to zero.  For example, if it is not obvious whether a regressor should enter as a control variable or not, it is sensible to desire an especially short CI when the unknown population regression coefficient is equal to or near zero (reflecting the researcher's uncertainty about whether it is an important variable) while maintaining correct coverage and decent length no matter the coefficient's magnitude.  In this section, we provide CI constructions for the large sample problem with this very goal in mind.  We begin by describing the intuition for the CIs in the simplest version of the problem and subsequently provide general formulations for both one- and two-sided CIs.

\subsection{Basic Intuition}

To communicate the basic intuition for our CIs, we specialize the large sample problem \eqref{Limit Exp} to the case for which $\delta$ is one-dimensional and the correlation between $Y_\beta$ and $Y_{\delta}$ is positive:
\begin{equation*}
\left(\begin{array}{c}
Y_{\beta} \\
Y_{\delta}
\end{array}\right)\sim \mathcal{N}
\left(\left(\begin{array}{c}
\beta \\
\delta
\end{array}\right), 
\left(\begin{array}{cc}
1 & \rho \\
\rho & 1
\end{array}\right)\right), 
\end{equation*}
where $\rho> 0$, $\beta$ is unrestricted, and $\delta \geq 0$.  Consider the formation of an upper one-sided CI for $\beta$ with the goal of satisfying properties (i)--(iii) above.  To illustrate the tension between properties (ii) and (iii), note that the standard CI that ignores the information in $Y_{\delta}$, i.e., 
$$CI_u(Y_{\beta})=[Y_{\beta}-z_{1-\alpha},\infty),$$ 
satisfies (ii) but not (iii) since its expected excess length is always simply equal to $z_{1-\alpha}$.\footnote{Expected excess length of an upper one-sided CI for $\beta$ is defined as $E[\beta - \text{lb}]$, where lb denotes the lower bound of the CI.}  On a more technical level, we show in Proposition \ref{prop:one-sided optimality}(i) below that this CI achieves the minimal maximum excess length quantile for all quantiles larger than $\alpha$ across the $\delta\geq 0$ parameter space.  That is, the standard CI is minimax for the problem we are interested in.  On the other hand,  Proposition \ref{prop:one-sided optimality}(ii) below shows that the CI that is excess length-optimal for all excess length quantiles larger than $\alpha$ when $\delta$ is known to equal zero is equal to~
\[\widetilde{CI}_u(Y_{\beta},\rho Y_{\delta})=\left[Y_\beta-\rho Y_\delta-\sqrt{1-\rho^2}z_{1-\alpha},\infty\right).\]
This CI satisfies (iii) but not (ii) since its expected excess length is equal to $\rho \delta +\sqrt{1-\rho^2}z_{1-\alpha}$, which diverges as $\delta \to \infty$.\footnote{Both CIs satisfy (i) in this context since we have assumed $\rho> 0$, see equation \eqref{adaptive cov}.}

In order to attain property (iii) but not at the expense of property (ii), we propose CIs with length performance designed to adapt  to the data.  Consider intersecting the two CIs $CI_u(Y_{\beta})$ and $\widetilde{CI}_u(Y_{\beta},\rho Y_{\delta})$ to simultaneously retain property (ii) of the former and property (iii) of the latter:
\begin{align*}
\widehat{CI}_u\left(Y_{\beta},\rho Y_{\delta};z_{1-\alpha+\gamma},\sqrt{1-\rho^2}z_{1-\gamma}\right)&=[Y_{\beta}-z_{1-\alpha+\gamma},\infty)\cap \left[Y_{\beta}-\rho Y_{\delta}-\sqrt{1-\rho^2}z_{1-\gamma},\infty\right) \\
&=\left[Y_{\beta}-\min\left\{z_{1-\alpha+\gamma},\rho Y_{\delta}+\sqrt{1-\rho^2}z_{1-\gamma}\right\},\infty\right)
\end{align*}
for some $\gamma\in(0,\alpha)$.  Note that $\widehat{CI}_u\left(Y_{\beta},\rho Y_{\delta};z_{1-\alpha+\gamma},\sqrt{1-\rho^2}z_{1-\gamma}\right)$ maintains correct coverage probability over the parameter space:
\begin{align}
& P\left(\beta\in\widehat{CI}_u\left(Y_{\beta},\rho Y_{\delta};z_{1-\alpha+\gamma},\sqrt{1-\rho^2}z_{1-\gamma}\right) \right)  \notag \\ =&P\left(\beta\geq Y_{\beta}-\min\left\{z_{1-\alpha+\gamma},\rho Y_{\delta}+\sqrt{1-\rho^2}z_{1-\gamma}\right\}\right) \notag \\
=&1-P\left(\beta< Y_{\beta}-\min\left\{z_{1-\alpha+\gamma},\rho Y_{\delta}+\sqrt{1-\rho^2}z_{1-\gamma}\right\}\right) \notag \\
\geq & 1-P(\beta<Y_{\beta}-z_{1-\alpha+\gamma})-P\left(\beta<Y_{\beta}-\rho Y_{\delta}-\sqrt{1-\rho^2}z_{1-\gamma}\right) \notag \\
\geq & 1-(\alpha-\gamma)-\gamma=1-\alpha \label{adaptive cov}
\end{align}
for all $(\beta,\delta)\in\mathbb{R}\times \mathbb{R}_+$, where the first inequality follows from the Bonferroni inequality and the second inequality uses the fact that
\begin{align*}
P\left(\beta<Y_{\beta}-\rho Y_{\delta}-\sqrt{1-\rho^2}z_{1-\gamma}\right)&=P\left(\beta<\beta-\rho\delta+\tilde{Z}_\rho-\sqrt{1-\rho^2}z_{1-\gamma}\right) \\
&=P\left(\tilde{Z}_\rho<-\rho\delta-\sqrt{1-\rho^2}z_{1-\gamma}\right) \leq P\left(\tilde{Z}_\rho<-\sqrt{1-\rho^2}z_{1-\gamma}\right)=\gamma
\end{align*}
with
$$\tilde{Z}_\rho=Y_{\beta}-\rho Y_{\delta}-(\beta-\rho\delta)\overset{d}\sim \mathcal{N}(0,{1-\rho^2}),$$
where the inequality uses the fact that $\rho\delta \geq0$.

Since $\widehat{CI}_u\left(Y_{\beta},\rho Y_{\delta};z_{1-\alpha+\gamma},\sqrt{1-\rho^2}z_{1-\gamma}\right)$ makes use of a multiplicity correction based upon the Bonferroni bound, for similar reasons used to motivate the adjusted Bonferroni critical values of \cite{McC17}, it is possible to decrease the excess length of $\widehat{CI}_u\left(Y_{\beta},\rho Y_{\delta};z_{1-\alpha+\gamma},\sqrt{1-\rho^2}z_{1-\gamma}\right)$ while retaining uniform control of coverage probability.  In particular, fix $\gamma\in(0,\alpha)$ and find the constant $c^*\in [0,\sqrt{1-\rho^2}z_{1-\gamma}]$ that solves
\begin{equation}
P\left(Z_1>\min\left\{z_{1-\alpha+\gamma},\rho Z_2+c\right\}\right)=\alpha \label{MC-prob}
\end{equation}
in $c$, where 
\begin{equation*}
\left(\begin{array}{c}
Z_1 \\
Z_2
\end{array}\right)\sim \mathcal{N}
\left(\left(\begin{array}{c}
0 \\
0
\end{array}\right), 
\left(\begin{array}{cc}
1 & \rho \\
\rho & 1
\end{array}\right)\right).
\end{equation*}
The CI $\widehat{CI}_u(Y_{\beta},\rho Y_{\delta};z_{1-\alpha+\gamma},c^*)$ is contained in $\widehat{CI}_u\left(Y_{\beta},\rho Y_{\delta};z_{1-\alpha+\gamma},\sqrt{1-\rho^2}z_{1-\gamma}\right)$ and maintains correct coverage probability over the parameter space:
\begin{align*}
P\left(\beta\in\widehat{CI}_u(Y_{\beta},\rho Y_{\delta};z_{1-\alpha+\gamma},c^*)\right)&=P\left(\beta\geq Y_{\beta}-\min\left\{z_{1-\alpha+\gamma},\rho Y_{\delta}+c^*\right\}\right) \\
&=P\left(Z_1\leq \min\left\{z_{1-\alpha+\gamma},\rho\delta+\rho Z_2+c^*\right\}\right) \\
&\geq P\left(Z_1\leq \min\left\{z_{1-\alpha+\gamma},\rho Z_2+c^*\right\}\right) \\
&=1-P\left(Z_1> \min\left\{z_{1-\alpha+\gamma},\rho Z_2+c^*\right\}\right)=1-\alpha
\end{align*}
for all $(\beta,\delta)\in\mathbb{R}\times\mathbb{R}_+$, where the second equality follows from the fact that $(Y_{\beta},Y_{\delta})\overset{d}\sim(\beta,\delta)+(Z_1,Z_2)$ and the inequality again uses the fact that $\rho\delta\geq 0$. The problem \eqref{MC-prob} is computationally straightforward and can, for example, be solved by means of Monte Carlo simulations.  A heuristic approach to choosing the tuning parameter $\gamma$ makes use of similar reasoning to that used to compute the adjusted Bonferroni critical values in \cite{McC17}:~a ``small'' $\gamma$ such as $\gamma=\alpha/10$ yields only slightly higher expected excess length when $\delta$ is ``large'' but significantly lower expected excess length when $\delta$ is ``small''.  

Finally, it is interesting to note that $\widehat{CI}_u(Y_{\beta},\rho Y_{\delta};z_{1-\alpha+\gamma},c^*)$ can be viewed as a CI that results from a model selection procedure \emph{designed for inference}.  In the context of the regression model example, we can view the model selection procedure as follows:
\begin{enumerate}
\item If $Y_{\delta}>(z_{1-\alpha+\gamma}-c^*)/\rho$, construct the CI for $\beta$ from the ``full'' regression using the critical value $z_{1-\alpha+\gamma}$.
\item If $Y_{\delta}\leq(z_{1-\alpha+\gamma}-c^*)/\rho$, construct the CI for $\beta$ from the ``short'' regression using the critical value $c^*$.
\end{enumerate}
The model selection pretest rule $Y_{\delta}>(z_{1-\alpha+\gamma}-c^*)/\rho$ is analogous to using a $t$-test as a pretest but with a nonstandard critical value that incorporates both the two-step nature of the inference procedure as well as the dependence between $Y_{\beta}$ and $Y_{\delta}$.  Note that as $\rho\rightarrow 1$, this nonstandard pretest approaches a standard $t$-test pretest.  Unlike standard model selection procedures, this procedure is designed for inference in the sense that (i) it uniformly controls coverage probability by directly incorporating the model selection uncertainty in its construction and (ii) it is designed to yield low excess length rather than a different notion of risk (such as mean-squared error).\footnote{Though some recent post-selection inference procedures (e.g.,~\citealp{BCH14,McC17}) uniformly control coverage probability/size, the selection procedures used  in their construction are not designed to yield CIs with desirable length properties.}

\subsection{One-Sided Confidence Intervals}

In this section, we focus on forming analogous adaptive one-sided CIs but now allowing $\delta\geq 0$ to be multidimensional so that the large sample problem corresponds to \eqref{Limit Exp}, where  
\begin{equation}
\Omega=\left(\begin{array}{cc}
1 & \Omega_{\beta\delta} \\
\Omega_{\delta\beta} & \Omega_{\delta\delta}
\end{array}\right). \label{block correlation mtx}
\end{equation}
Without loss of generality, we focus on upper one-sided CIs for $\beta$ since lower one-sided CIs may be attained analogously upon multiplying $Y_\beta$ by negative one.
The optimal $(1-\alpha)$-level upper one-sided CI for $\beta$ when $\delta=0$ is equal to
\[
	\left[Y_\beta - \Omega_{\beta\delta} \Omega_{\delta\delta}^{-1} Y_\delta - z_{1-\alpha}\sqrt{1-\Omega_{\beta\delta} \Omega_{\delta\delta}^{-1}\Omega_{\delta\beta} },\infty \right).
\]
The CI that intersects this CI with the standard CI for $\beta$ that ignores the information in $Y_\delta$ will not maintain coverage in general.  More specifically, the argument in \eqref{adaptive cov} for showing correct coverage only generalizes when all of the elements of $\Omega_{\beta\delta} \Omega_{\delta\delta}^{-1}$ are non-negative.  In the case that this condition does not hold, we can still find adaptive CIs with potential length improvements by ``dropping'' elements of $Y_{\delta}$ from consideration.  The following algorithm is designed to do just that while maintaining particularly low excess length when $\delta$ is equal or close to zero.  

For $\gamma\in(0,\alpha)$, consider the function $c:[0,1)\rightarrow [0,z_{1-\gamma}]$ such that 
\begin{equation}
P(Z_1>\min\{z_{1-\alpha+\gamma},\tilde Z_2+c(\omega)\})=\alpha, \label{c-function def}
\end{equation}
where
\begin{equation*}
\left(\begin{array}{c}
Z_1 \\
\tilde Z_2
\end{array}\right)\sim \mathcal{N}
\left(0, 
\left(\begin{array}{cc}
1 & \omega \\
\omega & \omega
\end{array}\right)\right).
\end{equation*}
The following result ensures that $c:[0,1)\rightarrow [0,z_{1-\gamma}]$ is well-defined and continuous.

\begin{proposition} \label{prop:c existence}
For $\alpha\in(0,1/2)$, $c:[0,1)\rightarrow[0,z_{1-\gamma}]$ as defined in \eqref{c-function def} exists and is continuous.
\end{proposition}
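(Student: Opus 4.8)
The plan is to read \eqref{c-function def} as the implicit equation $g(c,\omega)=\alpha$, where for $c\ge 0$ and $\omega\in[0,1)$ I set
\[
g(c,\omega):=P\big(Z_1>\min\{z_{1-\alpha+\gamma},\,\tilde Z_2+c\}\big)
\]
with $(Z_1,\tilde Z_2)$ as in \eqref{c-function def}, and to show (i) that for each $\omega\in[0,1)$ this equation has a \emph{unique} solution $c(\omega)$ lying in $[0,z_{1-\gamma}]$, and (ii) that $\omega\mapsto c(\omega)$ is continuous. Throughout I would use that $Z_1\sim\mathcal N(0,1)$ and $Z_1-\tilde Z_2\sim\mathcal N(0,1-\omega)$ are both non-degenerate, since $\omega<1$. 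First I would observe that $c\mapsto g(c,\omega)$ is non-increasing (because $c\mapsto\min\{z_{1-\alpha+\gamma},\tilde Z_2+c\}$ is) and continuous: for fixed $(Z_1,\tilde Z_2)$, the indicator $\mathbf{1}\{Z_1>\min\{z_{1-\alpha+\gamma},\tilde Z_2+c\}\}$ is discontinuous in $c$ only on the event $\{Z_1=z_{1-\alpha+\gamma}\}\cup\{Z_1=\tilde Z_2+c\}$, which has probability zero, so bounded convergence applies.

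Existence of $c(\omega)\in[0,z_{1-\gamma}]$ then follows from the intermediate value theorem once I bracket the endpoint values. At $c=0$, $\{Z_1>\min\{z_{1-\alpha+\gamma},\tilde Z_2\}\}\supseteq\{Z_1>\tilde Z_2\}$, so $g(0,\omega)\ge P(Z_1-\tilde Z_2>0)=\tfrac12>\alpha$ by symmetry of $Z_1-\tilde Z_2$ about zero. At $c=z_{1-\gamma}$, the set identity $\{Z_1>\min\{z_{1-\alpha+\gamma},\tilde Z_2+z_{1-\gamma}\}\}=\{Z_1>z_{1-\alpha+\gamma}\}\cup\{Z_1>\tilde Z_2+z_{1-\gamma}\}$ together with the Bonferroni inequality give
\[
g(z_{1-\gamma},\omega)\le P(Z_1>z_{1-\alpha+\gamma})+P(Z_1-\tilde Z_2>z_{1-\gamma})=(\alpha-\gamma)+\Big(1-\Phi\big(z_{1-\gamma}/\sqrt{1-\omega}\big)\Big)\le\alpha,
\]
where $\Phi$ denotes the standard normal cdf and the final inequality uses $\sqrt{1-\omega}\le 1$ and $z_{1-\gamma}\ge 0$ — so $\Phi(z_{1-\gamma}/\sqrt{1-\omega})\ge\Phi(z_{1-\gamma})=1-\gamma$ — both of which hold because $0<\gamma<\alpha<1/2$. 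For uniqueness I would argue that $g(\cdot,\omega)$ is strictly decreasing on $\{c:g(c,\omega)>\alpha-\gamma\}$, an interval (since $g$ is non-increasing) that contains every root: when $\omega\in(0,1)$, $(Z_1,\tilde Z_2)$ has a strictly positive joint density $\phi_\omega$, and splitting on the sign of $\tilde Z_2+c-z_{1-\alpha+\gamma}$ and differentiating under the integral sign (the boundary terms cancel) yields $\partial g(c,\omega)/\partial c=-\int_{-\infty}^{z_{1-\alpha+\gamma}-c}\phi_\omega(z_2+c,z_2)\,dz_2<0$; when $\omega=0$, $\tilde Z_2=0$ almost surely and $g(c,0)=1-\Phi(\min\{z_{1-\alpha+\gamma},c\})$ is strictly decreasing on $[0,z_{1-\alpha+\gamma}]$, with unique root $c(0)=z_{1-\alpha}\in(0,z_{1-\gamma})$. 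In every case $c$ is therefore well defined.

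For continuity of $c(\cdot)$ I would combine joint continuity of $g$ with a compactness/subsequence argument. Fixing $(U,V)\sim\mathcal N(0,I_2)$, one has $(Z_1,\tilde Z_2)\overset{d}{=}\big(U,\;\omega U+\sqrt{\omega(1-\omega)}\,V\big)$, hence $g(c,\omega)=E\big[\mathbf{1}\{U>\min\{z_{1-\alpha+\gamma},\,\omega U+\sqrt{\omega(1-\omega)}\,V+c\}\}\big]$. If $(c_n,\omega_n)\to(c_0,\omega_0)\in[0,z_{1-\gamma}]\times[0,1)$, the integrand converges at every $(u,v)$ outside $\{u=z_{1-\alpha+\gamma}\}\cup\{(1-\omega_0)u-\sqrt{\omega_0(1-\omega_0)}\,v=c_0\}$, a union of two lines and hence Lebesgue-null — here $1-\omega_0>0$ is exactly what prevents the second set from degenerating — so bounded convergence gives $g(c_n,\omega_n)\to g(c_0,\omega_0)$. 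Now fix $\omega_n\to\omega_0$. Since $\{c(\omega_n)\}\subseteq[0,z_{1-\gamma}]$ is bounded, any subsequence has a further subsequence with $c(\omega_{n_k})\to c^\ast$, along which $\alpha=g(c(\omega_{n_k}),\omega_{n_k})\to g(c^\ast,\omega_0)$; thus $g(c^\ast,\omega_0)=\alpha$, which forces $c^\ast=c(\omega_0)$ by uniqueness. As every subsequence of $\{c(\omega_n)\}$ has a sub-subsequence converging to $c(\omega_0)$, it follows that $c(\omega_n)\to c(\omega_0)$.

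The step I expect to be the main obstacle is establishing the joint continuity of $g$: one has to keep track of how the covariance matrix of $(Z_1,\tilde Z_2)$ degenerates as $\omega\downarrow 0$ and correctly identify the Lebesgue-null set on which the relevant indicator fails to converge. The endpoint estimates, the monotonicity/uniqueness argument, and the subsequence wrap-up are comparatively routine.
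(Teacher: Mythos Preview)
Your proposal is correct and follows essentially the same architecture as the paper's proof: strict monotonicity of the defining probability in $c$, bracketing at $c=0$ and $c=z_{1-\gamma}$ (you even state the Bonferroni step at $c=z_{1-\gamma}$ more carefully than the paper), and then a continuity argument for $\omega\mapsto c(\omega)$. The only notable difference is that the paper invokes the implicit function theorem for $\omega\in(0,1)$ and checks $c(0)=\lim_{\omega\downarrow 0}c(\omega)$ separately, whereas you obtain continuity on all of $[0,1)$ in one stroke via the $(U,V)$--coupling, joint continuity of $g$, and a compactness/uniqueness subsequence argument; both routes are standard and equally valid here.
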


Note that $c(0)=z_{1-\alpha}$.  Let $Y_{\delta}^{(s)}$ denote an arbitrary subvector of $Y_{\delta}$, including the empty one, with
\begin{equation*}
\left(\begin{array}{c}
Y_{\beta} \\
Y_{\delta}^{(s)}
\end{array}\right)\sim \mathcal{N}
\left(\left(\begin{array}{c}
\beta \\
\delta^{(s)}
\end{array}\right), 
\left(\begin{array}{cc}
1 & \Omega_{\beta\delta^{(s)}} \\
\Omega_{\delta^{(s)}\beta} & \Omega_{\delta^{(s)}\delta^{(s)}}
\end{array}\right)\right), 
\end{equation*} 
where by convention $\delta^{(s)}$, $\Omega_{\beta\delta^{(s)}}$ and $\Omega_{\delta^{(s)}\delta^{(s)}}$ (as well as $\Omega_{\beta\delta^{(s)}} \Omega_{\delta^{(s)}\delta^{(s)}}^{-1}$ and $\Omega_{\beta\delta^{(s)}} \Omega_{\delta^{(s)}\delta^{(s)}}^{-1}\Omega_{\delta^{(s)}\beta}$) are set equal to zero when $Y_{\delta}^{(s)} = \emptyset$.

\begin{one sided}
\noindent Amongst all subvectors of $Y_{\delta}$ (including the empty one) such that the elements of $\Omega_{\beta\delta^{(s)}} \Omega_{\delta^{(s)}\delta^{(s)}}^{-1}$ are non-negative, find the subvector $Y_{\delta}^{(s^*)}$ such that the expected excess length of 
\[
 \widehat{CI}_u(Y_{\beta},\Omega_{\beta\delta^{(s)}}\Omega_{\delta^{(s)}\delta^{(s)}}^{-1}Y_{\delta}^{(s)};z_{1-\alpha+\gamma},c\left(\Omega_{\beta\delta^{(s)}} \Omega_{\delta^{(s)}\delta^{(s)}}^{-1}\Omega_{\delta^{(s)}\beta}\right))
\]
at $\delta = 0$ is minimized at $s=s^*$. Then, construct
\begin{equation}
\widehat{CI}_u(Y_{\beta},\Omega_{\beta\delta^{(s^*)}}\Omega_{\delta^{(s^*)}\delta^{(s^*)}}^{-1}Y_{\delta}^{(s^*)};z_{1-\alpha+\gamma},c\left(\Omega_{\beta\delta^{(s^*)}} \Omega_{\delta^{(s^*)}\delta^{(s^*)}}^{-1}\Omega_{\delta^{(s^*)}\beta}\right)). \tag*{$\blacksquare$}
\end{equation}
\end{one sided}

The goal of this algorithm is to generate short CIs when the user is agnostic about which elements of $\delta$ are more likely to be (close to) zero.  Figure \ref{figure one-sided excess length} shows the expected excess length of $\widehat{CI}_u(Z_1,\tilde Z_2,z_{1-\alpha+\gamma},c(\omega))$ as a function of $\omega$, for $\alpha \in \{0.1, 0.05, 0.01\}$ and our recommended value of $\gamma = \alpha/10$.\footnote{Expected excess length is obtained numerically on the following grid of values: $\omega\in\{0,0.001,0.002,\dots,0.999\}$. See the online supplemental appendix for details.} This expected excess length is strictly decreasing in $\omega$ (at least for the considered choices of $\alpha$ and $\gamma$).  Since it does not depend upon $\beta$, this implies that the expected excess length of $\widehat{CI}_u(Y_{\beta},\Omega_{\beta\delta^{(s)}}\Omega_{\delta^{(s)}\delta^{(s)}}^{-1}Y_{\delta}^{(s)};z_{1-\alpha+\gamma},c\left(\Omega_{\beta\delta^{(s)}} \Omega_{\delta^{(s)}\delta^{(s)}}^{-1}\Omega_{\delta^{(s)}\beta}\right))$ evaluated at $\delta=0$ is smallest for the subvector $Y_\delta^{(s)}$ that maximizes $\Omega_{\beta\delta^{(s)}} \Omega_{\delta^{(s)}\delta^{(s)}}^{-1}\Omega_{\delta^{(s)}\beta}$, leading us to the following simplified algorithm.

\begin{figure}[h] 
  \begin{center}
    \includegraphics[width=70mm]{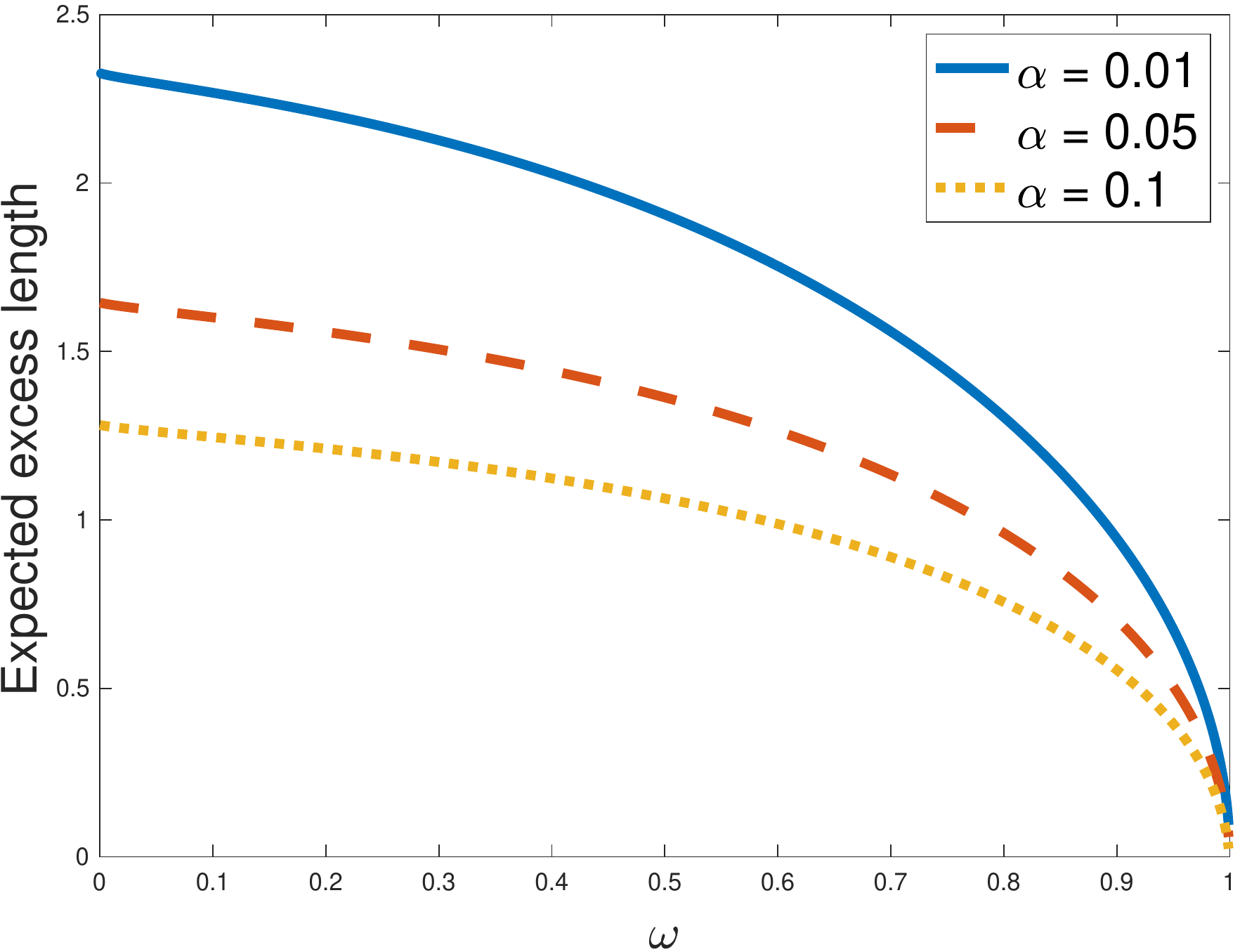}
     \caption{Expected excess length of $\widehat{CI}_u(Z_1,\tilde Z_2,z_{1-\alpha+\gamma},c(\omega))$ as a function of $\omega$, for $\alpha \in \{0.01,0.05,0.1\}$ and $\gamma = \alpha/10$.}
         \label{figure one-sided excess length}
    \end{center}
\end{figure}

\begin{one sided star}
\noindent Amongst all subvectors of $Y_{\delta}$ such that the elements of $\Omega_{\beta\delta^{(s)}} \Omega_{\delta^{(s)}\delta^{(s)}}^{-1}$ are non-negative, find the subvector $Y_{\delta}^{(s^*)}$ such that $\Omega_{\beta\delta^{(s)}} \Omega_{\delta^{(s)}\delta^{(s)}}^{-1}\Omega_{\delta^{(s)}\beta}$ is maximized at $s=s^*$. Then, construct
\begin{equation}
 \widehat{CI}_u^*(Y_{\beta},Y_\delta,\Omega)\equiv \widehat{CI}_u(Y_{\beta},\Omega_{\beta\delta^{(s^*)}}\Omega_{\delta^{(s^*)}\delta^{(s^*)}}^{-1}Y_{\delta}^{(s^*)};z_{1-\alpha+\gamma},c\left(\Omega_{\beta\delta^{(s^*)}} \Omega_{\delta^{(s^*)}\delta^{(s^*)}}^{-1}\Omega_{\delta^{(s^*)}\beta}\right)). \tag*{$\blacksquare$}
\end{equation}
\end{one sided star}

It is worth noting that (i) $c(\omega)$ for $\omega\in(0,1)$ is very simple to compute via Monte Carlo simulation, while $c(0) = z_{1-\alpha}$, and (ii)  Algorithm One-Sided* only requires one to evaluate the function $c(\cdot)$ at the single point $\Omega_{\beta\delta^{(s^*)}} \Omega_{\delta^{(s^*)}\delta^{(s^*)}}^{-1}\Omega_{\delta^{(s^*)}\beta}$.  Therefore, the algorithm carries very low computational cost.  We also note that $c\left(\Omega_{\beta\delta^{(s^*)}} \Omega_{\delta^{(s^*)}\delta^{(s^*)}}^{-1}\Omega_{\delta^{(s^*)}\beta}\right)$ can always be replaced by $\sqrt{1-\Omega_{\beta\delta^{(s^*)}} \Omega_{\delta^{(s^*)}\delta^{(s^*)}}^{-1}\Omega_{\delta^{(s^*)}\beta}}z_{1-\gamma}$ in the algorithm to yield a CI with correct coverage but worse excess length (in analogy with the CI using the Bonferroni correction in the previous section).

In addition to the numerical justification for the selection of the subvector $Y_{\delta}^{(s^*)}$ to form $ \widehat{CI}_u^*(Y_{\beta},Y_\delta,\Omega)$, Algorithm One-Sided* is also justified on theoretical grounds as it entails the intersection of CIs that are optimal over two different subclasses of DGPs like those of e.g.,~\cite{CL04}, \cite{KK20} and \cite{KK20b}.  More specifically, the following proposition applies a general result of \cite{AK18} to the current inference setting to formalize what we mean by ``optimal'' here.

\begin{proposition} \label{prop:one-sided optimality}
For inference on $\beta$ in \eqref{Limit Exp}, the following statements hold for $\alpha\in (0,1)$: 

(i) among all upper one-sided CIs with coverage of at least $(1-\alpha)$ for all $\delta\geq 0$, the CI that minimizes all maximum excess length quantiles over the $\delta\geq 0$ parameter space at quantile levels greater than $\alpha$  is equal to
\[[Y_\beta-z_{1-\alpha},\infty),\]

(ii) among all upper one-sided CIs with coverage of at least $(1-\alpha)$ for all $\delta\geq 0$, the CI that minimizes all excess length quantiles at the point $\delta=0$ and quantile levels greater than $\alpha$  is equal to 
\[
	\left[Y_\beta - \Omega_{\beta\delta^{(s^*)}} \Omega_{\delta^{(s^*)}\delta^{(s^*)}}^{-1} Y_{\delta^{(s^*)}} - z_{1-\alpha}\sqrt{1-\Omega_{\beta\delta^{(s^*)}} \Omega_{\delta^{(s^*)}\delta^{(s^*)}}^{-1}\Omega_{\delta^{(s^*)}\beta} },\infty \right).
\]
\end{proposition}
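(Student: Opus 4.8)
\medskip
\noindent The plan is to derive both parts from the general characterization of optimal one-sided confidence intervals for a linear functional over a convex parameter space in a Gaussian model given in \cite{AK18}. In our setting $Y\sim\mathcal{N}(\theta,\Omega)$ with $\theta=(\beta,\delta)$ ranging over the closed convex set $\Theta$ of points with $\delta\ge 0$, $\Omega$ known and positive definite, and the target $\theta\mapsto\beta$ linear; these are exactly the ingredients under which that theory applies. Writing $\|x\|_{\Omega^{-1}}^2:=x'\Omega^{-1}x$, the mechanism behind it is the following two-point (Le Cam) bound: for any upper one-sided CI $[\hat L,\infty)$ with coverage at least $1-\alpha$ over $\Theta$, any $\theta_0=(\beta_0,\delta_0)\in\Theta$, and any competing value $\theta_1=(\beta_0-s,\delta_1)\in\Theta$ with $s>0$, coverage at $\theta_1$ forces $P_{\theta_1}(\hat L\le\beta_0-s)\ge 1-\alpha$, and the Gaussian likelihood-ratio inequality $P_{\theta_0}(A)\ge\Phi\big(\Phi^{-1}(P_{\theta_1}(A))-\|\theta_0-\theta_1\|_{\Omega^{-1}}\big)$ applied with $A=\{\hat L\le\beta_0-s\}$ gives $P_{\theta_0}(\hat L\le\beta_0-s)\ge\Phi\big(z_{1-\alpha}-\|\theta_0-\theta_1\|_{\Omega^{-1}}\big)$. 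Minimizing $\|\theta_0-\theta_1\|_{\Omega^{-1}}=\|(s,\delta_0-\delta_1)\|_{\Omega^{-1}}$ over $\delta_1\ge 0$ shows that the $q$-quantile of the excess length $\beta_0-\hat L$ under $\theta_0$ is at least $(z_{1-\alpha}+z_q)\,\varpi(\delta_0)$ for every $q>\alpha$, where $\varpi(\delta_0)$ is a single-sided modulus of continuity; the complementary part of \cite{AK18} is that this bound is attained, at all $q>\alpha$ simultaneously, by an affine CI. It therefore remains to evaluate $\varpi$ in the two configurations and to exhibit the attaining CIs.

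For part (i) the relevant worst case is over all $\delta_0\ge 0$, so one sends $\delta_0\to\infty$ componentwise; the constraint $\delta_1\ge 0$ then becomes slack and $\min_{w}\|(s,w)\|_{\Omega^{-1}}^2=s^2/\Omega_{\beta\beta}=s^2$ (the minimum of an $\Omega^{-1}$-quadratic form over the $\delta$-coordinates equals $s^2$ times the reciprocal of the $(\beta,\beta)$-entry of $\Omega$), so $\varpi\equiv 1$ there and every valid upper one-sided CI has maximal excess-length $q$-quantile at least $z_{1-\alpha}+z_q$ over $\delta\ge 0$. A one-line computation shows $[Y_\beta-z_{1-\alpha},\infty)$ has excess-length distribution $\mathcal{N}(z_{1-\alpha},1)$ at every $\delta$ and hence maximal $q$-quantile exactly $z_{1-\alpha}+z_q$, so it is the simultaneous minimizer.

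For part (ii) one fixes $\delta_0=0$, so $\varpi(0)^{-2}=\min_{\delta_1\ge 0}\|(1,-\delta_1)\|_{\Omega^{-1}}^2$, a projection onto the negative orthant in the $\Omega^{-1}$-metric which, after completing the square (using standard block-inverse identities relating the off-diagonal blocks of $\Omega^{-1}$ to $\Omega_{\delta\beta}$), becomes a non-negative least squares problem. The key lemma is that its value equals $1/\big(1-\Omega_{\beta\delta^{(s^*)}}\Omega_{\delta^{(s^*)}\delta^{(s^*)}}^{-1}\Omega_{\delta^{(s^*)}\beta}\big)$ with $s^*$ exactly the subvector selected in Algorithm One-Sided*; I would prove this from the Karush--Kuhn--Tucker conditions of the strictly convex program, showing that the set of coordinates at which the minimizer vanishes is characterized precisely by the admissibility requirement $\Omega_{\beta\delta^{(s)}}\Omega_{\delta^{(s)}\delta^{(s)}}^{-1}\ge 0$ together with a complementary-slackness inequality that picks out the admissible subvector maximizing $\Omega_{\beta\delta^{(s)}}\Omega_{\delta^{(s)}\delta^{(s)}}^{-1}\Omega_{\delta^{(s)}\beta}$. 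Granting the lemma, the lower bound reads $(z_{1-\alpha}+z_q)\sqrt{1-R^2_{s^*}}$ with $R^2_{s^*}:=\Omega_{\beta\delta^{(s^*)}}\Omega_{\delta^{(s^*)}\delta^{(s^*)}}^{-1}\Omega_{\delta^{(s^*)}\beta}$, and one verifies directly that the stated CI attains it: its lower-endpoint statistic $Y_\beta-\Omega_{\beta\delta^{(s^*)}}\Omega_{\delta^{(s^*)}\delta^{(s^*)}}^{-1}Y_{\delta^{(s^*)}}$ has $P_{\beta,\delta}$-mean $\beta-\Omega_{\beta\delta^{(s^*)}}\Omega_{\delta^{(s^*)}\delta^{(s^*)}}^{-1}\delta^{(s^*)}\le\beta$ for every $\delta\ge 0$, since the $s^*$-regression coefficients are non-negative (this gives coverage at least $1-\alpha$), and at $\delta=0$ its excess length is $\mathcal{N}\big(z_{1-\alpha}\sqrt{1-R^2_{s^*}},\,1-R^2_{s^*}\big)$, whose $q$-quantile equals the bound for all $q>\alpha$.

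The step I expect to be the main obstacle is this non-negative least squares lemma, i.e.\ identifying the active set of the cone projection with the complement of $s^*$ from Algorithm One-Sided*; everything else is either a direct appeal to \cite{AK18} or an elementary Gaussian computation. A secondary point needing care is checking that the regularity conditions of the \cite{AK18} theory hold in this (unbounded) parameter space and that its conclusion really is a single CI that is optimal simultaneously at all quantile levels greater than $\alpha$, which holds because in the Gaussian-shift model the optimal one-sided procedure is affine and its excess length forms a location family.
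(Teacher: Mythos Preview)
Your proposal is correct and follows essentially the same route as the paper. Both arguments rest on Theorem~3.1 of \cite{AK18}: the paper invokes it directly through the modulus-of-continuity formulation and formula~(23) there, while you spell out the underlying two-point Le~Cam bound, but these are the same mechanism. For part~(i) your device of sending $\delta_0\to\infty$ to slacken the sign constraint is precisely the paper's observation that one may translate $(\theta,\gamma)$ by a common large positive vector; for part~(ii) your cone-projection/NNLS formulation $\min_{\delta_1\ge 0}\|(1,-\delta_1)\|_{\Omega^{-1}}^2$ is the paper's modulus problem after normalization, and both identify the active set via KKT. The ``NNLS lemma'' you flag as the main obstacle is exactly what the paper establishes through its Lemma~\ref{submatrix correspondence} (a block-inverse identity relating the $\Omega^{-1}$-blocks appearing in the KKT solution to the $\Omega$-blocks defining $s^*$) together with a short complementary-slackness argument showing that the constraint $\Omega_{\delta^{(s)}\delta^{(s)}}^{-1}\Omega_{\delta^{(s)}\beta}\ge 0$ emerges from $\tilde\mu^{(s)}\ge 0$.
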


In particular, this proposition implies that for $\alpha < 1/2$ the CIs in (i) and (ii) above minimize maximum \emph{median} excess length across $\delta\geq 0$ and \emph{median} excess length at $\delta=0$, respectively. Since median and expected excess length coincide for CIs that are affine in the data ($Y$), we also have that the CIs in (i) and (ii) minimize maximum \emph{expected} excess length across $\delta\geq 0$ and \emph{expected} excess length at $\delta=0$ among all affine CIs, respectively, if $\alpha < 1/2$. Algorithm One-Sided* computes a CI that intersects two optimal CIs of these forms while making a non-conservative multiplicity correction that improves upon the conservative Bonferroni adjustement.

As can be seen from Figure \ref{figure one-sided excess length}, extreme values of $\Omega_{\beta\delta^{(s^*)}} \Omega_{\delta^{(s^*)}\delta^{(s^*)}}^{-1}\Omega_{\delta^{(s^*)}\beta}$ such as 0.999 can lead to expected excess lengths of our one-sided CI close to zero, entailing expected excess length reductions of nearly 100\% relative to the expected excess length of the standard CI.  At more empirically-relevant values of $\Omega_{\beta\delta^{(s^*)}} \Omega_{\delta^{(s^*)}\delta^{(s^*)}}^{-1}\Omega_{\delta^{(s^*)}\beta}$, such as say 0.7, Figure \ref{figure one-sided excess length} still implies expected excess length reductions of more than 30\% for $\alpha = 0.05$.  On the other hand, the expected excess length of our one-sided CI is bounded above by $z_{1-\alpha+\gamma}=1.695$ for $\alpha=0.05$, $\gamma=\alpha/10$ and \emph{any} value of $\Omega_{\beta\delta^{(s^*)}} \Omega_{\delta^{(s^*)}\delta^{(s^*)}}^{-1}\Omega_{\delta^{(s^*)}\beta}$.  This implies that the expected excess length of our recommended CI relative to the standard one-sided CI is bounded above by $z_{1-\alpha+\gamma}/z_{1-\alpha}=1.695/1.645\approx 1.03$ for $\alpha=0.05$.  That is the expected excess length increase of our recommended CI relative to the standard CI is bounded above by roughly 3\% for $\alpha=0.05$ and any value of $\Omega_{\beta\delta^{(s^*)}} \Omega_{\delta^{(s^*)}\delta^{(s^*)}}^{-1}\Omega_{\delta^{(s^*)}\beta}$.\footnote{For $\alpha$ equal to 0.01 and 0.1 (and $\gamma=\alpha/10$), the expected excess length of our one-sided CI is bounded above by 2.366 and 1.341 and the expected excess length of the standard one-sided CI is equal to 2.326 and 1.282, respectively. This implies that the expected excess length increase of our CI relative to the standard CI is bounded above by 1.7\% and 4.6\%, for $\alpha=0.01$ and $\alpha=0.1$ respectively.}

\begin{figure}[h] 
  \begin{center}
    \includegraphics[width=70mm]{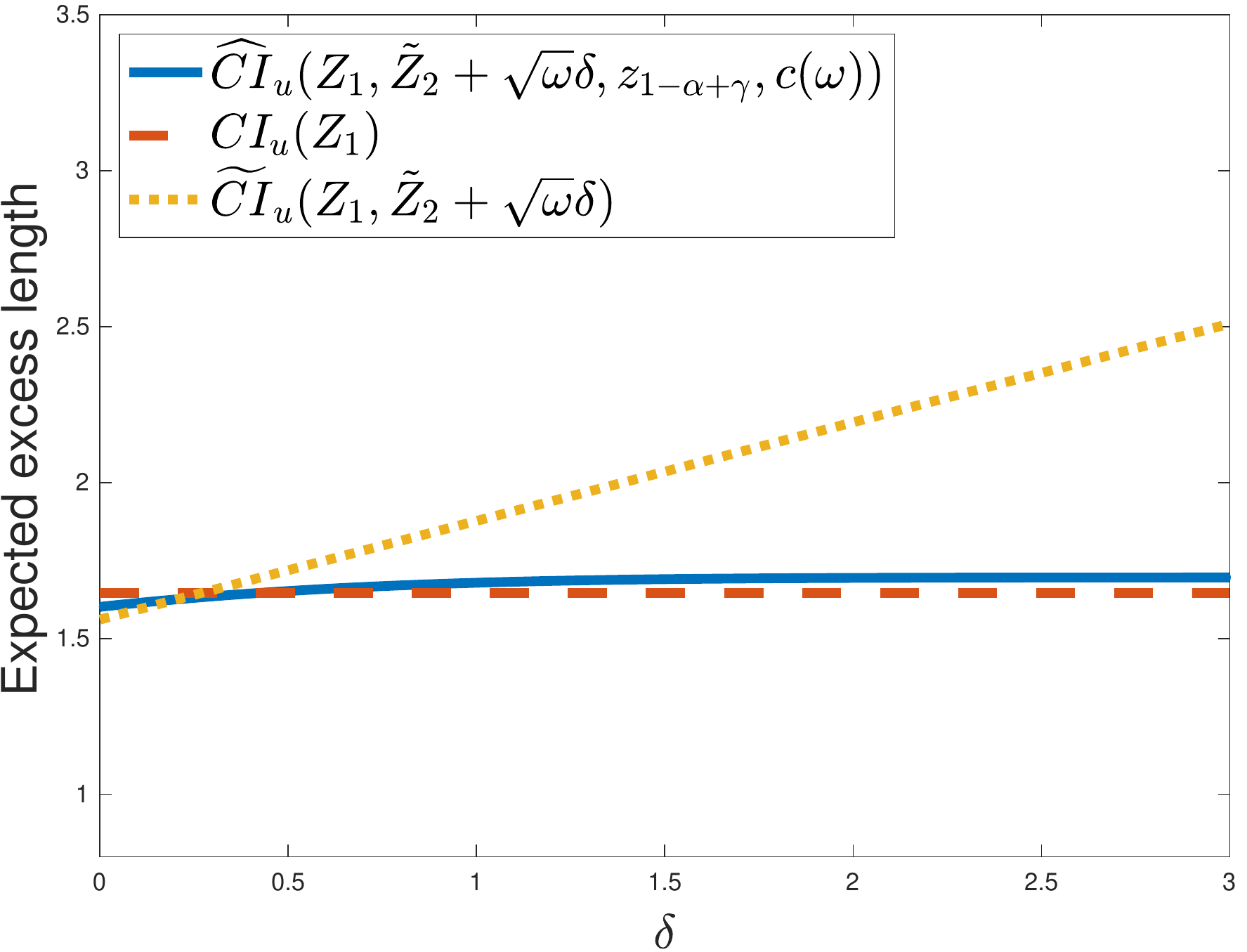}
    \includegraphics[width=70mm]{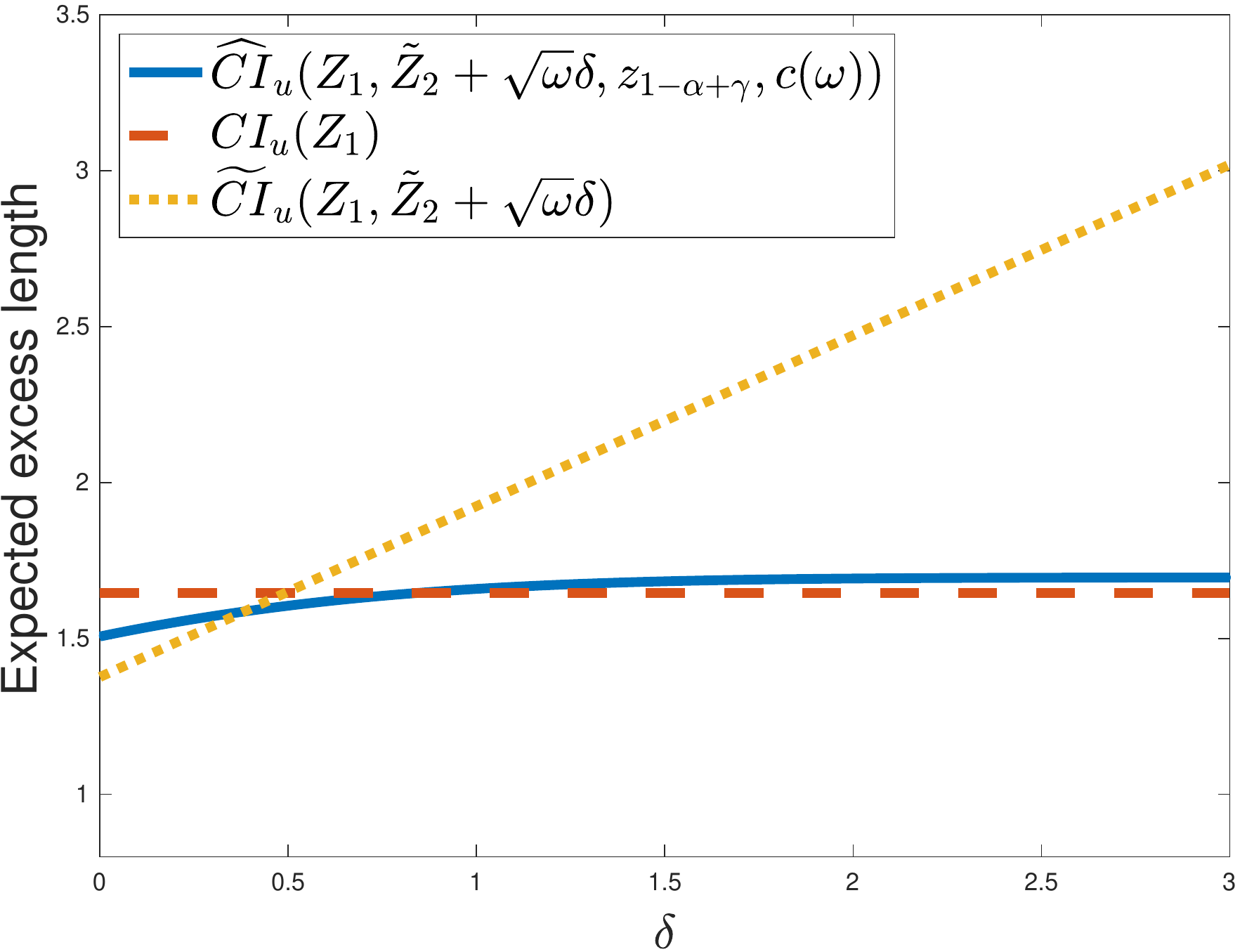}\\
    \includegraphics[width=70mm]{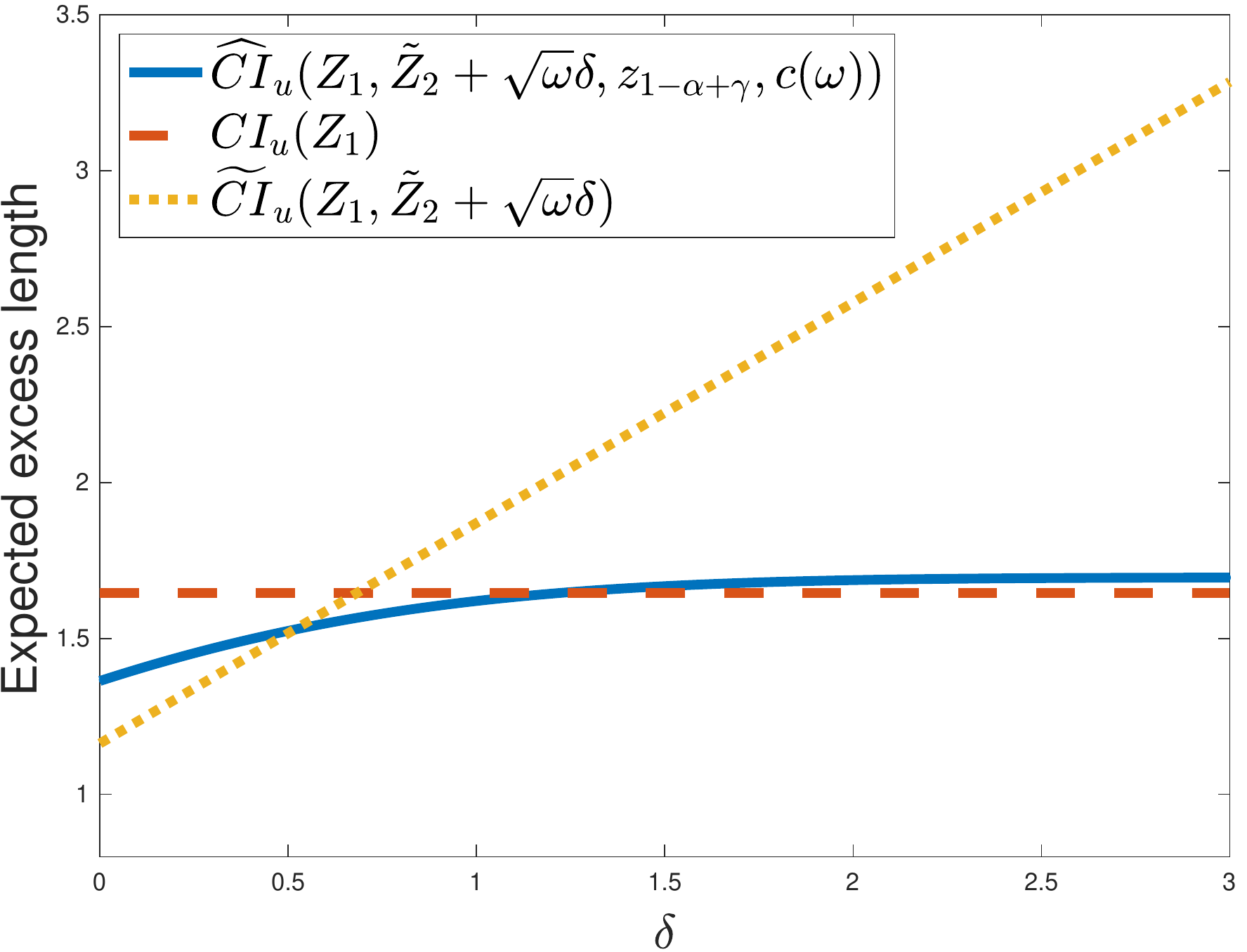}
    \includegraphics[width=70mm]{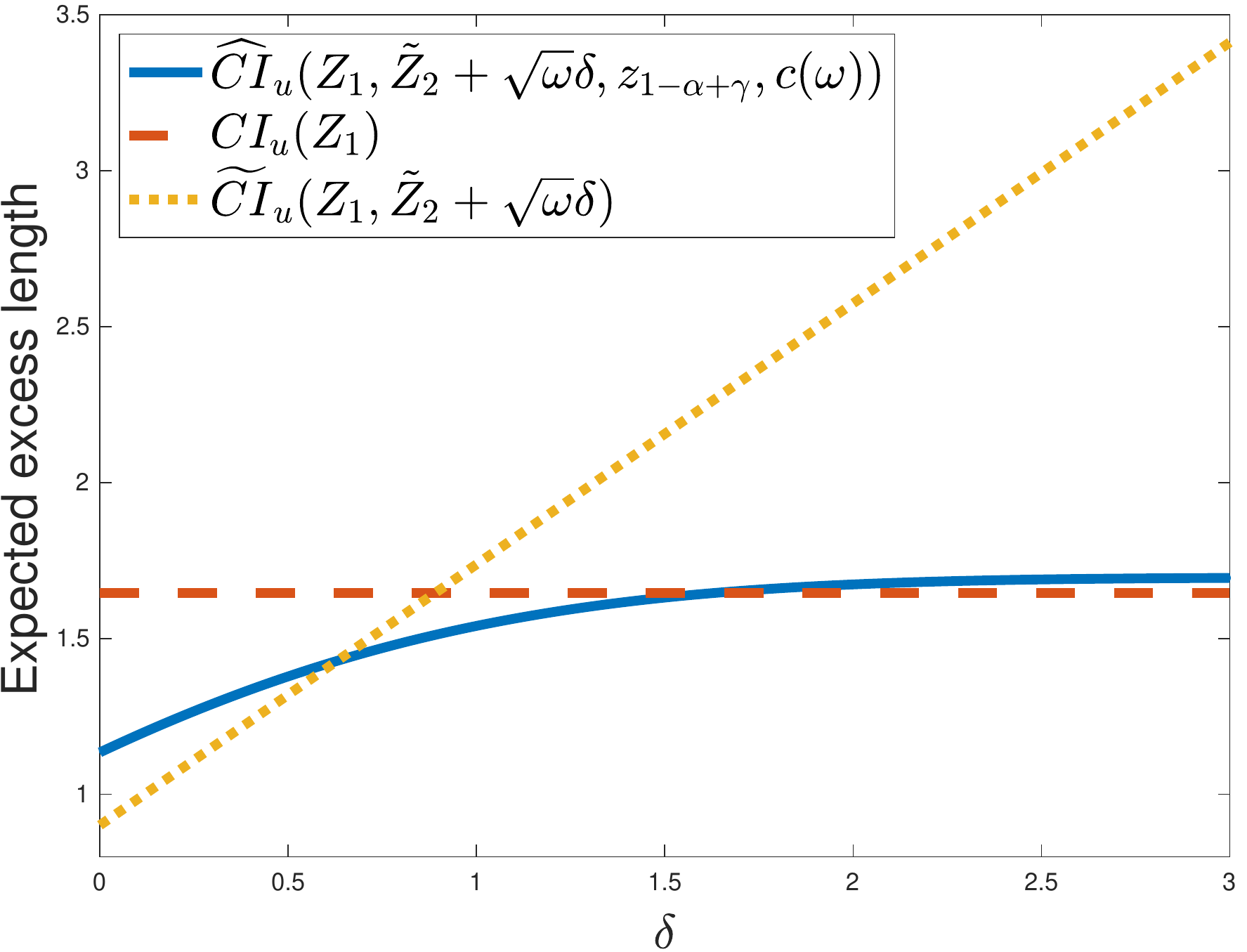}\\
     \caption{Expected excess length of $\widehat{CI}_u(\cdot)$, ${CI}_u(\cdot)$, and $\widetilde{CI}_u(\cdot)$ as a function of $\delta$ for $\omega$ equal to 0.1, 0.3, 0.5, and 0.7 from left to right and for $\alpha = 0.05$ and $\gamma = \alpha/10$.}
         \label{figure one-sided excess length comparison}
    \end{center}
\end{figure}

Figure \ref{figure one-sided excess length comparison} plots the expected excess length of our one-sided CI ($\widehat{CI}_u(\cdot)$), the minimax standard one-sided CI (${CI}_u(\cdot)$) and the CI that is optimal when $\delta=0$ ($\widetilde{CI}_u(\cdot)$) as a function of $\delta$, for $\alpha = 0.05$ (and $\gamma = \alpha/10$) and several values of $\omega=\Omega_{\beta\delta^{(s^*)}} \Omega_{\delta^{(s^*)}\delta^{(s^*)}}^{-1}\Omega_{\delta^{(s^*)}\beta}$. Similarly to Figure \ref{figure one-sided excess length}, Figure \ref{figure one-sided excess length comparison} shows that the gains in expected excess length of our one-sided CI compared to the standard one-sided CI are more pronounced for larger values of $\omega$. Furthermore, Figure \ref{figure one-sided excess length comparison} illustrates the adaptive nature of our one-sided CI: at the endpoints of the parameter space, $\delta = 0$ and $\delta = \infty$, its expected excess length approaches those of the optimal CI at $\delta=0$ and the minimax standard one-sided CI.  

A different choice of $\gamma$ from $\alpha/10$ would entail different tradeoffs for our CI over the $\delta\geq 0$ parameter space.  For example, a larger choice of $\gamma$ would yield lower expected excess length in a neighborhood of $\delta=0$ by bringing it ``closer'' to the optimal CI at $\delta=0$.  Conversely, such a choice for $\gamma$ would yield higher expected excess length at large values of $\delta$.  In fact, the user of our CI could choose $\gamma$ according to how much of an increase in expected excess length they are willing to tolerate relative to the standard CI at large values of $\delta$ since the ratio of the expected excess length of our CI relative to the standard CI is bounded above by $z_{1-\alpha+\gamma}/z_{1-\alpha}$.  For example, the choice of $\gamma=\alpha/2$ would entail an expected excess length increase relative to the standard CI bounded above by about 19\% for $\alpha=0.05$ since $z_{1-\alpha+\gamma}/z_{1-\alpha}=1.96/1.645\approx 1.19$.  Analogous implications for the choice of $\gamma$ apply to our two-sided CI constructions below.

\begin{table}[h!]											
\begin{center}									
\caption{Coefficients for 6$^\text{th}$ order polynomial approximations of $c(\omega)$ for $\gamma = \alpha/10$}			
\label{Coefficients_1s_table}									
\begin{tabular}{c|ccccccc}								
\hline								
\hline			
$\alpha$ & 1 & $\omega$ & $\omega^2$ & $\omega^3$ & $\omega^4$ & $\omega^5$ & $\omega^6$  \\ 
\hline
0.01 & 2.3241 & 2.5073 & -19.6229 & 65.0489 & -122.0242 & 112.9814 & -40.9895 \\ 
0.05 &1.6385 & 2.4813 & -16.1007 & 52.6998 & -98.9348 & 91.7646 & -33.3628 \\
0.1 & 1.2726 & 2.4250 & -14.1041 & 46.0326 & -86.7946 & 80.8189 & -29.4840 \\ 
\hline								
\end{tabular}
\end{center}						
\end{table}

In order to make practical implementation computationally trivial for the user, requiring no Monte Carlo simulation, we approximate $c(\omega)$ via a polynomial response surface regression. Table \ref{Coefficients_1s_table} provides the estimated coefficients for a 6$^\text{th}$ order polynomial approximation of $c(\omega)$ for $\alpha \in \{0.01,0.05,0.1\}$ and $\gamma = \alpha/10$. For each value of $\alpha$, the $R^2$ is greater than 0.999.\footnote{The regression is performed on the same grid that underlies Figure \ref{figure one-sided excess length}, i.e., $\omega \in \{0,0.001,0.002,\dots,0.999\}$.} Nevertheless, using the response surface approximation of $c(\omega)$ in the construction of our one-sided CI can lead to small (asymptotic) coverage distortions: A grid search over $\{0,0.001,0.002,\dots,0.999\}$ reveals a minimum coverage probability of 94.80\% when $\alpha = 0.05$.\footnote{The corresponding values for $\alpha = 0.01$ and $\alpha = 0.1$ are 98.94\% and 89.68\%, respectively.} However in finite-sample practical applications, the coverage distortions resulting from using the response surface approximation tend to be smaller than those induced by using the large-sample normal approximation to conduct inference.  Evidence of this can be found in the simulation results of Section \ref{Sims} for which the finite-sample coverage distortions of our CIs that are implemented using the response surface approximation are very close to those of the standard CI that does not rely on such an approximation.  Thus from a practical perspective, the small size distortions arising from the response surface approximation are insignificant.

\subsection{Two-Sided Confidence Intervals}

In this section, we focus on forming analogous adaptive two-sided CIs allowing $\delta\geq 0$ to be multidimensional in the large sample problem characterized by \eqref{Limit Exp} and \eqref{block correlation mtx}.  These two-sided CIs use the same basic logic as the one-sided CIs of the previous section but work to shorten each side of the CI separately while maintaining correct coverage.  Unlike our one-sided CIs that use a single critical value $c\left(\Omega_{\beta\delta^{(s^*)}} \Omega_{\delta^{(s^*)}\delta^{(s^*)}}^{-1}\Omega_{\delta^{(s^*)}\beta}\right)$, two-sided CIs are formed using two, one corresponding to the upper bound of the interval and the other corresponding to the lower bound.  We address this additional degree of freedom by choosing these upper and lower critical values to minimize expected length at $\delta=0$, subject to the constraint that the resulting two-sided CI maintains correct coverage across the $\delta\geq 0$ parameter space.  The following algorithms provide the details.

Let 
\begin{equation}
\left(\begin{array}{c}
Z_1 \\
\tilde Z_2 \\
\tilde Z_3
\end{array}\right)\sim \mathcal{N}
\left(0, 
\left(\begin{array}{ccc}
1 & \omega_{12} & \omega_{13} \\
\omega_{12} & \omega_{12} & \omega_{23} \\
 \omega_{13}  & \omega_{23} & \omega_{13} 
\end{array}\right)\right) \label{Z-dist}
\end{equation} 
and $\widetilde{\mathcal{C}}=\{(c_u,\tilde\omega)\in\mathbb{R}_{\infty}\times \bar{\mathcal{S}}:c_u\in [\underline{c_u}(\tilde\omega),\infty]\}$, where $\bar{\mathcal{S}} = {\mathcal{S}} \cup \{ (x,y,z)\in \mathbb{R}^3: x \in [0,1), y = z = 0\} \cup \{ (x,y,z)\in \mathbb{R}^3: y \in (0,1), x = z = 0\}$ with $\mathcal{S}=\{(x,y,z)\in \mathbb{R}^3:x,y\in (0,1),-z^2+2xyz + xy - x^2y - xy^2 > 0\}$,\footnote{In terms of arguments $(x,y,z)$, the definition of $\mathcal{S}$ is equivalent to the positive definiteness of the matrix 
$\left(\begin{array}{ccc}
1 & x & y \\
x & x & z \\
 y  & z & y 
\end{array}\right).$} where $\underline{c_u}:\bar{\mathcal{S}}\rightarrow \mathbb{R}$ is implicitly defined by
\begin{equation}
P(- \min \{z_{1-(\alpha-\gamma)/2},-\tilde  Z_3+\underline{c_u}(\tilde\omega)\} \leq Z_1 \leq z_{1-(\alpha-\gamma)/2})=1-\alpha.\label{lower bar c_u}
\end{equation}
For $\gamma\in(0,\alpha)$, consider the function $ \tilde c:\widetilde{\mathcal{C}}\rightarrow \mathbb{R}_\infty$
implicitly defined by
\begin{equation} \label{cl_cc}
	P(- \min \{z_{1-(\alpha-\gamma)/2},-\tilde  Z_3+c_u\} \leq Z_1 \leq \min\{z_{1-(\alpha-\gamma)/2},\tilde  Z_2+\tilde c(c_u,\tilde \omega)\})=1-\alpha
\end{equation}
at points $(c_u,\tilde\omega)\in\widetilde{\mathcal{C}}$ for which $\omega_{12},\omega_{13}\neq 0$.  The domain $\widetilde{\mathcal{C}}$ of $\tilde c(\cdot)$ is defined in terms of the lower bound $\underline{c_u}(\tilde\omega)$ on $c_u$ in \eqref{lower bar c_u} so that for any given $\tilde\omega$, the solution to \eqref{cl_cc} exists.  More specifically, the lower bound $\underline{c_u}(\tilde\omega)$ rules out $c_u$ values that are too small to admit a solution to \eqref{cl_cc}.  Next, for $(c_u,\tilde\omega)\in\widetilde{\mathcal{C}}$ with $\omega_{12}= 0$, define $\tilde{c}(c_u,\tilde\omega)=\lim_{\bar\omega_{12}\rightarrow 0}\tilde{c}(c_u,\bar\omega_{12},\omega_{13},\omega_{23})$ and for $(c_u,\tilde\omega)\in\widetilde{\mathcal{C}}$ with $\omega_{13}= 0$, define $\tilde{c}(c_u,\tilde\omega)=\lim_{\bar\omega_{13}\rightarrow 0}\tilde{c}(c_u,\omega_{12},\bar\omega_{13},\omega_{23})$.\footnote{The limits in these definitions exist by the continuity of $\tilde{c}(c_u,\tilde\omega)$ at all $(c_u,\tilde\omega)\in \widetilde{\mathcal{C}}$ with $\omega_{12},\omega_{13}\neq 0$.  See Lemma \ref{lem:existence and uniqueness of c-tilde} in the Appendix.  We define $\tilde{c}(c_u,\tilde\omega)$ at $\omega_{12}=0$ and $\omega_{13}=0$ in terms of limits because multiple values of $\tilde{c}(c_u,\tilde\omega)$ satisfy \eqref{cl_cc} when $\omega_{12}=0$ and we wish to treat $\omega_{12}$ and $\omega_{13}$ symmetrically in light of Proposition \ref{prop:c_ell} below.}  Finally, define the correspondence $ \tilde c_u:\bar{\mathcal{S}}\rightrightarrows  \mathbb{R}$ as
\begin{equation}
	 \tilde c_u(\tilde \omega) =\argmin_{c_u\in [\underline{c_u}(\tilde\omega),\infty]} E[\max\{\min\{z_{1-(\alpha-\gamma)/2},\tilde  Z_2+ \tilde c(c_u,\tilde \omega)\}+\min \{z_{1-(\alpha-\gamma)/2},-\tilde  Z_3+c_u\},0\}]. \label{c_u def}
\end{equation}
Note that $ \tilde c_u(0)=z_{1-\alpha/2}$.  The following proposition ensures that $ \tilde c_u:\bar{\mathcal{S}}\rightrightarrows  \mathbb{R}$ is well-defined and possesses some desirable properties.\footnote{In our numerical work, we have found the solution to \eqref{c_u def} to be a singleton and $\tilde c_u$ to be a continuous function when $\omega_{12},\omega_{13}\neq 0$.}

\begin{proposition} \label{prop:c_u existence}
For any $\tilde\omega\in\bar{\mathcal{S}}$, $\tilde c_{u}(\tilde\omega)\subset \mathbb{R}_{\infty}$ defined in \eqref{c_u def} is non-empty and compact and $\tilde c_{u}:\bar{\mathcal{S}}\rightrightarrows\mathbb{R}_\infty$ is upper hemicontinuous.
\end{proposition}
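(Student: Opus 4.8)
The plan is to read \eqref{c_u def} as a parametrized minimization problem and obtain all three conclusions from Berge's Maximum Theorem, working throughout on the two-point compactification $\mathbb{R}_\infty$, which is a compact metric space. Write $f(c_u,\tilde\omega)$ for the objective in \eqref{c_u def} and let $\Gamma:\bar{\mathcal S}\rightrightarrows\mathbb{R}_\infty$ denote the constraint correspondence $\Gamma(\tilde\omega)=[\underline{c_u}(\tilde\omega),\infty]$, so that $\tilde c_u(\tilde\omega)=\argmin_{c_u\in\Gamma(\tilde\omega)}f(c_u,\tilde\omega)$. Since $\underline{c_u}(\tilde\omega)\in\mathbb{R}$, each $\Gamma(\tilde\omega)$ is a non-empty closed, hence compact, subinterval of $\mathbb{R}_\infty$. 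It therefore suffices to establish (a) that $f$ is jointly continuous on the graph $\widetilde{\mathcal C}$ of $\Gamma$, and (b) that $\Gamma$ is a continuous correspondence (it is already non-empty- and compact-valued); Berge's Theorem then yields non-emptiness and compactness of $\tilde c_u(\tilde\omega)$ for every $\tilde\omega$ together with upper hemicontinuity of $\tilde c_u$.

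For (b), since the upper endpoint of $\Gamma(\tilde\omega)$ is the constant $\infty$, continuity (both hemicontinuities) of $\Gamma$ reduces to continuity of $\tilde\omega\mapsto\underline{c_u}(\tilde\omega)$. Using $P(Z_1\le z_{1-(\alpha-\gamma)/2})=1-(\alpha-\gamma)/2$, the defining equation \eqref{lower bar c_u} becomes $P\big(Z_1<-\min\{z_{1-(\alpha-\gamma)/2},-\tilde Z_3+\underline{c_u}\}\big)=(\alpha+\gamma)/2$; the left-hand side is jointly continuous in $(\underline{c_u},\tilde\omega)$ (by the coupling argument used for (a) below, applied to $(Z_1,\tilde Z_3)$, whose covariance is positive semidefinite on $\bar{\mathcal S}$) and strictly decreasing in $\underline{c_u}$, running from $1$ down to $(\alpha-\gamma)/2$, so it crosses the level $(\alpha+\gamma)/2$ at a unique real value that depends continuously on $\tilde\omega$. (This is also covered by Lemma \ref{lem:existence and uniqueness of c-tilde}.)

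For (a), I would use two inputs. First, by Lemma \ref{lem:existence and uniqueness of c-tilde} together with the limit definitions at $\omega_{12}=0$ and $\omega_{13}=0$, $\tilde c(\cdot,\cdot):\widetilde{\mathcal C}\to\mathbb{R}_\infty$ is well-defined and continuous on all of $\widetilde{\mathcal C}$. Second, the integrand in \eqref{c_u def} depends on the Gaussian vector of \eqref{Z-dist} only through $(\tilde Z_2,\tilde Z_3)$. I would couple $(\tilde Z_2,\tilde Z_3)$ across $\tilde\omega$ by setting $(\tilde Z_2,\tilde Z_3)'=C(\tilde\omega)^{1/2}V$ with $V\sim\mathcal N(0,I_2)$, where $C(\tilde\omega)$ is the covariance block of $(\tilde Z_2,\tilde Z_3)$ (diagonal entries $\omega_{12}$ and $\omega_{13}$, off-diagonal entry $\omega_{23}$): $C(\tilde\omega)$ is positive definite on $\mathcal S$ (as a principal submatrix of the positive definite matrix in \eqref{Z-dist}) and positive semidefinite on the two boundary pieces of $\bar{\mathcal S}$, and the principal square root $\tilde\omega\mapsto C(\tilde\omega)^{1/2}$ is continuous on positive semidefinite matrices. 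Consequently the integrand, viewed as a function of $(c_u,\tilde\omega,v)$, is a composition of continuous maps — $v\mapsto C(\tilde\omega)^{1/2}v$, $\tilde c(\cdot,\cdot)$, and the operations $+$, $u\mapsto\min\{z_{1-(\alpha-\gamma)/2},u\}$ and $u\mapsto\max\{u,0\}$, all continuous on $\mathbb{R}_\infty$, including at $c_u=\infty$ or $\tilde c(c_u,\tilde\omega)=\infty$, where the relevant inner minimum collapses to $z_{1-(\alpha-\gamma)/2}$ — and is therefore jointly continuous; moreover it lies in $[0,2z_{1-(\alpha-\gamma)/2}]$ for all $(c_u,\tilde\omega,v)$, since each of the two inner minima is bounded above by $z_{1-(\alpha-\gamma)/2}$. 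The bounded convergence theorem then gives joint continuity of $f(c_u,\tilde\omega)=E[\,\cdot\,]$ on $\widetilde{\mathcal C}$.

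Given (a) and (b), Berge's Maximum Theorem applied to the minimization of the continuous $f$ over the continuous, non-empty-compact-valued $\Gamma$ on the metric parameter space $\bar{\mathcal S}$ yields exactly the assertions of the proposition. I expect the only genuine difficulty to be the bookkeeping required to make the continuity arguments valid simultaneously on the two-point compactification and over the non-open, partly degenerate domain $\bar{\mathcal S}$ — that is, continuity of the integrand (and of $\tilde c$, for which one leans on Lemma \ref{lem:existence and uniqueness of c-tilde}) at points where $c_u=\infty$, $\tilde c(c_u,\tilde\omega)=\infty$, or $C(\tilde\omega)$ is singular. Once those are in place, the uniform boundedness of the integrand makes every limit interchange under the expectation routine.
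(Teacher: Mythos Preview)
Your proposal is correct and follows essentially the same route as the paper: establish that the constraint correspondence $[\underline{c_u}(\cdot),\infty]$ is non-empty, compact-valued and continuous (via continuity of $\underline{c_u}$), show the objective in \eqref{c_u def} is jointly continuous in $(c_u,\tilde\omega)$ using Lemma~\ref{lem:existence and uniqueness of c-tilde}, and then invoke the Maximum Theorem. The only cosmetic difference is that the paper expands the expectation into explicit iterated integrals against the density of $(\tilde Z_2,\tilde Z_3)$ and reads off continuity, whereas you use a coupling $(\tilde Z_2,\tilde Z_3)'=C(\tilde\omega)^{1/2}V$ plus bounded convergence; your version handles the degenerate boundary points of $\bar{\mathcal S}$ a bit more cleanly, but the two arguments are interchangeable.
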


Now, let $Y_{\delta}^{(s_1)}$ and $Y_{\delta}^{(s_2)}$ denote two arbitrary (possibly empty) subvectors of $Y_{\delta}$ with
\begin{equation*}
\left(\begin{array}{c}
Y_{\beta} \\
Y_{\delta}^{(s_1)} \\
Y_{\delta}^{(s_2)}
\end{array}\right)\sim \mathcal{N}
\left(\left(\begin{array}{c}
\beta \\
 \delta^{(s_1)} \\
 \delta^{(s_2)}
\end{array}\right), 
\left(\begin{array}{ccc}
1 & \Omega_{\beta \delta^{(s_1)}} & \Omega_{\beta \delta^{(s_2)}} \\
\Omega_{ \delta^{(s_1)}\beta} & \Omega_{ \delta^{(s_1)} \delta^{(s_1)}} & \Omega_{ \delta^{(s_1)} \delta^{(s_2)}} \\
\Omega_{ \delta^{(s_2)}\beta} & \Omega_{ \delta^{(s_2)} \delta^{(s_1)}} & \Omega_{ \delta^{(s_2)} \delta^{(s_2)}}
\end{array}\right)\right),
\end{equation*}
where by convention, $\delta^{(s_1)}$ ($\delta^{(s_2)}$), $\Omega_{\beta\delta^{(s_1)}}$ ($\Omega_{\beta\delta^{(s_2)}}$), $\Omega_{\delta^{(s_1)}\delta^{(s_1)}}$ ($\Omega_{\delta^{(s_2)}\delta^{(s_2)}}$), and $\Omega_{\delta^{(s_1)}\delta^{(s_2)}}$, as well as $\Omega_{\beta\delta^{(s_1)}} \Omega_{\delta^{(s_1)}\delta^{(s_1)}}^{-1}$ ($\Omega_{\beta\delta^{(s_2)}} \Omega_{\delta^{(s_2)}\delta^{(s_2)}}^{-1}$), $\Omega_{\beta\delta^{(s_1)}} \Omega_{\delta^{(s_1)}\delta^{(s_1)}}^{-1}\Omega_{\delta^{(s_1)}\beta}$ ($\Omega_{\beta\delta^{(s_2)}} \Omega_{\delta^{(s_2)}\delta^{(s_2)}}^{-1}\Omega_{\delta^{(s_2)}\beta}$), and $\Omega_{\beta \delta^{(s_1)}} \Omega_{ \delta^{(s_1)} \delta^{(s_1)}}^{-1} \Omega_{ \delta^{(s_1)} \delta^{(s_2)}} \Omega_{ \delta^{(s_2)} \delta^{(s_2)}}^{-1}\Omega_{ \delta^{(s_2)}\beta}$, are set equal to zero when $Y_{\delta}^{(s_1)} = \emptyset$ ($Y_{\delta}^{(s_2)} = \emptyset$).  Define
\begin{align*}
&\widehat{CI}_t(Y_{\beta},\Omega_{\beta\delta^{(s_1)}} \Omega_{\delta^{(s_1)}\delta^{(s_1)}}^{-1} Y_\delta^{(s_1)},\Omega_{\beta\delta^{(s_2)}} \Omega_{\delta^{(s_2)}\delta^{(s_2)}}^{-1} Y_\delta^{(s_2)};z_{1-(\alpha-\gamma)/2},c_\ell(\tilde\Omega^{(s_1,s_2)}),c_u(\tilde\Omega^{(s_1,s_2)}))\\ 
&=\left[  Y_{\beta}-\min\left\{z_{1-(\alpha-\gamma)/2},\Omega_{\beta\delta^{(s_1)}} \Omega_{\delta^{(s_1)}\delta^{(s_1)}}^{-1} Y_\delta^{(s_1)} + c_\ell(\tilde\Omega^{(s_1,s_2)})\right\}, \right. \\
& \quad \left.  Y_{\beta}+\min\left\{z_{1-(\alpha-\gamma)/2},-\Omega_{\beta\delta^{(s_2)}} \Omega_{\delta^{(s_2)}\delta^{(s_2)}}^{-1} Y_\delta^{(s_2)} + c_u(\tilde\Omega^{(s_1,s_2)})\right\} \right],
\end{align*}
where $c_u(\tilde\omega)\in \tilde c_u(\tilde\omega)$, $c_\ell(\tilde\omega)= \tilde c(c_u(\tilde\omega),\tilde\omega)$ and
{\small\[\tilde\Omega^{(s_1,s_2)}=({\Omega_{\beta \delta^{(s_1)}} \Omega_{ \delta^{(s_1)} \delta^{(s_1)}}^{-1}\Omega_{ \delta^{(s_1)}\beta}},{\Omega_{\beta \delta^{(s_2)}} \Omega_{ \delta^{(s_2)} \delta^{(s_2)}}^{-1}\Omega_{ \delta^{(s_2)}\beta}},\Omega_{\beta \delta^{(s_1)}} \Omega_{ \delta^{(s_1)} \delta^{(s_1)}}^{-1} \Omega_{ \delta^{(s_1)} \delta^{(s_2)}} \Omega_{ \delta^{(s_2)} \delta^{(s_2)}}^{-1}\Omega_{ \delta^{(s_2)}\beta}).\]
}  
For any given pair of subvectors $Y_\delta^{(s_1)}$ and $Y_\delta^{(s_2)}$ such that all elements of $\Omega_{\beta \delta^{(s_1)}} \Omega_{ \delta^{(s_1)} \delta^{(s_1)}}^{-1}$ are non-negative and all elements of $\Omega_{\beta \delta^{(s_2)}} \Omega_{ \delta^{(s_2)} \delta^{(s_2)}}^{-1}$ are non-positive, $c_\ell(\tilde\Omega^{(s_1,s_2)})$ and $c_u(\tilde\Omega^{(s_1,s_2)})$ minimize the expected length at $\delta=0$ of CIs of the form 
\[\widehat{CI}_t(Y_{\beta},\Omega_{\beta\delta^{(s_1)}} \Omega_{\delta^{(s_1)}\delta^{(s_1)}}^{-1} Y_\delta^{(s_1)},\Omega_{\beta\delta^{(s_2)}} \Omega_{\delta^{(s_2)}\delta^{(s_2)}}^{-1} Y_\delta^{(s_2)};z_{1-(\alpha-\gamma)/2},c_\ell,c_u)\] 
amongst all $(c_\ell,c_u)$ values that have valid coverage for all $\delta\geq 0$. To impart intuition, note that the two subvectors $Y_\delta^{(s_1)}$ and $Y_\delta^{(s_2)}$ play separate roles in our two-sided CI construction:~the lower (upper) bound of the CI is a function of $Y_\delta^{(s_1)}$ ($Y_\delta^{(s_2)}$) so that small values of $Y_\delta^{(s_1)}$ ($Y_\delta^{(s_2)}$) serve to increase (decrease) the lower (upper) bound of $\widehat{CI}_t(\cdot)$.

\begin{two sided}
\noindent Amongst all pairs of subvectors of $Y_{\delta}$ (including the empty ones) such that all elements of $\Omega_{\beta \delta^{(s_1)}} \Omega_{ \delta^{(s_1)} \delta^{(s_1)}}^{-1}$ are non-negative and all elements of $\Omega_{\beta \delta^{(s_2)}} \Omega_{ \delta^{(s_2)} \delta^{(s_2)}}^{-1}$ are non-positive, find the subvector pair $Y_{\delta}^{(s_1^*)}$ and $Y_{\delta}^{(s_2^*)}$ such that the expected length of 
\[
\widehat{CI}_t(Y_{\beta},\Omega_{\beta\delta^{(s_1)}} \Omega_{\delta^{(s_1)}\delta^{(s_1)}}^{-1} Y_\delta^{(s_1)},\Omega_{\beta\delta^{(s_2)}} \Omega_{\delta^{(s_2)}\delta^{(s_2)}}^{-1} Y_\delta^{(s_2)};z_{1-(\alpha-\gamma)/2},c_\ell(\tilde\Omega^{(s_1,s_2)}),c_u(\tilde\Omega^{(s_1,s_2)}))
\]
at $\delta = 0$ is minimized at $s_1=s_1^*$ and $s_2=s_2^*$. Then, construct
\begin{equation}
\widehat{CI}_t(Y_{\beta},\Omega_{\beta\delta^{(s^*_1)}} \Omega_{\delta^{(s^*_1)}\delta^{(s^*_1)}}^{-1} Y_\delta^{(s^*_1)},\Omega_{\beta\delta^{(s^*_2)}} \Omega_{\delta^{(s^*_2)}\delta^{(s^*_2)}}^{-1} Y_\delta^{(s^*_2)};z_{1-(\alpha-\gamma)/2},c_\ell(\tilde\Omega^{(s_1^*,s_2^*)}),c_u(\tilde\Omega^{(s^*_1,s^*_2)})). \tag*{$\blacksquare$}
\end{equation}
\end{two sided}

%The expected length of the two-sided confidence interval not only depends on $\omega_{12}$ and $\omega_{13}$, but also on $\omega_{23}$. 

\begin{figure}[h] 
  \begin{center}
    \includegraphics[width=70mm]{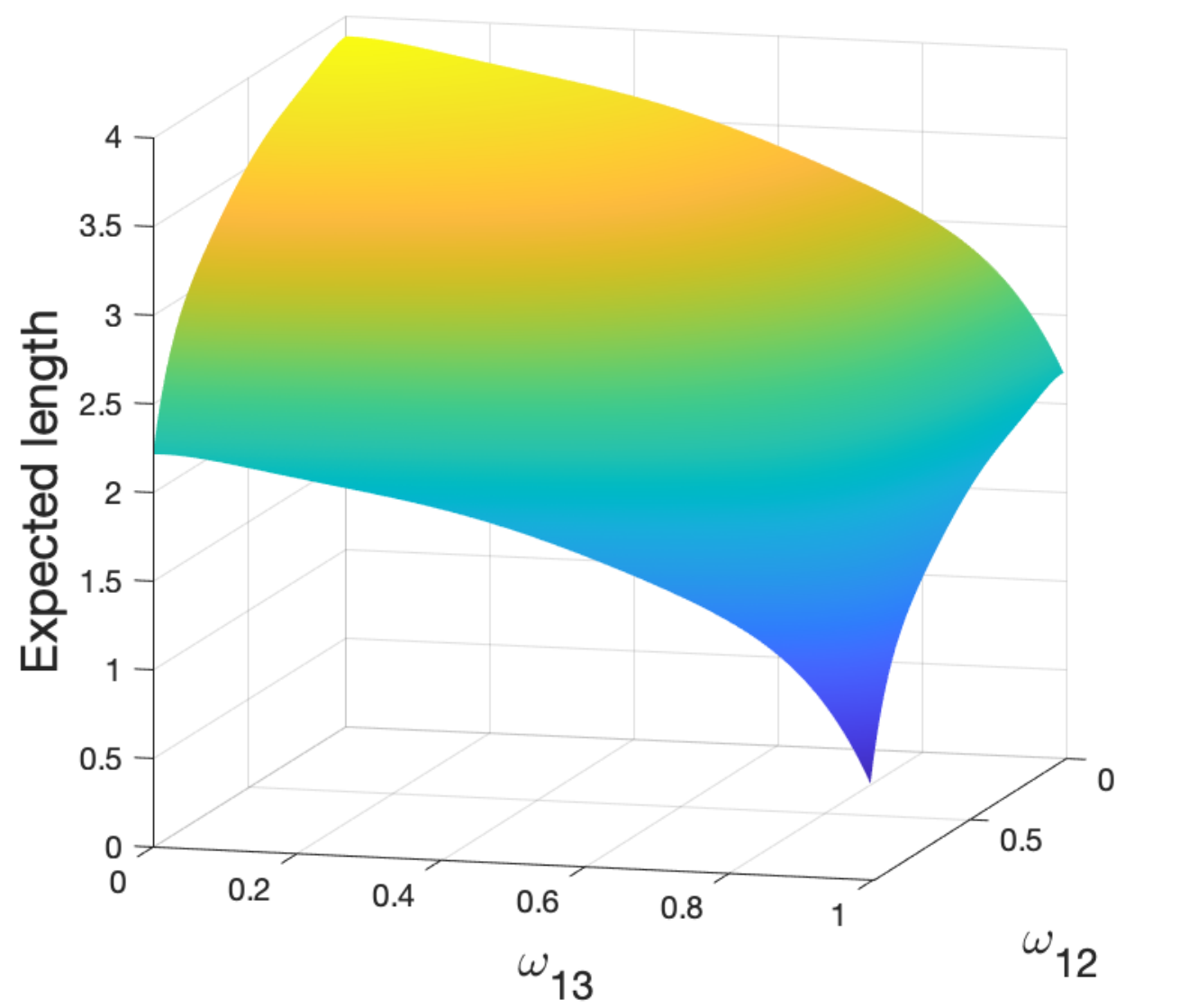}
     \caption{Expected length as a function of $\omega_{12}$ and $\omega_{13}$, for $\alpha = 0.05$ and $\gamma = \alpha/10$.}
         \label{el}
    \end{center}
\end{figure}

Figure \ref{el} shows the fitted surface of a 6$^\text{th}$ order polynomial regression of the expected length of $\widehat{CI}_t(Z_1,\tilde Z_2, \tilde Z_3;z_{1-(\alpha-\gamma)/2},c_\ell(\tilde\omega),c_u(\tilde \omega))$ on $\omega_{12}$ and $\omega_{13}$ alone, for $\alpha = 0.05$ and $\gamma = \alpha/10$. The values of $\tilde \omega$ on which this regression is based are given by $\bar{\Omega} = \bar{\mathcal{S}} \cap \mathcal{G}^2 \times -\mathcal{G} \cup \mathcal{G} \cup \{-0.99,-0.98,\dots,0.99\}$, where 
\begin{equation} \label{grid}
{\mathcal{G}} = \{0, 0.005, 0.01, 0.02, \dots, 0.1, 0.15, \dots, 0.9, 0.91,\dots,0.99,0.995\} 
\end{equation}
and $-\mathcal{G} = \{g : -g \in \mathcal{G}\}$.
The corresponding $R^2$ is greater than 0.999, implying that the expected length is nearly invariant to $\omega_{23}$. Similarly, the maximum difference between the largest and the smallest expected length over the set $\bar{\Omega}$ for any given $(\omega_{12}, \omega_{13})$ is equal to 0.0289.  Note also that the fitted expected length in Figure \ref{el} is strictly decreasing in $\omega_{12}$ and $\omega_{13}$.  Since the expected length does not depend upon $\beta$, this implies that the expected length of $\widehat{CI}_t(Y_{\beta},\Omega_{\beta\delta^{(s_1)}} \Omega_{\delta^{(s_1)}\delta^{(s_1)}}^{-1} Y_\delta^{(s_1)},\Omega_{\beta\delta^{(s_2)}} \Omega_{\delta^{(s_2)}\delta^{(s_2)}}^{-1} Y_\delta^{(s_2)};z_{1-(\alpha-\gamma)/2},c_\ell(\tilde\Omega^{(s_1^*,s_2^*)}),c_u(\tilde\Omega^{(s_1,s_2)}))$ evaluated at $\delta=0$ is approximately smallest for the subvectors $Y_\delta^{(s_1)}$ and $Y_\delta^{(s_2)}$ that maximize $\Omega_{\beta\delta^{(s_1)}} \Omega_{\delta^{(s_1)}\delta^{(s_1)}}^{-1}\Omega_{\delta^{(s)}\beta}$ and $\Omega_{\beta\delta^{(s_2)}} \Omega_{\delta^{(s)}\delta^{(s_2)}}^{-1}\Omega_{\delta^{(s_2)}\beta}$, motivating the following simplified algorithm.

\begin{two sided star}
\noindent Amongst all pairs of subvectors of $Y_{\delta}$ (including the empty ones) such that all elements of $\Omega_{\beta \delta^{(s_1)}} \Omega_{ \delta^{(s_1)} \delta^{(s_1)}}^{-1}$ are non-negative and all elements of $\Omega_{\beta \delta^{(s_2)}} \Omega_{ \delta^{(s_2)} \delta^{(s_2)}}^{-1}$ are non-positive, find the subvector pair $Y_{\delta}^{(s_1^*)}$ and $Y_{\delta}^{(s_2^*)}$ such that ${\Omega_{\beta \delta^{(s_1)}} \Omega_{ \delta^{(s_1)} \delta^{(s_1)}}^{-1}\Omega_{ \delta^{(s_1)}\beta}}$ and ${\Omega_{\beta \delta^{(s_2)}} \Omega_{ \delta^{(s_2)} \delta^{(s_2)}}^{-1}\Omega_{ \delta^{(s_2)}\beta}}$ are maximized at $s_1=s_1^*$ and $s_2=s_2^*$. Then, construct
\begin{gather}
\widehat{CI}_t^*(Y_{\beta},Y_\delta,\Omega) \nonumber \\
\equiv\widehat{CI}_t(Y_{\beta},\Omega_{\beta\delta^{(s^*_1)}} \Omega_{\delta^{(s^*_1)}\delta^{(s^*_1)}}^{-1} Y_\delta^{(s^*_1)},\Omega_{\beta\delta^{(s^*_2)}} \Omega_{\delta^{(s^*_2)}\delta^{(s^*_2)}}^{-1} Y_\delta^{(s^*_2)};z_{1-(\alpha-\gamma)/2},c_\ell(\tilde\Omega^{(s_1^*,s_2^*)}),c_u(\tilde\Omega^{(s^*_1,s^*_2)})). \tag*{$\blacksquare$}
\end{gather}
\end{two sided star}

Similarly to the one-sided CI case above, Figure \ref{el} shows that the expected length of our two-sided CI can be very small for extreme values of $\Omega_{\beta\delta^{(s_1^*)}} \Omega_{\delta^{(s_1^*)}\delta^{(s_1^*)}}^{-1}\Omega_{\delta^{(s_1^*)}\beta}$ and $\Omega_{\beta\delta^{(s_2^*)}} \Omega_{\delta^{(s_2^*)}\delta^{(s_2^*)}}^{-1}\Omega_{\delta^{(s_2^*)}\beta}$. At the same time, for \emph{any} realization of the data, the realized length of our two-sided CI cannot exceed $2\times z_{1-(\alpha-\gamma)/2}=4.009$ for $\alpha=0.05$ and $\gamma=\alpha/10$.  This implies that the length increase of our recommended CI cannot exceed 2.28\% relative to the fixed length $2\times z_{1-\alpha/2}=3.92$ of the standard two-sided CI.\footnote{For $\alpha$ equal to 0.01 and 0.1, the  length of our two-sided CI cannot exceed 5.224 and 3.391 and the  length of the standard two-sided CI is equal to 5.152 and 3.290, respectively. This implies a maximum length increase of 1.41\% and 3.07\%, respectively.} 

Table \ref{Coefficients_2s_table_5} provides the estimated coefficients for a 6$^\text{th}$ order polynomial approximation of $c_u(\tilde \omega)$ for $\alpha = 0.05$ and $\gamma = \alpha/10$ in terms of $\omega_{12}$ and $\omega_{13}$ alone. Tables \ref{Coefficients_2s_table_1} and \ref{Coefficients_2s_table_10} in Appendix \ref{AT} provide the corresponding coefficients for $\alpha = 0.01$ and $\alpha = 0.1$ (and $\gamma = \alpha/10$), respectively. For all three values of $\alpha$, the corresponding $R^2$ is greater than 0.999.\footnote{The regressions are performed on the same grid that underlies Figure \ref{el}, i.e., $\omega \in \bar{\Omega}$.} This is remarkable, as it implies that $c_u(\cdot)$ and $c_\ell(\cdot)$ are nearly invariant to $\omega_{23}$. In fact, the maximum (asymptotic) size distortions that result from relying on the polynomial approximation of $c_u(\cdot)$ and $c_\ell(\cdot)$ in the construction of our two-sided CI are very similar to those that we found for relying on the polynomial approximation in the construction of our one-sided CI.\footnote{For $\alpha$ equal to 0.01, 0.05, and 0.1, a grid search over $\bar{\Omega}$ reveals minimum coverage probabilities of 98.96\%, 94.78\%, and 89.60\%, respectively.} We therefore also expect minimal size distortions from employing the polynomial approximations of $c_u(\cdot)$ and $c_\ell(\cdot)$ in practice, which is again corroborated by the finite sample coverage probabilities found in Section \ref{Sims}.

\begin{table}[h!]											
\begin{center}									
\caption{Coefficients for $6^\text{th}$ order polynomial approximation of $c_u(\omega)$ for $\alpha = 0.05$ and $\gamma = \alpha/10$}									
\label{Coefficients_2s_table_5}									
\begin{tabular}{c|rrrrrrr}								
\hline								
\hline			
 & \multicolumn{1}{c}{1} & \multicolumn{1}{c}{$\omega_{12}$} & \multicolumn{1}{c}{$\omega_{12}^2$} & \multicolumn{1}{c}{$\omega_{12}^3$} & \multicolumn{1}{c}{$\omega_{12}^4$} & \multicolumn{1}{c}{$\omega_{12}^5$} & \multicolumn{1}{c}{$\omega_{12}^6$}  \\ 
\hline
1 & $1.9540$ & $1.3388$ & $-4.5110$ & $11.7294$ & $-18.8756$ & $15.5342$ & $-5.2786$ \\ 
$\omega_{13}$ & $1.1289$ & $-0.8006$ & $1.1262$ & $-1.1742$ & $2.1281$ & $-0.5511$ & $$ \\ 
$\omega_{13}^2$ & $-12.2929$ & $0.0090$ & $0.9084$ & $-3.2329$ & $0.1723$ & $$ & $$ \\ 
$\omega_{13}^3$ & $45.6505$ & $0.5939$ & $0.8153$ & $1.7625$ & $$ & $$ & $$ \\ 
$\omega_{13}^4$ & $-92.3587$ & $-1.0048$ & $-0.9854$ & $$ & $$ & $$ & $$ \\ 
$\omega_{13}^5$ & $89.5045$ & $0.2851$ & $$ & $$ & $$ & $$ & $$ \\ 
$\omega_{13}^6$ & $-33.3683$ & $$ & $$ & $$ & $$ & $$ & $$ \\ 
\hline								
\end{tabular}
\end{center}						
\end{table}

The following proposition also enables one to directly compute an approximation to $c_{\ell}(\tilde\omega)$ from Table \ref{Coefficients_2s_table_5} (and Tables \ref{Coefficients_2s_table_1} and \ref{Coefficients_2s_table_10}) by simply reversing the roles of $\omega_{12}$ and $\omega_{13}$ in the computation of $c_u(\tilde\omega)$.

\begin{proposition} \label{prop:c_ell} 
$c_\ell(\omega_{13},\omega_{12},\omega_{23})=c_u(\omega_{12},\omega_{13},\omega_{23})$.
\end{proposition}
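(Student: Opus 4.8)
The plan is to derive the identity from a distributional symmetry of the Gaussian vector in \eqref{Z-dist}: permuting $\tilde Z_2$ and $\tilde Z_3$ interchanges the parameters $\omega_{12}$ and $\omega_{13}$, and combining this with a sign flip of $Z_1$ interchanges the ``lower'' and ``upper'' pieces of both the coverage equation \eqref{cl_cc} and the expected-length criterion in \eqref{c_u def}. As a result, the problem that defines $\bigl(c_\ell(\omega_{13},\omega_{12},\omega_{23}),c_u(\omega_{13},\omega_{12},\omega_{23})\bigr)$ becomes, after transposing the two critical values, the problem that defines $\bigl(c_\ell(\omega_{12},\omega_{13},\omega_{23}),c_u(\omega_{12},\omega_{13},\omega_{23})\bigr)$, and reading off the $c_u$-coordinate of the latter against the $c_\ell$-coordinate of the former yields the claim.

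To set this up I would write $(Z_1,\tilde Z_2,\tilde Z_3)$ for the vector in \eqref{Z-dist} associated with $\tilde\omega=(\omega_{12},\omega_{13},\omega_{23})$ and $(Z_1,\tilde Z_2^{\,\prime},\tilde Z_3^{\,\prime})$ for the one associated with $\tilde\omega'=(\omega_{13},\omega_{12},\omega_{23})$. A direct inspection of the correlation matrix in \eqref{Z-dist} shows that permuting its last two coordinates turns the $\tilde\omega$-matrix into the $\tilde\omega'$-matrix, i.e.\ $(Z_1,\tilde Z_2^{\,\prime},\tilde Z_3^{\,\prime})\overset{d}{=}(Z_1,\tilde Z_3,\tilde Z_2)$. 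Using this together with the central symmetry $(Z_1,\tilde Z_2,\tilde Z_3)\overset{d}{=}(-Z_1,-\tilde Z_2,-\tilde Z_3)$ (and, for the criterion, the central symmetry of the marginal of $(\tilde Z_2,\tilde Z_3)$, both because these are centered Gaussian), I would verify two identities: the coverage equation \eqref{cl_cc} at $\tilde\omega'$ in the unknowns $(c_\ell^{\,\prime},c_u^{\,\prime})$ is \emph{identical} to the coverage equation \eqref{cl_cc} at $\tilde\omega$ in the unknowns $(c_u^{\,\prime},c_\ell^{\,\prime})$; and the expected-length criterion of \eqref{c_u def} at $\tilde\omega'$ evaluated at $(c_\ell^{\,\prime},c_u^{\,\prime})$ equals the criterion at $\tilde\omega$ evaluated at $(c_u^{\,\prime},c_\ell^{\,\prime})$. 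Intuitively, permuting $\tilde Z_2,\tilde Z_3$ swaps $\omega_{12}\leftrightarrow\omega_{13}$, while the flip $Z_1\mapsto -Z_1$ (absorbed by also flipping $\tilde Z_2,\tilde Z_3$) sends the lower piece $\min\{z_{1-(\alpha-\gamma)/2},\tilde Z_2+c_\ell\}$ to the upper piece $\min\{z_{1-(\alpha-\gamma)/2},-\tilde Z_3+c_u\}$ and back.

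For $\omega_{12},\omega_{13}\in(0,1)$ I would then reformulate the sequential definition of $(c_\ell,c_u)$ as a single joint minimization. Since the coverage probability in \eqref{cl_cc} is strictly monotone in $c_\ell$ when $\omega_{12}\neq0$, the map $c_u\mapsto(\tilde c(c_u,\tilde\omega),c_u)$ parametrizes exactly the coverage level set $F(\tilde\omega)=\{(c_\ell,c_u):\text{\eqref{cl_cc} holds with }c_\ell\text{ in place of }\tilde c(c_u,\tilde\omega)\}$ as $c_u$ ranges over $[\underline{c_u}(\tilde\omega),\infty]$ (Lemma \ref{lem:existence and uniqueness of c-tilde} and the discussion around \eqref{lower bar c_u}, which is precisely why $\underline{c_u}(\tilde\omega)$ appears); hence $(c_\ell(\tilde\omega),c_u(\tilde\omega))$ is a minimizer of the criterion in \eqref{c_u def} over $F(\tilde\omega)$, and likewise at $\tilde\omega'$. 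By the first identity above, $F(\tilde\omega')$ is the image of $F(\tilde\omega)$ under $(c_\ell,c_u)\mapsto(c_u,c_\ell)$, and by the second the two criteria agree under that same swap; therefore $(c_\ell(\tilde\omega'),c_u(\tilde\omega'))$ minimizes over $F(\tilde\omega')$ if and only if $(c_u(\tilde\omega'),c_\ell(\tilde\omega'))$ minimizes over $F(\tilde\omega)$, i.e.\ $(c_u(\tilde\omega'),c_\ell(\tilde\omega'))=(c_\ell(\tilde\omega),c_u(\tilde\omega))$. In particular $c_\ell(\omega_{13},\omega_{12},\omega_{23})=c_u(\omega_{12},\omega_{13},\omega_{23})$ (and, as a by-product, $c_u(\omega_{13},\omega_{12},\omega_{23})=c_\ell(\omega_{12},\omega_{13},\omega_{23})$). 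The remaining cases $\omega_{12}=0$ or $\omega_{13}=0$ I would handle by taking limits: there $\tilde c$ is defined as exactly such a limit, and continuity of $\tilde c$ on $\{\omega_{12},\omega_{13}\neq0\}$ together with the upper hemicontinuity of $\tilde c_u$ from Proposition \ref{prop:c_u existence} transfers the identity to the boundary (the limiting definitions of $\tilde c$ at $\omega_{12}=0$ and $\omega_{13}=0$ were chosen symmetrically with precisely this in mind).

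The part requiring the most care is the bookkeeping in the second paragraph: the two $\min$'s sit asymmetrically in \eqref{cl_cc}--\eqref{c_u def} and $\tilde Z_3$ enters with a minus sign, so one must compose the coordinate permutation $(\tilde Z_2,\tilde Z_3)\mapsto(\tilde Z_3,\tilde Z_2)$ with the sign change $Z_1\mapsto-Z_1$ in exactly the right order for the lower and upper pieces to interchange without stray signs. One must also confirm that the domain restriction $c_u\in[\underline{c_u}(\tilde\omega),\infty]$ transports correctly, which it does because that restriction is nothing but the projection of $F(\tilde\omega)$ onto its $c_u$-coordinate, and the argument transports the whole of $F(\tilde\omega)$. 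A minor further point is possible non-uniqueness of the minimizer in \eqref{c_u def}: where it is a singleton (as in all of our numerical computations) the conclusion is immediate, and in general the coordinate swap gives a bijection between the two solution sets, which is enough to pin down the functions $c_u$ and $c_\ell$ once a consistent selection is fixed.
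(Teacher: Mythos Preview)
Your proposal is correct and takes essentially the same approach as the paper: both exploit the distributional identity $(-Z_1,-\tilde Z_3,-\tilde Z_2)\overset{d}{=}(Z_1,\tilde Z_3,\tilde Z_2)$, which simultaneously swaps $\omega_{12}\leftrightarrow\omega_{13}$ and interchanges the upper and lower pieces of the coverage constraint \eqref{cl_cc} and the length criterion \eqref{c_u def}. The paper's proof is considerably terser---it states the transformation and the two equivalences in a few lines without the joint-minimization reformulation, the explicit handling of the boundary $\omega_{12}=0$ or $\omega_{13}=0$, or the non-uniqueness discussion---so your version fills in details the paper leaves implicit.
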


%If one relies on the polynomial approximation of $c_u(\cdot)$ and $c_\ell(\cdot)$ in terms of $\omega_{12}$ and $\omega_{13}$ when computing $\widehat{CI}_t(\cdot)$, the latter is, of course, no longer guaranteed to cover the true unknown parameter with at least 1-$\alpha$ probability since $c_u(\cdot)$ and $c_\ell(\cdot)$ also depend upon $\omega_{23}$.  However, we find numerically that $c_u(\cdot)$ and $c_\ell(\cdot)$ are nearly invariant to $\omega_{23}$.  Indeed, a grid search over $\bar{\Omega}$ reveals a minimum coverage probability of 94.78\% for the CI using the simple polynomial approximation when $\alpha = 0.05$.\footnote{\color{red}The corresponding values for $\alpha = 0.01$ and $\alpha = 0.1$ are 98.96\% and 89.60\%, respectively.} We therefore expect minimal size distortions from employing the polynomial approximations of $c_u(\cdot)$ and $c_\ell(\cdot)$ in practice. This view is also supported by the finite-sample coverage probabilities of the corresponding CIs, which are very close to those of {standard} CIs.  See Section \ref{Sims}.

%In our empirical application in Section 5, Algorithms Two-Sided and Two-Sided* produce the same $(s_1^*,s_2^*)$, consistent with our numerical results.

\section{Finite Sample Problem of Restricted Nuisance Parameters} \label{FS}

Consider inference on a scalar parameter of interest $b\in\mathbb{R}$ in a well-behaved model with a vector nuisance parameter $d\in \mathbb{R}_+^k$ for some $k\geq 1$ that is known to have all elements greater than or equal to zero.  For a standard parameter estimator $(\hat b,\hat d^{\prime})^{\prime}$, as the number of observations $n$ in the sample grows, standard assumptions imply\footnote{For simplicity of notation, we suppress the dependence of certain finite-sample quantities on the sample size $n$ until Section 3.2.}
\begin{equation}
\sqrt{n}\left(\begin{array}{c}
\widehat b-b \\
\widehat d-d
\end{array}\right)\overset{d}\longrightarrow \mathcal{N}(0,\Sigma) \quad \text{with} \quad \Sigma=\left(\begin{array}{cc}
\Sigma_{bb} & \Sigma_{bd} \\
\Sigma_{db} & \Sigma_{dd}
\end{array}\right), \label{anorm}
\end{equation}
where $\Sigma$ is a consistently estimable covariance matrix.   Note that this setting accommodates regression models, instrumental variables models, maximum likelihood models and models estimated by the generalized method of moments under standard assumptions when some nuisance parameters are known to be greater or less than a given bound via simple reparameterization of the nuisance parameters.  For example, say that the researcher knows from economic theory that the nuisance parameter $\tilde d$ is less than or equal to $\tilde c$ for some known constant $\tilde c\in\mathbb{R}$.  Then $d=-(\tilde d-\tilde c)$ is the simple reparameterization that fits this setting.

\begin{example}
One of the most common examples that fits this setting is inference on a regression coefficient of interest $b$ in the standard linear regression model for observations $i=1,\ldots,n$
\[y_i=bz_i+x_i^{\prime}d+w_i^{\prime}c+\varepsilon_i,\]
where $y_i$ is the dependent variable, $z_i$ is the scalar regressor of interest, $x_i\in\mathbb{R}^{\mathcal D_x}$ are control variables with \emph{known positive partial effects} $d\geq 0$ on $y_i$, $w_i\in\mathbb{R}^{\mathcal D_w}$ are control variables with unrestricted partial effects $c$ and $\varepsilon_i$ is the error term.  The ordinary least squares estimator $(\hat b,\hat d^{\prime})^{\prime}$ satisfies \eqref{anorm} under standard assumptions on the linear regression model. 
\end{example}

\subsection{Implementation}

For a consistent covariance matrix estimator $\widehat\Sigma$, \eqref{anorm} suggests the following large-sample distributional approximation consistent with \eqref{Limit Exp} and \eqref{block correlation mtx}:
\begin{equation}
\diag\left(\widehat\Sigma\right)^{-1/2}\sqrt{n}\left(\begin{array}{c}
\widehat b \\
\widehat d
\end{array}\right)\overset{a}\sim \mathcal{N}\left(\left(\begin{array}{c}
\beta \\
\delta
\end{array}\right), 
\Omega\right), \label{finite sample dist approx}
\end{equation}
where $\beta=\sqrt{n}b/\sqrt{\Sigma_{bb}}$, $\delta=\diag(\Sigma_{dd})^{-1/2}\sqrt{n}d$ and $\Omega=\diag(\Sigma)^{-1/2}\Sigma\diag(\Sigma)^{-1/2}$.  Note, however, that $\Omega$ is not typically known in practice but can be consistently estimated by $\widehat\Omega=\diag(\widehat\Sigma)^{-1/2}\widehat\Sigma\diag(\widehat\Sigma)^{-1/2}$.  Let $\hat{\delta}^{(s)}=\diag(\widehat\Sigma_{dd}^{(s)})^{-1/2}\sqrt{n}\hat{d}^{(s)}$, $\hat s^*$ denote the subset of the set of indices $\{1,\ldots,k\}$ that maximizes $\widehat\Omega_{bd^{(s)}}\widehat\Omega_{d^{(s)}d^{(s)}}^{-1}\widehat\Omega_{d^{(s)}b}$ amongst all subsets of indices $s\subseteq\{1,\ldots,k\}$ such that the elements of $\widehat\Omega_{bd^{(s)}}\widehat\Omega_{d^{(s)}d^{(s)}}^{-1}$ are non-negative and $(\hat s_1^*,\hat s_2^*)$ denote the subsets of the set of indices $\{1,\ldots,k\}$ that maximize ${\widehat\Omega_{\beta \delta^{(s_1)}} \widehat\Omega_{ \delta^{(s_1)} \delta^{(s_1)}}^{-1}\widehat\Omega_{ \delta^{(s_1)}\beta}}$ and ${\widehat\Omega_{\beta \delta^{(s_2)}} \widehat\Omega_{ \delta^{(s_2)} \delta^{(s_2)}}^{-1}\widehat\Omega_{ \delta^{(s_2)}\beta}}$ amongst all subsets of indices $s_1,s_2\subseteq\{1,\ldots,k\}$ such that the elements of $\widehat\Omega_{\beta \delta^{(s_1)}} \widehat\Omega_{ \delta^{(s_1)} \delta^{(s_1)}}^{-1}$ are non-negative and the elements of $\widehat\Omega_{\beta \delta^{(s_2)}} \widehat\Omega_{ \delta^{(s_2)} \delta^{(s_2)}}^{-1}$ are non-positive.

The distributional approximation in \eqref{finite sample dist approx} and the availability of the consistent estimator $\widehat\Omega$ suggest that we can use
\begin{gather}
{CI}_{u,n}\left(\hat b,\hat d;\widehat\Sigma\right)=\frac{\sqrt{\widehat\Sigma_{bb}}}{\sqrt{n}}\widehat{CI}_u^*\left(\frac{\sqrt{n}\hat{b}}{\sqrt{\widehat \Sigma_{bb}}},\diag(\widehat\Sigma_{dd})^{-1/2}\sqrt{n}\hat{d};\widehat\Omega\right) \notag \\
=\left[\hat b-\frac{\widehat\Sigma_{bb}}{\sqrt{n}}\min\left\{z_{1-\alpha+\gamma},\widehat\Omega_{bd^{(\hat s^*)}}\widehat\Omega_{d^{(\hat s^*)}d^{(\hat s^*)}}^{-1}\hat{\delta}^{(\hat s^*)}+c(\widehat\Omega_{bd^{(\hat s^*)}}\widehat\Omega_{d^{(\hat s^*)}d^{(\hat s^*)}}^{-1}\widehat\Omega_{d^{(\hat s^*)b}})\right\},\infty\right) \label{1-sided finite sample CI}
\end{gather}
and
\begin{align}
&CI_{t,n}\left(\widehat b,\widehat d;\widehat\Sigma\right)=\frac{\sqrt{\widehat\Sigma_{bb}}}{\sqrt{n}}\widehat{CI}_t^*\left(\frac{\sqrt{n}\widehat{b}}{\sqrt{\widehat \Sigma_{bb}}},\diag(\widehat\Sigma_{dd})^{-1/2}\sqrt{n}\widehat{d};\widehat\Omega\right) \notag \\
&=\left[\hat b-\frac{\widehat\Sigma_{bb}}{\sqrt{n}}\min\left\{z_{1-\frac{\alpha-\gamma}{2}},\widehat\Omega_{bd^{(\hat s_1^*)}}\widehat\Omega_{d^{(\hat s_1^*)}d^{(\hat s_1^*)}}^{-1}\hat{\delta}^{(\hat s_1^*)}+c_\ell\left(\widehat{\tilde\Omega}^{(\hat s_1^*,\hat s_2^*)}\right)\right\},\right. \notag\\
&\qquad \left. \hat b+\frac{\widehat\Sigma_{bb}}{\sqrt{n}}\min\left\{z_{1-\frac{\alpha-\gamma}{2}},-\widehat\Omega_{bd^{(\hat s_2^*)}}\widehat\Omega_{d^{(\hat s_2^*)}d^{(\hat s_2^*)}}^{-1}\hat{\delta}^{(\hat s_2^*)}+c_u\left(\widehat{\tilde\Omega}^{(\hat s_1^*,\hat s_2^*)}\right)\right\}\right] \label{2-sided finite sample CI}
\end{align}
as upper one-sided and two-sided CIs for the parameter $b$, where $\widehat{CI}_u^*(\cdot)$ and $\widehat{CI}_t^*(\cdot)$ are defined in Algorithms One-Sided* and Two-Sided*.  The theoretical results of the following section formally confirm that these CIs attain uniformly correct asymptotic coverage under weak conditions.

\subsection{Asymptotic Properties} \label{sec:asymptotics}

We now present theoretical results ensuring the uniformly correct asymptotic coverage of both the one- and two-sided finite-sample CIs defined in \eqref{1-sided finite sample CI} and \eqref{2-sided finite sample CI}, as well as a uniform upper bound on their asymptotic coverage, under a set of widely-applicable sufficient conditions on the parameter space.  In particular, let the parameter $\lambda$ index the true distribution of the observations used to construct the CIs and decompose $\lambda$ as follows:~$\lambda=(b,d,\Sigma,F)$, where $b$ is the scalar parameter of interest, $d$ is the nuisance parameter known to have all elements greater than zero, $\Sigma$ is the asymptotic variance corresponding to the parameter estimator $(\hat b_n,\hat d_n^{\prime})^{\prime}$ used by the researcher and $F$ is a (potentially) infinite-dimensional parameter that, along with $(b,d)$, determines the distribution of the observed data.  We assume that we have a consistent estimator $\widehat\Sigma_n$ of $\Sigma$ at our disposal.

The parameter space $\Lambda$ for $\lambda$ is defined to include parameters $\lambda=(b,d,\Sigma,F)$ such that for some finite $\kappa>0$, the following conditions hold:

(i) $b\in\mathbb{R}$ and $d\in\mathbb{R}_+^k$ for some positive integer $k$;

(ii) $\Sigma\in \Phi$, $\lambda_{\min}(\Sigma)\geq \kappa$ and $\lambda_{\max}(\Sigma)\leq \kappa^{-1}$, where $\Phi$ denotes the set of all positive definite covariance matrices.

In addition, under any sequence of parameters $\{\lambda_{n,\mathfrak{b},\mathfrak{d},\Sigma^*}=(b_{n,\mathfrak{b}},d_{n,\mathfrak{d}},\Sigma_{n,\Sigma^*},F_{n,\mathfrak{b},\mathfrak{d},\Sigma^*}):n\geq 1\}$ in $\Lambda$ such that
\begin{gather}
\sqrt{n}(b_{n,\mathfrak{b}},d_{n,\mathfrak{d}})\rightarrow (\mathfrak{b},\mathfrak{d}), \label{bd drifting seq} \\
\Sigma_{n,\Sigma^*}\rightarrow \Sigma^* \label{covariance drifting seq}
\end{gather}
for some $(\mathfrak{b},\mathfrak{d},\Sigma^*)\in \mathbb{R}_{\infty}\times \mathbb{R}_{+,\infty}^k\times \Phi$, the following remaining conditions hold:

(iii) $\widehat\Sigma_n$ exists and $\lambda_{\min}(\widehat\Sigma_n)>0$ with probability 1 for all $n\geq 1$ and $\widehat\Sigma_n\overset{p}\longrightarrow \Sigma^*$;

(iv) $\sqrt{n}(\hat b_n-b_{n,\mathfrak{b}},\hat d_n^{\prime}-d_{n,\mathfrak{d}}^{\prime})^{\prime}\overset{d}\longrightarrow \mathcal{N}(0,\Sigma^*)$;

(v) for any sequence $\{\lambda_{n,\mathfrak{b},\mathfrak{d},\Sigma^*}\}$ in $\Lambda$ and any subsequence $\{s_n:n\geq 1\}$ of $\{n:n\geq 1\}$ for which \eqref{bd drifting seq}--\eqref{covariance drifting seq} hold along the subsequence, conditions (iii)--(iv) also hold along the subsequence.

In conjunction with a particular model, parameter estimator $(\hat b_n,\hat d_n^{\prime})^{\prime}$ and covariance matrix estimator $\widehat\Sigma_n$, this definition of the parameter space $\Lambda$ effectively serves as a set of high-level assumptions on the underlying DGP.  More specifically, (i)--(ii) are standard parameter space assumptions while (iii)--(v) can typically be verified under standard dependence and moment conditions on the underlying data via laws of large numbers and central limit theorems.  We refer the interested reader to Appendix \ref{sec:reg_space} for details in the context of the standard linear regression model.

With the relevant parameter space defined, we may now state the main theoretical result of this paper that establishes lower and upper bounds on the uniform asymptotic coverage probability of the CIs we propose.

\begin{theorem} \label{thm:uniform coverage}
For $\alpha\in (0,1/2)$ and $\gamma \in (0,\alpha)$,
\[\liminf_{n\rightarrow\infty}\text{ }\inf_{\lambda\in\Lambda}P_\lambda\left(b\in CI_{\cdot,n}(\hat b_n,\hat d_n;\widehat\Sigma_n)\right)\geq 1-\alpha\]
and
\[\limsup_{n\rightarrow\infty} \text{ }\sup_{\lambda\in\Lambda}P_\lambda\left(b\in CI_{\cdot,n}(\hat b_n,\hat d_n;\widehat\Sigma_n)\right)\leq 1-\alpha+\gamma,\]
where $CI_{\cdot,n}(\cdot)$ is equal to either $CI_{u,n}(\cdot)$ or $CI_{t,n}(\cdot)$. 
\end{theorem}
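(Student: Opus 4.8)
The plan is a standard subsequencing argument for uniform asymptotics: it suffices to verify both bounds along every sequence of parameters in $\Lambda$ satisfying \eqref{bd drifting seq}--\eqref{covariance drifting seq} (for such a sequence, conditions (iii)--(v) are available), since a failure of either bound would, after passing to a suitable subsequence using compactness of $\mathbb{R}_{\infty}\times\mathbb{R}_{+,\infty}^{k}$ and of the admissible set of $\Sigma$, produce such a sequence along which it fails. So fix a drifting sequence with limits $(\mathfrak{b},\mathfrak{d},\Sigma^{*})$ and relabel the (sub)sequence as $n$. Put $\Omega^{*}:=\diag(\Sigma^{*})^{-1/2}\Sigma^{*}\diag(\Sigma^{*})^{-1/2}$, a positive-definite correlation matrix by the eigenvalue bounds. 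By (iii)--(v), $\widehat{\Sigma}_{n}\overset{p}{\to}\Sigma^{*}$, hence $\widehat{\Omega}_{n}\overset{p}{\to}\Omega^{*}$, and $\sqrt{n}(\hat{b}_{n}-b_{n},\hat{d}_{n}-d_{n})\overset{d}{\to}\mathcal{N}(0,\Sigma^{*})$, so the studentized, recentered statistics $\zeta_{b,n}:=\sqrt{n}(\hat{b}_{n}-b_{n})/\widehat{\Sigma}_{bb,n}^{1/2}$ and $\zeta_{d,n}^{(s)}:=\diag(\widehat{\Sigma}_{dd,n}^{(s)})^{-1/2}\sqrt{n}(\hat{d}_{n}^{(s)}-d_{n}^{(s)})$ satisfy $(\zeta_{b,n},\zeta_{d,n})\overset{d}{\to}\mathcal{N}(0,\Omega^{*})$, jointly with $\widehat{\Omega}_{n}\overset{p}{\to}\Omega^{*}$.

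The upper bound is immediate. Since $\min\{z_{1-\alpha+\gamma},\cdot\}\le z_{1-\alpha+\gamma}$ and $\min\{z_{1-(\alpha-\gamma)/2},\cdot\}\le z_{1-(\alpha-\gamma)/2}$, we have $CI_{u,n}\subseteq[\hat{b}_{n}-(\widehat{\Sigma}_{bb,n}/n)^{1/2}z_{1-\alpha+\gamma},\infty)$ and $CI_{t,n}\subseteq\hat{b}_{n}\pm(\widehat{\Sigma}_{bb,n}/n)^{1/2}z_{1-(\alpha-\gamma)/2}$; using $\zeta_{b,n}\overset{d}{\to}\mathcal{N}(0,1)$, the coverage of each containing interval converges to $\Phi(z_{1-\alpha+\gamma})=1-\alpha+\gamma$ and $2\Phi(z_{1-(\alpha-\gamma)/2})-1=1-\alpha+\gamma$, respectively, giving $\limsup_{n}\sup_{\lambda}P_{\lambda}(b\in CI_{\cdot,n})\le1-\alpha+\gamma$.

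For the lower bound I would first discard the omitted-variable-bias terms. Writing $\hat{\delta}_{n}^{(s)}=\zeta_{d,n}^{(s)}+\diag(\widehat{\Sigma}_{dd,n}^{(s)})^{-1/2}\sqrt{n}d_{n}^{(s)}$ and using $d_{n}\ge0$ with the feasibility sign constraints, $\widehat{\Omega}_{bd^{(\hat{s}^{*})}}\widehat{\Omega}_{d^{(\hat{s}^{*})}d^{(\hat{s}^{*})}}^{-1}\diag(\cdot)^{-1/2}\sqrt{n}d_{n}^{(\hat{s}^{*})}\ge0$ (and analogously for the term in the upper endpoint of $CI_{t,n}$, by the nonpositivity constraint); since $\min\{z,a+\beta\}\ge\min\{z,a\}$ for $\beta\ge0$, deleting these terms raises the lower endpoint of $CI_{u,n}$ and shrinks $CI_{t,n}$, so coverage of $CI_{\cdot,n}$ is at least that of the resulting ``no-bias'' interval, whose endpoints no longer involve $(\mathfrak{b},\mathfrak{d})$. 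For the no-bias one-sided interval the coverage equals $P\big(\zeta_{b,n}\le\min\{z_{1-\alpha+\gamma},\widehat{\Omega}_{bd^{(\hat{s}^{*})}}\widehat{\Omega}_{d^{(\hat{s}^{*})}d^{(\hat{s}^{*})}}^{-1}\zeta_{d,n}^{(\hat{s}^{*})}+c(\widehat{\omega}_{\hat{s}^{*}})\}\big)$ with $\widehat{\omega}_{s}:=\widehat{\Omega}_{bd^{(s)}}\widehat{\Omega}_{d^{(s)}d^{(s)}}^{-1}\widehat{\Omega}_{d^{(s)}b}$; if $\hat{s}^{*}$ were a fixed subset $s$ feasible at $\Omega^{*}$, routine Slutsky and continuous-mapping arguments (using continuity of $c$ from Proposition \ref{prop:c existence} and $\widehat{\omega}_{s}\overset{p}{\to}\omega_{s}^{*}$) would give the limit $P\big(\zeta_{b}\le\min\{z_{1-\alpha+\gamma},\Omega_{bd^{(s)}}^{*}(\Omega_{d^{(s)}d^{(s)}}^{*})^{-1}\zeta_{d}^{(s)}+c(\omega_{s}^{*})\}\big)=1-\alpha$, because $(\zeta_{b},\Omega_{bd^{(s)}}^{*}(\Omega_{d^{(s)}d^{(s)}}^{*})^{-1}\zeta_{d}^{(s)})$ has covariance $\left(\begin{smallmatrix}1 & \omega_{s}^{*}\\ \omega_{s}^{*} & \omega_{s}^{*}\end{smallmatrix}\right)$, exactly the law in \eqref{c-function def}. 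The two-sided case is analogous: after dropping both bias terms the coverage matches the defining probability in \eqref{cl_cc}--\eqref{c_u def}, with $(\zeta_{b},\Omega_{bd^{(s_{1})}}^{*}(\cdot)^{-1}\zeta_{d}^{(s_{1})},\Omega_{bd^{(s_{2})}}^{*}(\cdot)^{-1}\zeta_{d}^{(s_{2})})$ in the role of $(Z_{1},\tilde{Z}_{2},\tilde{Z}_{3})$ from \eqref{Z-dist} (the third entry of $\tilde{\Omega}^{(s_{1},s_{2})}$ supplying $\omega_{23}$), so the limiting coverage is $1-\alpha$ by construction; here I would pass to the limit through $c_{u}$ and $c_{\ell}$ using Proposition \ref{prop:c_u existence}, Proposition \ref{prop:c_ell} and Lemma \ref{lem:existence and uniqueness of c-tilde}, including the boundary cases $\omega_{12}=0$ or $\omega_{13}=0$ where $\tilde{c}$ is defined as a limit.

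The remaining --- and main --- obstacle is that $\hat{s}^{*}$ (and $(\hat{s}_{1}^{*},\hat{s}_{2}^{*})$) is data-dependent; it depends, however, \emph{only through} $\widehat{\Omega}_{n}$, which is consistent for the \emph{nonrandom} $\Omega^{*}$. Consequently, any subset infeasible at $\Omega^{*}$ (some coordinate of the limiting $\Omega_{bd^{(s)}}^{*}(\Omega_{d^{(s)}d^{(s)}}^{*})^{-1}$ strictly negative) is selected with probability $\to0$, so with probability $\to1$ $\hat{s}^{*}$ lies in the finite set of subsets feasible at $\Omega^{*}$, and since $\widehat{\omega}_{s}\overset{p}{\to}\omega_{s}^{*}$ for each $s$, $\hat{s}^{*}$ asymptotically attains $\omega_{\max}^{*}:=\max_{s\text{ feasible at }\Omega^{*}}\omega_{s}^{*}$. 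Passing to a further subsequence along which $P(\hat{s}^{*}=s)\to p_{s}$ for each of the finitely many $s$, I would decompose the no-bias coverage-failure probability as $\sum_{s}P(b\notin CI_{u,n}^{\mathrm{nb}},\hat{s}^{*}=s)$ and argue each summand tends to $\alpha\,p_{s}$: on $\{\hat{s}^{*}=s\}$ --- an event determined by $\widehat{\Omega}_{n}$, which converges to a constant --- the critical values and coefficients may be replaced by their $\Omega^{*}$-evaluations up to $o_{p}(1)$, and the resulting limit depends on $s$ only through $\omega_{s}^{*}=\omega_{\max}^{*}$; summing and using $\sum_{s}p_{s}=1$ yields $\limsup$ failure $\le\alpha$, hence coverage $\ge1-\alpha$, which is the desired contradiction (and the two-sided case proceeds identically). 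The delicate point within this step is that $\Omega^{*}$ may sit on a boundary between selection regions (ties among maximizing feasible subsets), in which case $\{\hat{s}^{*}=s\}$ need not be asymptotically independent of the Gaussian limit; this is reconciled by observing that all tied subsets share the common value $\omega_{\max}^{*}$, hence the same limiting critical value(s), so the per-$s$ limits in the decomposition are unaffected by how the tie resolves --- with, if needed, the asymptotic independence of $\widehat{\Sigma}_{n}$ from $\sqrt{n}(\hat{b}_{n}-b_{n},\hat{d}_{n}-d_{n})$ implied by conditions (iii)--(v) in the concrete models of interest.
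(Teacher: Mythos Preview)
Your argument is essentially the paper's: subsequencing to drifting sequences, passing to the Gaussian limit via (iii)--(v) and Slutsky, and invoking \eqref{c-function def} and \eqref{cl_cc} together with the continuity results for $c$, $\tilde c$, and $c_u$. The upper bound is handled identically. There are two organizational differences worth flagging. First, you drop the nonnegative bias term $\widehat{\Omega}_{bd^{(\hat s^*)}}\widehat{\Omega}_{d^{(\hat s^*)}d^{(\hat s^*)}}^{-1}\diag(\widehat{\Sigma}_{dd}^{(\hat s^*)})^{-1/2}\sqrt{n}\,d_n^{(\hat s^*)}\ge 0$ at the finite-sample level---its sign holds by the very construction of $\hat s^*$ at $\widehat{\Omega}_n$---which neatly sidesteps the paper's explicit case split on whether $\|\mathfrak{d}^{(s^*)}\|$ is finite or infinite. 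Second, for the data-dependent selection the paper simply asserts $s^*(\widehat{\Omega}_{k_n})\overset{p}{\to}s^*(\Omega^*)$ (implicitly taking the maximizer to be unique), whereas you decompose over feasible subsets and argue tied maximizers share $\omega_{\max}^*$ and hence the same limiting critical value and the same joint law of $(Z_1,\tilde Z_2)$. Both treatments are somewhat informal at exact ties or boundary feasibility: the paper's convergence claim need not hold there, and your final step $P(b\notin CI^{\mathrm{nb}}_{u,n},\hat s^*=s)\to \alpha\,p_s$ requires more than Slutsky since $\{\hat s^*=s\}$ is a discontinuous function of $\widehat{\Omega}_n$ at a tie. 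Your appeal to ``asymptotic independence'' is the right instinct---$\widehat{\Omega}_n$ converges to a constant, so the selection is asymptotically non-random in the joint limit---and is on the same footing as the paper's unproved convergence claim.
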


These results show not only that our proposed CIs have correct asymptotic coverage in a strong sense but also that by choosing $\gamma$ to be ``small'' reduces how conservative the CIs can be.  However, there is a tradeoff in the choice of $\gamma$:~although a smaller $\gamma$ leads to CIs that are closer to being similar across the parameter space, it also allows for less length gains when the elements of $d$ are close or equal to zero.

\section{Empirical Application of Sign-Restricted Regression} \label{EA}
For our proposed CIs to to be able to improve upon the length of standard CIs in the standard linear regression context, the researcher must know the sign of at least one of the control variables' coefficients and the estimator of the coefficient of interest must be (asymptotically) correlated with the estimator of the sign-restricted control variables'  coefficients. Both conditions are often satisfied in the context of treatment effect regressions for cross-cutting/factorial designs in field experiments. Take, for example, the 2$\times$2 factorial design:
\begin{equation} \label{interaction}
	Y = \alpha_0 + \alpha_1 T_1 + \alpha_2 T_2 + \alpha_3 T_1 \times T_2 + u,
\end{equation}
where $E[u|T_1,T_2] = 0$ and $T_1$ and $T_2$ denote two independent, randomly assigned treatments with $T_i \in \{0,1\}$ for $i \in \{1,2\}$. Here, $\alpha_1$ and $\alpha_2$ are the treatment effects of $T_1$ and $T_2$ ``relative to a business-as-usual counterfactual'' \citep*{Muralidharan19} and $\alpha_3$ is the ``interaction effect'', i.e., the treatment effect of jointly providing both treatments minus the sum of the treatment effects of $T_1$ and $T_2$.\footnote{Using the potential outcomes notation, where $Y_{t_1,t_2}$ is the potential outcome of $Y$ when $T_1 = t_1$ and $T_2 = t_2$, the three treatment effects can be written as $\alpha_1 = E[Y_{1,0} - Y_{0,0}]$, $\alpha_2 = E[Y_{0,1} - Y_{0,0}]$, and $\alpha_3 =  E[Y_{1,1} - Y_{0,0}] - (E[Y_{1,0} - Y_{0,0}] + E[Y_{0,1} - Y_{0,0}])$.}  If $Y$ is a ``positive'' outcome, it is often reasonable to assume that $\alpha_1 \geq 0$ and $\alpha_2 \geq 0$.  For example, a research ethics committee is unlikely to clear an experimental design if this is not the case. Furthermore, the OLS estimators of the three treatment effects are likely to be highly correlated in this setting.  For example, if each treatment is assigned with probability 1/2 and the error term $u$ is conditionally homoskedastic, then the asymptotic correlation matrix of $\sqrt{n}(\hat{\alpha}_1,\hat{\alpha}_2,\hat{\alpha}_3)'$ is given by
\[
	\left[ \begin{array}{ccc} 1 & 1/2 & -1/\sqrt{2} \\ 1/2 & 1 & -1/\sqrt{2}  \\ -1/\sqrt{2}  & -1/\sqrt{2}  & 1  \end{array} \right].
\]

Any of the three treatment effects may be of interest and, under the assumption that $\alpha_1 \geq 0$ and $\alpha_2 \geq 0$, it is reasonable to be interested in upper one-sided CIs for $\alpha_1$ and $\alpha_2$ and a two-sided CI for $\alpha_3$. The above correlation structure implies that our upper one-sided CIs for ${\alpha}_1$ and ${\alpha}_2$ have the potential to improve upon the length of standard upper one-sided CIs. Similarly, our two-sided CI for $\alpha_3$ has the potential to improve upon the length of the standard two-sided CI through a smaller upper bound.

Sometimes researchers are interested in the following alternative specification of the above regression:
\begin{equation} \label{both}
	Y = \alpha_0 + \alpha_1 (T_1 - T_1 \times T_2 ) + \alpha_2 (T_2 - T_1 \times T_2 ) + \alpha^*_3 T_1 \times T_2 + u,
\end{equation}
where $\alpha^*_3 = \alpha_3 - \alpha_1 - \alpha_2$ is the effect of ``both'' treatments provided jointly, relative to a business-as-usual counterfactual.\footnote{That is $\alpha_3^* = E[Y_{1,1} - Y_{0,0}]$.} This regression again results in high correlation between OLS estimators:~under the same conditions as in the example above, the asymptotic correlation matrix of $\sqrt{n}(\hat{\alpha}_1,\hat{\alpha}_2,\hat{\alpha}^*_3)'$ is given by
\[
	\left[ \begin{array}{ccc} 1 & 1/2 & 1/{2} \\ 1/2 & 1 & 1/{2}  \\ 1/{2}  & 1/{2}  & 1  \end{array} \right].
\]
In this case, our two-sided CI for $\alpha_3^*$ has the potential to improve upon the length of the standard two-sided CI through a larger lower bound.

To illustrate the usefulness of our proposed CIs, we apply them in the context of a field experiment where a 2$\times$2 factorial design was used. In particular, we revisit \cite*{blattman2017} (BJS) who recruited 999 poor young men in Liberia who exhibited ``high rates of violence, crime, and other antisocial behaviors'' to participate in an experiment. The two treatments are ``therapy'', an eight-week program of group cognitive behavior therapy, and ``cash'', a \$200 grant corresponding to roughly three months' wages. In simple terms, the main research question is whether ``therapy'' and ``cash'' can help reduce violent, criminal, and other antisocial behaviors. The hypothesized channels are improved noncognitive skills such as self-control (``therapy'') and an increase in legal work (``cash''). BJS conducted two follow-up surveys, the first 2--5 weeks and the second 12--13 months after the intervention to elicit ``short-term'' and ``long-term'' impacts, respectively. 

\begin{table}[h!]											
\begin{center}									
\caption{Empirical results}			
\label{ES}									
\begin{tabular}{c|rcccccr}								
\hline								
\hline			
&  \multicolumn{1}{c}{$ \hat{b} $}  & SE & SSCI & (E)L & SCI & (E)L  &  \multicolumn{1}{c}{Ratio}  \\ 
\hline
T    & $0.0829$ & $0.0929$ & $[-0.0149,\infty)$ & $0.0978$ & $[-0.0700,\infty)$ & $0.1529$ & $0.6395$  \\ 
C    & $-0.1316$ & $0.0969$ & $[-0.2959,\infty)$ & $0.1643$ & $[-0.2910,\infty)$ & $0.1594$ & $1.0307$   \\ 
B    & $0.2468$ & $0.0883$ & $[0.0988,0.4238]$ & $0.3250$ & $[0.0737,0.4198]$ & $0.3462$ & $0.9390$ \\ 
I    & $0.2955$ & $0.1255$ & $[0.0439,0.4101]$ & $0.3662$ & $[0.0495,0.5415]$ & $0.4920$ & $0.7443$  \\ 
\hline								
\end{tabular}
\end{center}						
\end{table}

Table \ref{ES} reproduces the results concerning the treatments' long-term impact on a summary index of antisocial behaviors (times minus one) (cf.\ the first row of Panel B of Table 2 in BJS). The table includes one of the main findings of BJS:~while the two treatments do not have statistically significant long-term effects in isolation, they do have a \emph{joint} positive long-term effect on the index of antisocial behaviors. Column 1 ($\hat{\beta}$) shows the OLS point estimates for ``therapy'' (T), ``cash'' (C), ``both'' (B), and ``interaction'' (I) as defined above and column 2 (SE) reports the corresponding (heteroskedasticity-robust) standard errors. Note that BJS only consider the specification given in equation \eqref{both}, i.e., they only estimate the effect of ``both'' treatments and not the ``interaction'' effect.\footnote{\label{controls}In fact, BJS consider the specification given in equation \eqref{both} augmented by a set of additional controls. For the purpose of this analysis, we take the signs of these additional controls as unknown. See BJS for more information on the additional controls.} Column 3 (SSCI\textemdash Simple and Short Confidence Interval) shows our proposed CIs for $\alpha = 0.05$, which are upper one-sided for T and C and two-sided for B and I, when assuming that the treatment effects of ``therapy'' and ``cash'' are \textit{a priori} known to be nonnegative. They are constructed using Algorithms One-Sided$^*$ and Two-Sided$^*$ in combination with the response surface approximations.\footnote{The estimated (asymptotic) correlation matrices for the estimator of the effects of i) T, C, and B and ii) T, C, and I are given by
\[
\left[ \begin{array}{rrr} 1.0000 &   0.5238  &  0.6104 \\
    0.5238   & 1.0000 &   0.5543\\
    0.6104   &  0.5543  &   1.0000 \end{array} \right] \text{ and }
  \left[  \begin{array}{rrr} 1.0000 &   0.5238 &  -0.7154 \\
    0.5238 &   1.0000  & -0.7699\\
   -0.7154&   -0.7699 &   1.0000 \end{array} \right], 
\]
respectively. We augmented the corresponding regressions by the same set of controls as BJS, cf.\ footnote \ref{controls}.} Column 5 (SCI\textemdash Standard Confidence Interval) shows the corresponding standard CIs. Columns 4 and 6 (both (E)L) give the (``excess'') lengths of SSCI and SCI, where the ``excess'' length of one-sided CIs here is computed as the difference between $\hat b$ and the CI's lower bound.\footnote{We write ``excess'' in quotes to emphasize the fact that this is not equal to the true excess length that cannot be computed here in the absence of knowledge of the true value of the regression coefficients.}  Column 7 (Ratio) computes the ratio of the (``excess'') length of SSCI relative to SCI.  We find that, while our proposed CI is marginally longer than the standard CI for C\textemdash its ``excess'' length reaching the bound on expected excess length increase of $\sim 3\%$, it is much shorter for T, B, and I.

\section{Calibrated Simulations for Sign-Restricted Regression} \label{Sims}

To illustrate the finite-sample properties of our proposed CIs, we perform a Monte Carlo study calibrated to the BJS factorial design regression of the previous section. In particular, we create 10,000 bootstrap samples by drawing with replacement from the sample of $n = 947$ men underlying the regression results in Table \ref{ES}. In each bootstrap sample, we estimate the regressions \eqref{interaction} and \eqref{both}. Since the expected value of the treatment effect of ``cash'' under the empirical distribution is equal to the point estimate in the original sample, -0.1316, it is outside of the sign-restricted  parameter space $\alpha_2\geq 0$.  We therefore recenter the estimates of the treatment effect of ``cash'' over the bootstrap samples to have mean zero (by adding 0.1316). For each bootstrap sample, we construct our proposed CIs, using Algorithms One-Sided$^*$ and Two-Sided$^*$ in combination with the response surface approximations, standard CIs, the (excess) length of each CI and whether they cover the true parameter value, i.e., the corresponding (re-centered) point estimate in the original sample.  All CIs are constructed using standard heteroskedasticity-robust variance-covariance matrix estimators computed within each bootstrap sample.  Since the empirical distribution from which the bootstrap samples are drawn is not normally distributed, this simulation exercise captures the effect on CI coverage of departures from the large sample normal means problem of Section \ref{NMLSP}.

\begin{comment}
{\color{red}
\begin{table}[h!]											
\begin{center}									
\caption{Monte Carlo results}			
\label{SS}									
\begin{tabular}{cc|rrrrrr}								
\hline								
\hline			
& & \multicolumn{1}{c}{T} & \multicolumn{1}{c}{C} & \multicolumn{1}{c}{B}  & \multicolumn{1}{c}{I}  & \multicolumn{1}{c}{B0} & \multicolumn{1}{c}{I0}  \\ 
\hline
\multirow{2}{*}{CP}  &  SSCI  & 94.81 &    94.99 & 95.24 & 95.09 & 94.81   & 94.45 \\ 
 				  & SCI & 94.81 &    94.61& 94.84 & 94.70 & 94.84   & 94.70 \\ 
				  \hline
\multirow{3}{*}{E(E)L}  & SSCI & 0.14 &      0.16 & 0.35  & 0.44 & 0.33    & 0.41   \\
 				 &  SCI & 0.15 &      0.16 & 0.36  & 0.50 & 0.36    & 0.50   \\
				& Ratio  & 92.24 & 100.70 &97.25 & 88.72  & 91.64 & 81.36 \\ 
\hline								
\end{tabular}
\end{center}						
\end{table}
}
\end{comment}

\begin{table}[h!]											
\begin{center}									
\caption{Monte Carlo results}			
\label{SS}									
\begin{tabular}{cc|rrrrrr}								
\hline								
\hline			
& & \multicolumn{1}{c}{T} & \multicolumn{1}{c}{C} & \multicolumn{1}{c}{B}  & \multicolumn{1}{c}{I}  & \multicolumn{1}{c}{B0} & \multicolumn{1}{c}{I0}  \\ 
\hline
\multirow{2}{*}{CP}  &  SSCI  &    94.26  &  94.48  &  95.34  &  94.80 & 95.07  &  93.97 \\
				& SCI &    94.30 &   93.96 &   94.78 &   94.49 & 94.78   & 94.49 \\
				\hline
\multirow{3}{*}{E(E)L}  & SSCI & 0.14  &  0.16  &  0.35 &  0.45 & 0.33  &  0.41 \\
   				   &  SCI & 0.15   & 0.16 &   0.36  &  0.50 & 0.36   & 0.50\\
   		  	          & Ratio  & 0.9303  &  1.0071   & 0.9743 &   0.8920 & 0.9230  &  0.8241 \\
\hline								
\end{tabular}
\end{center}						
\end{table}

Table \ref{SS} reports the coverage probability (CP) computed across bootstrap realizations of our proposed CIs and of the standard CIs for all four treatment effects, T, C, B, and I. Table \ref{SS} also reports the expected (excess) length (E(E)L) of these CIs across the bootstrap realizations. In addition to the above DGP, we also consider a modification where the true value of the treatment effect of ``therapy'' is set equal to zero (by subtracting the point estimate in the original sample, 0.0829, from the corresponding estimates in the bootstrap iterations). The corresponding results for the effect of ``both'' treatments and the ``interaction'' effect are given in the last two columns, B0 and I0. 

We observe that our proposed CIs have good finite sample coverage, comparable to that of the standard CIs, with little coverage distortion despite the non-normally distributed data.  In terms of expected (excess) length, most of our proposed CIs offer sizeable improvements over standard CIs, with expected (excess) length improvements of up to nearly 18\% for this particular data calibration.

\newpage

\appendix

\section{Technical Appendix} \label{sec:proofs}

\begin{proof}[Proof of Proposition \ref{prop:c existence}] Consider the function $f:[0,1)\times [0,z_{1-\gamma}]$ such that for $(Z_1,\tilde Z_2)$ defined in \eqref{c-function def},
\[f(\omega,c)=P(Z_1>\min\{z_{1-\alpha+\gamma},\tilde Z_2+c\})-\alpha.\]
For $\omega\in (0,1)$ and $c\in [0,z_{1-\gamma}]$,
\begin{gather*}
f(\omega,c)=\int_{-\infty}^\infty P(Z_1>\min\{z_{1-\alpha+\gamma},\tilde Z_2+c\}|\tilde Z_2=\tilde z_2)\frac{1}{\sqrt{\omega}}\phi(\tilde z_2/\sqrt{\omega})d\tilde z_2-\alpha \\
=\int_{-\infty}^\infty \Phi\left(\frac{\tilde z_2-\min\{z_{1-\alpha+\gamma},\tilde z_2+c\}}{\sqrt{1-\omega^2}}\right)\frac{1}{\sqrt{\omega}}\phi(\tilde z_2/\sqrt{\omega})d\tilde z_2-\alpha \\
=\int_{-\infty}^{z_{1-\alpha+\gamma}-c} \Phi\left(- \frac{c}{\sqrt{1-\omega^2}}\right)\frac{1}{\sqrt{\omega}}\phi(\tilde z_2/\sqrt{\omega})d\tilde z_2 +\int_{z_{1-\alpha+\gamma}-c}^\infty \Phi\left(\frac{\tilde z_2-z_{1-\alpha+\gamma}}{\sqrt{1-\omega^2}}\right)\frac{1}{\sqrt{\omega}}\phi(\tilde z_2/\sqrt{\omega})d\tilde z_2 -\alpha \\
=\Phi\left( -\frac{c}{\sqrt{1-\omega^2}}\right) \Phi \left( \frac{z_{1-\alpha+\gamma}-c}{\sqrt{\omega}}\right) +\int_{z_{1-\alpha+\gamma}-c}^\infty \Phi\left(\frac{\tilde z_2-z_{1-\alpha+\gamma}}{\sqrt{1-\omega^2}}\right)\frac{1}{\sqrt{\omega}}\phi(\tilde z_2/\sqrt{\omega})d\tilde z_2 -\alpha.
\end{gather*}
Clearly, $f(\omega,c)$ is continuously differentiable for all $\omega\in (0,1)$ and $c\in [0,z_{1-\gamma}]$.  In addition,
\[
	\frac{\partial f(\omega,c)}{\partial c} = -\frac{1}{\sqrt{1-\omega^2}} \phi\left(- \frac{c}{\sqrt{1-\omega^2}}\right) \Phi \left( \frac{z_{1-\alpha+\gamma}-c}{\sqrt{\omega}}\right) < 0
\]
for all $\omega\in (0,1)$ and $c\in [0,z_{1-\gamma}]$ since $\gamma\in(0,\alpha)$.

Finally, note that for any $\omega\in(0,1)$, there exists $c\in[0,z_{1-\gamma}]$  such that $f(\omega,c)=0$ since $f(\omega,\cdot)$ is continuously strictly decreasing,
\[f(\omega,0)=P(Z_1>\min\{z_{1-\alpha+\gamma},\tilde Z_2\})-\alpha >P(Z_1-\tilde Z_2>0)-\alpha=1/2-\alpha>0\] 
and
\[f(\omega,z_{1-\gamma})=P(Z_1>\min\{z_{1-\alpha+\gamma},\tilde Z_2+z_{1-\gamma}\})-\alpha\leq P(Z_1>z_{1-\alpha+\gamma}) - \alpha =-\gamma<0.\]
In conjunction with the fact that $c(0)=z_{1-\alpha}=\lim_{\omega\rightarrow 0}c(\omega)$, the statement of the proposition then follows from the implicit function theorem. 
\end{proof}

The next lemmata are used to prove Proposition \ref{prop:one-sided optimality}.

\begin{lemma} \label{inverse 3x3 matrix}
For conformable matrices $E$, $F$, $G$, $H$, $J$, and $K$, let 
\[
	X = \left[ \begin{array}{ccc} E & F & G \\ F' & H & J \\ G' & J' & K \end{array} \right].
\]
Then, assuming the relevant inverse matrices exist, we have
\[
	X^{-1} =  \left[ \begin{array}{ccc} E^{-1} + E^{-1}[FA^{-1}F' + U S^{-1}U']E^{-1} & -E^{-1}[F-US^{-1}B']A^{-1} & -E^{-1}US^{-1} \\ -A^{-1}[F'-BS^{-1}U']E^{-1} & A^{-1} + A^{-1} B S^{-1} B' A^{-1}& -A^{-1} B S^{-1} \\ 
	-S^{-1}U'E^{-1} &  -S^{-1} B' A^{-1}  & S^{-1} \end{array} \right],
\]
where $A = H - F'E^{-1}F$, $B = J - F'E^{-1}G$, $D = K-G'E^{-1}G$, $S = D - B'A^{-1}B$, and $U = G - FA^{-1}B$.
\end{lemma}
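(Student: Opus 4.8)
The plan is to compute $X^{-1}$ by applying the standard $2\times 2$ block-matrix inversion (Schur complement) formula twice in a nested fashion and then matching entries. First I would group the leading $2\times 2$ block of $X$, writing
\[
X = \left[\begin{array}{cc} P & Q \\ Q' & K \end{array}\right], \qquad P = \left[\begin{array}{cc} E & F \\ F' & H \end{array}\right], \qquad Q = \left[\begin{array}{c} G \\ J \end{array}\right],
\]
so that, using the symmetry of $X$,
\[
X^{-1} = \left[\begin{array}{cc} P^{-1} + P^{-1}QT^{-1}Q'P^{-1} & -P^{-1}QT^{-1} \\ -T^{-1}Q'P^{-1} & T^{-1} \end{array}\right], \qquad T = K - Q'P^{-1}Q,
\]
and applying the same formula to $P$ expresses $P^{-1}$ in terms of $E^{-1}$ and the Schur complement $A = H - F'E^{-1}F$ in the familiar way.

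Second, I would evaluate the two composite quantities that appear throughout: $P^{-1}Q$ and $T$. Multiplying out, and using $B = J - F'E^{-1}G$ to collapse the top block together with the definition $U = G - FA^{-1}B$, shows that $P^{-1}Q$ has top block $E^{-1}U$ and bottom block $A^{-1}B$ (equivalently $Q'P^{-1}$ has left block $U'E^{-1}$ and right block $B'A^{-1}$, by symmetry of $P$). Substituting this into $T = K - Q'P^{-1}Q = K - G'E^{-1}U - J'A^{-1}B$ and writing $J = B + F'E^{-1}G$ yields the cancellation $G'E^{-1}U + J'A^{-1}B = G'E^{-1}G + B'A^{-1}B$, hence $T = D - B'A^{-1}B = S$ with $D = K - G'E^{-1}G$. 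This identification of the outer Schur complement with $S$ is the one step requiring genuine algebra; everything else is bookkeeping.

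Third, with the blocks of $P^{-1}Q$ known and $T^{-1} = S^{-1}$, the block $-P^{-1}QT^{-1}$ has top block $-E^{-1}US^{-1}$ and bottom block $-A^{-1}BS^{-1}$, which are the $(1,3)$ and $(2,3)$ entries of the claimed formula; the $(3,1)$ and $(3,2)$ entries follow by transposition. For the top-left $2\times 2$ block I would expand $P^{-1} + P^{-1}QS^{-1}Q'P^{-1}$: the correction term is the $2\times 2$ array with blocks $E^{-1}US^{-1}U'E^{-1}$, $E^{-1}US^{-1}B'A^{-1}$, $A^{-1}BS^{-1}U'E^{-1}$, $A^{-1}BS^{-1}B'A^{-1}$, and adding it to the known form of $P^{-1}$ and regrouping produces exactly $E^{-1} + E^{-1}[FA^{-1}F' + US^{-1}U']E^{-1}$, $-E^{-1}[F - US^{-1}B']A^{-1}$, $-A^{-1}[F' - BS^{-1}U']E^{-1}$, and $A^{-1} + A^{-1}BS^{-1}B'A^{-1}$ in the four positions, matching the statement.

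I expect the main obstacle to be purely notational: keeping the two layers of Schur complements straight and avoiding sign errors in the $T = S$ cancellation. A fully mechanical alternative would be to multiply $X$ by the claimed inverse block by block and verify that the product is the identity; this sidesteps having to ``guess'' the quantities $U$ and $S$ but involves substantially more terms, so I would favor the nested-inversion route above. Existence of $E^{-1}$, $A^{-1}$, $P^{-1}$ and $T^{-1}=S^{-1}$ is precisely the stated hypothesis that the relevant inverse matrices exist, so no further regularity needs to be checked.
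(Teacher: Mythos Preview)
Your proposal is correct and follows precisely the approach the paper indicates: the paper's proof consists of the single sentence ``The proof follows from repeated application of the formula for blockwise inversion of a matrix,'' and your nested $2\times 2$ Schur-complement computation is exactly that repeated application carried out in detail. The identification $T=S$ and the block-by-block matching you describe are the full content of the argument.
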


\begin{proof} The proof follows from repeated application of the formula for blockwise inversion of a matrix.
\end{proof}

\begin{lemma} \label{Woodbury}
For conformable matrices $Y$ and $Z$, assuming the relevant inverse matrices exist, we have
\[
	(Y+Z)^{-1} = Y^{-1} - Y^{-1}Z(Y+Z)^{-1}.
\]
\end{lemma}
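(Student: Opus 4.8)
The plan is to verify the identity by a direct algebraic manipulation, using only the assumed existence of $Y^{-1}$ and $(Y+Z)^{-1}$. First I would insert the identity in the form $(Y+Z)(Y+Z)^{-1}$ into $Y^{-1}$, writing $Y^{-1} = Y^{-1}(Y+Z)(Y+Z)^{-1}$. Expanding the middle factor gives $Y^{-1}(Y+Z) = I + Y^{-1}Z$, so that
\[
	Y^{-1} = (I + Y^{-1}Z)(Y+Z)^{-1} = (Y+Z)^{-1} + Y^{-1}Z(Y+Z)^{-1}.
\]
Rearranging this display immediately yields $(Y+Z)^{-1} = Y^{-1} - Y^{-1}Z(Y+Z)^{-1}$, which is the claim. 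Equivalently, one can check the asserted formula by right-multiplying it by $Y+Z$: the right-hand side becomes $Y^{-1}(Y+Z) - Y^{-1}Z(Y+Z)^{-1}(Y+Z) = I + Y^{-1}Z - Y^{-1}Z = I$, and since $Y+Z$ is invertible this forces the asserted expression to equal $(Y+Z)^{-1}$.

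There is no substantive obstacle here; the only points requiring care are that every inverse appearing in the statement is assumed to exist (so each step above is legitimate) and that matrix multiplication is noncommutative, so the left-to-right order of the factors $Y^{-1}$, $Z$, and $(Y+Z)^{-1}$ must be preserved throughout. This identity is the standard push-through (Woodbury/Sherman--Morrison-type) manipulation, and it is presumably invoked together with Lemma \ref{inverse 3x3 matrix} in the proof of Proposition \ref{prop:one-sided optimality}.
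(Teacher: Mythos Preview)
Your proof is correct and is exactly the elementary push-through verification underlying the Woodbury identity; the paper's own proof is the one-line remark ``The proof follows directly from the Woodbury identity,'' so your argument simply spells out what the paper leaves implicit.
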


\begin{proof} The proof follows directly from the Woodbury identity. 
\end{proof}

\begin{lemma} \label{submatrix correspondence} Let $\delta^{(s)}$ and $\delta^{(-s)}$ be two arbitrary subvectors of $\delta$ such that $\delta = (\delta^{(s)},\delta^{(-s)})$, where the order of the elements is without loss of generality. Furthermore, let
\[
	\left[ \begin{array}{ccc} 1 & \Omega_{\beta \delta^{(s)}} & \Omega_{\beta \delta^{(-s)}} \\ \Omega_{\delta^{(s)} \beta} & \Omega_{\delta^{(s)} \delta^{(s)}} & \Omega_{\delta^{(s)} \delta^{(-s)}} \\ \Omega_{\delta^{(-s)} \beta} & \Omega_{\delta^{(-s)} \delta^{(s)}} & \Omega_{\delta^{(-s)} \delta^{(-s)}} \end{array} \right] \text{ and } 
	\left[ \begin{array}{ccc} \Omega^{\beta \beta} & \Omega^{\beta \delta^{(s)}} & \Omega^{\beta \delta^{(-s)}} \\ \Omega^{\delta^{(s)} \beta} & \Omega^{\delta^{(s)} \delta^{(s)}} & \Omega^{\delta^{(s)} \delta^{(-s)}} \\ \Omega^{\delta^{(-s)} \beta} & \Omega^{\delta^{(-s)} \delta^{(s)}} & \Omega^{\delta^{(-s)} \delta^{(-s)}} \end{array} \right]
\]
be conformable partitions of $\Omega$ and $\Omega^{-1}$, respectively. Then, we have
\begin{enumerate}
\item[(i)] $\Omega^{\beta \delta^{(s)}} - \Omega^{\beta \delta^{(-s)}} (\Omega^{\delta^{(-s)} \delta^{(-s)}})^{-1} \Omega^{\delta^{(-s)} \delta^{(s)}} = - (\Omega^{\beta \beta} - \Omega^{\beta \delta^{(-s)}} (\Omega^{\delta^{(-s)} \delta^{(-s)}})^{-1} \Omega^{\delta^{(-s)} \beta}) \Omega_{\beta \delta^{(s)}} \Omega_{\delta^{(s)} \delta^{(s)}}^{-1}$ and
\item[(ii)] $1 = (\Omega^{\beta \beta} - \Omega^{\beta \delta^{(-s)}} (\Omega^{\delta^{(-s)} \delta^{(-s)}})^{-1} \Omega^{\delta^{(-s)} \beta}) (1 - \Omega_{\beta \delta^{(s)}} \Omega_{\delta^{(s)} \delta^{(s)}}^{-1} \Omega_{\delta^{(s)} \beta})$.
\end{enumerate}
\end{lemma}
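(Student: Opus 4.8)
The plan is to obtain both identities at once from a single use of the Schur-complement (``quotient'') relationship between the blocks of $\Omega$ and those of $\Omega^{-1}$, together with the explicit $2\times 2$ block inverse of the leading principal submatrix of $\Omega$.

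First I would coarsen the partition: group the coordinates into $A=(\beta,\delta^{(s)})$ and $B=\delta^{(-s)}$, and write, with respect to this $2\times 2$ grouping,
\[
\Omega=\left[\begin{array}{cc} M_{AA} & M_{AB}\\ M_{AB}' & M_{BB}\end{array}\right],\qquad
\Omega^{-1}=\left[\begin{array}{cc} N_{AA} & N_{AB}\\ N_{AB}' & N_{BB}\end{array}\right],
\]
where $M_{AA}=\left[\begin{array}{cc}1 & \Omega_{\beta\delta^{(s)}}\\ \Omega_{\delta^{(s)}\beta} & \Omega_{\delta^{(s)}\delta^{(s)}}\end{array}\right]$, $M_{BB}=\Omega_{\delta^{(-s)}\delta^{(-s)}}$, $N_{AA}=\left[\begin{array}{cc}\Omega^{\beta\beta} & \Omega^{\beta\delta^{(s)}}\\ \Omega^{\delta^{(s)}\beta} & \Omega^{\delta^{(s)}\delta^{(s)}}\end{array}\right]$, $N_{AB}=\left[\begin{array}{c}\Omega^{\beta\delta^{(-s)}}\\ \Omega^{\delta^{(s)}\delta^{(-s)}}\end{array}\right]$ and $N_{BB}=\Omega^{\delta^{(-s)}\delta^{(-s)}}$. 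Since $\Omega$ (and hence $\Omega^{-1}$) is positive definite, all of its principal submatrices are positive definite, so every inverse appearing below exists.

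Next I would apply the standard block-inversion formula to $\Omega^{-1}$ — whose inverse is $\Omega$ — to conclude that the $A$-block of $\Omega$ equals the inverse of the Schur complement of $N_{BB}$ in $\Omega^{-1}$, i.e.\ $M_{AA}^{-1}=N_{AA}-N_{AB}N_{BB}^{-1}N_{AB}'$. Reading off the first block-row of the right-hand side and setting $P:=\Omega^{\beta\beta}-\Omega^{\beta\delta^{(-s)}}(\Omega^{\delta^{(-s)}\delta^{(-s)}})^{-1}\Omega^{\delta^{(-s)}\beta}$ and $Q:=\Omega^{\beta\delta^{(s)}}-\Omega^{\beta\delta^{(-s)}}(\Omega^{\delta^{(-s)}\delta^{(-s)}})^{-1}\Omega^{\delta^{(-s)}\delta^{(s)}}$, this says precisely that the first block-row of $M_{AA}^{-1}$ is $(P,\,Q)$. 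Separately, I would compute $M_{AA}^{-1}$ by the $2\times 2$ block-inversion formula, taking the Schur complement with respect to the $\Omega_{\delta^{(s)}\delta^{(s)}}$ block: its $(\beta,\beta)$ entry is $\big(1-\Omega_{\beta\delta^{(s)}}\Omega_{\delta^{(s)}\delta^{(s)}}^{-1}\Omega_{\delta^{(s)}\beta}\big)^{-1}$ and its $(\beta,\delta^{(s)})$ block is $-\big(1-\Omega_{\beta\delta^{(s)}}\Omega_{\delta^{(s)}\delta^{(s)}}^{-1}\Omega_{\delta^{(s)}\beta}\big)^{-1}\Omega_{\beta\delta^{(s)}}\Omega_{\delta^{(s)}\delta^{(s)}}^{-1}$. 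Equating $(\beta,\beta)$ entries gives $P=\big(1-\Omega_{\beta\delta^{(s)}}\Omega_{\delta^{(s)}\delta^{(s)}}^{-1}\Omega_{\delta^{(s)}\beta}\big)^{-1}$, which rearranges to (ii); equating $(\beta,\delta^{(s)})$ blocks gives $Q=-P\,\Omega_{\beta\delta^{(s)}}\Omega_{\delta^{(s)}\delta^{(s)}}^{-1}$, which is (i).

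Everything here is routine linear algebra; the only step requiring a little care is the quotient identity $M_{AA}^{-1}=N_{AA}-N_{AB}N_{BB}^{-1}N_{AB}'$, i.e.\ that the Schur complement of a principal block of $\Omega^{-1}$ inverts the complementary principal block of $\Omega$. I would justify it in one line, either directly from $\Omega\,\Omega^{-1}=I$ combined with the block-inversion formula for $\Omega^{-1}$, or by appealing to Lemma \ref{inverse 3x3 matrix} applied with $\delta^{(-s)}$ placed in the trailing block. The remaining bookkeeping is just keeping track of which quantities are scalars (those carrying only $\beta$ indices, since $\beta$ is one-dimensional) and which are matrices; because (i) and (ii) are stated at the level of the corresponding blocks, the matching is immediate once the two expressions for $M_{AA}^{-1}$ are in hand.
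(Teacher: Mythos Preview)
Your argument is correct and is a genuinely cleaner route than the paper's. The paper proves (i) and (ii) by invoking the explicit $3\times 3$ block-inverse formula of Lemma~\ref{inverse 3x3 matrix}, writing out the auxiliary quantities $A,B,D,S,U$, and then simplifying each identity via the Woodbury-type relation of Lemma~\ref{Woodbury} (the key cancellation being $H^{-1}-A^{-1}+A^{-1}F'E^{-1}FH^{-1}=0$). You instead recognize that both identities are just the first block-row of a single equality, $M_{AA}^{-1}=N_{AA}-N_{AB}N_{BB}^{-1}N_{AB}'$, i.e.\ the Haynsworth ``quotient'' relation between the Schur complement in $\Omega^{-1}$ and the inverse of the complementary principal submatrix of $\Omega$; you then compute $M_{AA}^{-1}$ a second time by the ordinary $2\times 2$ block-inversion of $M_{AA}$ and match entries. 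This avoids both the explicit $3\times 3$ inverse and the Woodbury step, and makes transparent why (i) and (ii) are really the same fact read in two coordinates. The paper's approach has the virtue of being fully self-contained given Lemmata~\ref{inverse 3x3 matrix}--\ref{Woodbury}; yours trades that for a single appeal to the quotient identity, which, as you note, follows in one line from the block-inversion formula applied to $\Omega^{-1}$.
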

\begin{proof}

 \textit{(i)} Using Lemma \ref{inverse 3x3 matrix}, we show the equivalent result that
\[
	X^{12} - X^{13}(X^{33})^{-1}X^{32} + (X^{11} - X^{13}(X^{33})^{-1}X^{31})X_{12}X_{22}^{-1} = 0,
\]
where we use the same notational convention concerning sub- and superscripts as for $\Omega$ and $\Omega^{-1}$. We have
\begin{align*}
& X^{12} - X^{13}(X^{33})^{-1}X^{32} + (X^{11} - X^{13}(X^{33})^{-1}X^{31})X_{12}X_{22}^{-1} \\
=&  -E^{-1}[F-US^{-1}B']A^{-1} -E^{-1}US^{-1}B'A^{-1}\\
 +& \left[ E^{-1} + E^{-1}[FA^{-1}F' + U S^{-1}U']E^{-1} - E^{-1}US^{-1}U'E^{-1}\right] F H^{-1}\\
 =&  E^{-1}FH^{-1}-E^{-1}FA^{-1} + E^{-1}FA^{-1}F'E^{-1}FH^{-1}\\
  =& E^{-1}F[H^{-1} - A^{-1} + A^{-1}F'E^{-1}FH^{-1}] = 0,
\end{align*}
where the last equality follows from the fact that $H^{-1} - A^{-1} + A^{-1}F'E^{-1}FH^{-1} = 0$ which, in turn, follows from Lemma \ref{Woodbury} using $Y = A = H - F'E^{-1}F$ and $Z = F'E^{-1}F$.\\
\textit{(ii)} Using Lemma \ref{inverse 3x3 matrix}, we show the equivalent result that
\[
	(X^{11} - X^{13}(X^{33})^{-1}X^{31}) (X_{11} - X_{12} X_{22}^{-1} X_{21}) = I,
\]
where we again use the same notational convention as for $\Omega$ and $\Omega^{-1}$. We have
\begin{align*}
&(X^{11} - X^{13}(X^{33})^{-1}X^{31}) (X_{11} - X_{12} X_{22}^{-1} X_{21})\\
	=&[E^{-1} + E^{-1}FA^{-1}F'E^{-1}][E - FH^{-1}F'] \\
	=& I - E^{-1}FH^{-1}F'+ E^{-1}FA^{-1}F' - E^{-1} FA^{-1}F'E^{-1}FH^{-1} F'\\
	=& I -  E^{-1}F[H^{-1} - A^{-1} + A^{-1}F'E^{-1}FH^{-1}]F' = I,
\end{align*}
where we used again that $H^{-1} - A^{-1} + A^{-1}F'E^{-1}FH^{-1} = 0$.
\end{proof}

\begin{proof}[Proof of Proposition \ref{prop:one-sided optimality}]
Note that the problem of forming a CI for $\beta$ when we observe $Y\sim \mathcal{N}(\theta,\Omega)$ with $\theta=(\beta,\delta^\prime)^\prime$, $\delta\geq 0$ and known $\Omega$ is equivalent to forming a CI for $\beta$ in the setting of \cite{AK18} (AK henceforth):
\[\widetilde Y=X\theta+\varepsilon,\quad \varepsilon\sim\mathcal{N}(0,I_{k+1})\]
where $\widetilde Y=\Omega^{-1/2}Y$, $X=\Omega^{-1/2}$ and $k$ is the dimension of $\delta$. In what follows, we appeal to Theorem 3.1 of AK. \\
(i) In order to form the CI in this theorem, we must first form the affine estimator $\hat L_{\tilde\delta,\mathcal{F},\mathcal{G}}$ in (23) of AK, for $\mathcal{F}=\mathcal{G}=\{(\beta,\delta^\prime)^\prime\in\mathbb{R}^{k+1}:\delta\geq 0\}$.  
The modulus of continuity defined on p.~667 of AK specialized to our setting is
\begin{gather*}
\omega(\tilde\delta;\mathcal{F},\mathcal{G})=\sup_{\gamma_1,\theta}\{\gamma_1-\beta\} \\
\text{s.t. } (\gamma_1-\beta,\gamma_{1}'-\delta^{\prime})\Omega^{-1}(\gamma_1-\beta,\gamma_{-1}'-\delta^{\prime})^\prime\leq \tilde\delta^2, \quad \gamma_{-1}\geq 0, \quad \delta\geq 0,
\end{gather*}
where we use the partition $\gamma\equiv(\gamma_1,\gamma_{-1}^\prime)^\prime$. Let
\[\Omega^{-1}=\left(\begin{array}{cc}
\Omega^{\beta\beta} & \Omega^{\beta\delta} \\
\Omega^{\delta\beta} & \Omega^{\delta\delta}
\end{array}\right)\]
so that the constraints for the modulus problem can be written as
\[ (\gamma_1-\beta)^2\Omega^{\beta\beta}+2(\gamma_1-\beta)\Omega^{\beta\delta}(\gamma_{-1}-\delta)+(\gamma_{-1}-\delta)^\prime\Omega^{\delta\delta}(\gamma_{-1}-\delta)\leq \tilde\delta^2, \quad \gamma_{-1}\geq 0, \quad \delta\geq 0.
\]
 For any $\theta$ and $\gamma$ that solve this optimization problem in the absence of the final two constraints, we may simply add $(c,\dots,c)'$ for a large constant $c$ to both $\theta$ and $\gamma$ and obtain the same value without imposing the final two constraints on $\gamma_{-1}$ and $\delta$.  Thus, these final two constraints do not affect the optimal procedure and we may instead focus on the modulus problem that drops them with the understanding that the solutions in $\gamma_{-1}$ and $\delta$ must be large and positive.

After dropping these constraints, the first order condition wrt $\delta$ in the modulus problem is
\[-2\lambda[(\gamma_1-\beta)\Omega^{\beta\delta}+(\gamma_{-1}-\delta)^\prime\Omega^{\delta\delta}]=0,\]
where $\lambda>0$ is the KKT multiplier associated with the remaining constraint.  Using the formula for blockwise inversion of a matrix, the optimal solution to the modulus problem must therefore satisfy
\[(\gamma_{-1}-\delta)^\prime=-(\gamma_1-\beta)\Omega^{\beta\delta}(\Omega_{\delta\delta}-\Omega_{\delta\beta}\Omega_{\beta\delta}).\]
The modulus problem thus simplifies to
\begin{gather*}
\omega(\tilde\delta;\mathcal{F},\mathcal{G})=\sup_{\gamma_1,\beta}\{\gamma_1-\beta\} \\
\text{s.t. } (\gamma_1-\beta)^2\Omega^{\beta\beta}-(\gamma_1-\beta)^2\Omega^{\beta\delta}(\Omega_{\delta\delta}-\Omega_{\delta\beta}\Omega_{\beta\delta})\Omega^{\delta\beta}\leq \tilde\delta^2,
\end{gather*}
where the constraint further simplifies to
\[
\text{s.t. } (\gamma_1-\beta)^2\leq \tilde\delta^2,
\]
by the formula for blockwise inversion of a matrix. Thus, we have
\[\omega(\tilde\delta;\mathcal{F},\mathcal{G})=\tilde\delta\]
with a solution given by
\[\gamma_{\tilde\delta,\mathcal{F},\mathcal{G}}^*=\left(\begin{array}{c}
\tilde\delta/2 \\
\delta^*+\tilde\delta\Omega_{\delta\beta}
\end{array}\right), \quad \theta_{\tilde\delta,\mathcal{F},\mathcal{G}}^*=\left(\begin{array}{c}
-\tilde\delta/2 \\
\delta^*
\end{array}\right)\]
and midpoint
\[\theta_{M,\tilde\delta,\mathcal{F},\mathcal{G}}^*=(\theta_{\tilde\delta,\mathcal{F},\mathcal{G}}^*+\gamma_{\tilde\delta,\mathcal{F},\mathcal{G}}^*)/2=\left(\begin{array}{c}
0 \\
\delta^*+\tilde\delta\Omega_{\delta\beta}/2
\end{array}\right),\]
for some large and positive $\delta^*$, where we use the fact that $\Omega^{\beta\delta}(\Omega_{\delta\delta}-\Omega_{\delta\beta}\Omega_{\beta\delta})=-\Omega_{\beta\delta}$ by the formula for blockwise inversion of a matrix.  Formula  (23) of AK thus yields
\begin{align*}
\hat L_{\tilde\delta,\mathcal{F},\mathcal{G}}&=\tilde\delta^{-1}(\gamma_{\tilde\delta,\mathcal{F},\mathcal{G}}^*-\theta_{\tilde\delta,\mathcal{F},\mathcal{G}}^*)^\prime\Omega^{-1}(Y-\theta_{M,\tilde\delta,\mathcal{F},\mathcal{G}}^*) \\
&=(1,\Omega_{\beta\delta})\left(\begin{array}{cc}
\Omega^{\beta\beta} & \Omega^{\beta\delta} \\
\Omega^{\delta\beta} & \Omega^{\delta\delta}
\end{array}\right)\left(\begin{array}{c}
Y_\beta \\
Y_\delta-\delta^*-\tilde\delta\Omega_{\delta\beta}/2
\end{array}\right) \\
&=(\Omega^{\beta\beta}+\Omega_{\beta\delta}\Omega^{\delta\beta})Y_\beta+(\Omega^{\beta\delta}+\Omega_{\beta\delta}\Omega^{\delta\delta})Y_\delta-(\Omega^{\beta\delta}+\Omega_{\beta\delta}\Omega^{\delta\delta})(\delta^*+\tilde\delta/2)\Omega_{\delta\beta} \\
&=Y_\beta,
\end{align*}
where the final equality follows from the facts $\Omega^{\beta\beta}+\Omega_{\beta\delta}\Omega^{\delta\beta}=1$ and $\Omega^{\beta\delta}+\Omega_{\beta\delta}\Omega^{\delta\delta}=0$ by the formula for blockwise inversion of a matrix. Theorem 3.1 of AK then provides that among all upper one-sided CIs with coverage of at least $(1-\alpha)$ for all $\delta\geq 0$
\[[Y_\beta-z_{1-\alpha},\infty)\]
minimizes all maximum excess length quantiles over the $\delta\geq 0$ parameter space at quantile levels greater than $\alpha$.

(ii) We first form the affine estimator $\hat L_{\tilde \delta,\mathcal{F},\mathcal{G}}$ in (23) of AK, for $\mathcal{F}=\{(\beta,\delta^\prime)^\prime\in\mathbb{R}^{k+1}:\delta\geq 0\}$ and $\mathcal{G}=\{(\beta,\delta^\prime)^\prime\in\mathbb{R}^{k+1}:\delta= 0\}$. The modulus of continuity in this setting is
\begin{gather}
\omega(\tilde\delta;\mathcal{F},\mathcal{G})=\sup_{\gamma_1,\theta}\{\gamma_1-\beta\} \nonumber \\
\text{s.t. } (\gamma_1-\beta)^2\Omega^{\beta\beta}-2(\gamma_1-\beta)\Omega^{\beta\delta}\delta+\delta^\prime\Omega^{\delta\delta}\delta\leq \tilde{\delta}^2,\quad \delta\geq 0. \label{constraint}
\end{gather}
Here, the first order condition wrt $\delta_i$ is
\[-2\lambda[(\gamma_1-\beta)\Omega_i^{\beta\delta}-\Omega_{i,\cdot}^{\delta\delta}\delta] - \mu_i=0,\]
where $\lambda>0$ is the KKT multiplier associated with the first constraint, $\mu_i\geq 0$ is the KKT multiplier associated with the constraint $\delta_i\geq 0$ that satisfies the complementary slackness condition $\mu_i\delta_i=0$, and $\Omega_{i,\cdot}^{\delta\delta}$ denotes the $i^{th}$ row of $\Omega^{\delta\delta}$. The solution to the modulus problem must therefore satisfy 
\begin{equation}
\Omega_{i,\cdot}^{\delta\delta}\delta=(\gamma_1-\beta)\Omega_{i}^{\delta\beta} + \tilde\mu_i \label{delta constraint}
\end{equation}
for $i=1,\ldots,k$ and some constants $\tilde\mu_i\geq 0$ such that $\tilde\mu_i\delta_i=0$.  The solution to the modulus problem thus maximizes $(\gamma_1-\beta)$ amongst all $\gamma_1,\theta$ values that satisfy \eqref{constraint} and \eqref{delta constraint} for $i=1,\ldots,k$ and some constants $\tilde\mu_i\geq 0$ such that $\tilde\mu_i\delta_i=0$.  

Next, we consider the candidate solutions to the modulus problem. 
 Let $\delta^{(s)}$ ($\delta^{(-s)}$) denote a (possibly empty) subvector of $\delta$ that satisfies $\delta^{(s)} = 0$ ($\delta^{(-s)} \geq 0$) with $\tilde \mu^{(s)} > 0$ ($\tilde \mu^{(-s)} = 0$). Then, using the notational conventions introduced in Lemma \ref{submatrix correspondence} in what follows, the set of equations given in \eqref{delta constraint} implies
\begin{equation} \label{delta minus s}
	\delta^{(-s)} = (\gamma_1 - \beta) ( \Omega^{\delta^{(-s)}\delta^{(-s)}} )^{-1} \Omega^{\delta^{(-s)}\beta},
\end{equation}
where we use the convention that $( \Omega^{\delta^{(-s)}\delta^{(-s)}} )^{-1} \Omega^{\delta^{(-s)}\beta}=0$ for $\delta^{(s)}=\delta$, and the modulus problem simplifies to 
\begin{gather*}
\omega(\tilde\delta;\mathcal{F},\mathcal{G})=\sup_{\gamma_1,\beta}\{\gamma_1-\beta\} \\
\text{s.t. } (\gamma_1-\beta)^2(\Omega^{\beta\beta}-\Omega^{\beta\delta^{(-s)}}(\Omega^{\delta^{(-s)}\delta^{(-s)}})^{-1}\Omega^{\delta^{(-s)}\beta})\leq \tilde\delta^2.
\end{gather*}
Recall that, given the definition of $\delta^{(s)}$ and $\delta^{(-s)}$, the constraint $\delta \geq 0$ is satisfied. Thus, we have
\begin{align*}
\omega(\tilde\delta;\mathcal{F},\mathcal{G})=\tilde\delta/\sqrt{\Omega^{\beta\beta}-\Omega^{\beta\delta^{(-s^{**})}}(\Omega^{\delta{(-s^{**})}\delta{(-s^{**})}})^{-1}\Omega^{\delta{(-s^{**})}\beta}},
\end{align*}
where $s^{**}$ is such that $\delta^{(-s^{**})}$ maximizes $\Omega^{\beta\delta^{(-s)}}(\Omega^{\delta^{(-s)}\delta^{(-s)}})^{-1}\Omega^{\delta^{(-s)}\beta}$ (subject to $\delta \geq 0$),  with a solution given by
\begin{gather*}
\gamma_{\tilde\delta,\mathcal{F},\mathcal{G}}^*=\left(\begin{array}{c}
\tilde\delta/(2\sqrt{\Omega^{\beta\beta}-\Omega^{\beta\delta^{(-s^{**})}}(\Omega^{\delta^{(-s^{**})}\delta^{(-s^{**})}})^{-1}\Omega^{\delta^{(-s^{**})}\beta}}) \\
0_{k\times 1}
\end{array}\right), \\ 
\theta_{\tilde\delta,\mathcal{F},\mathcal{G}}^*=\left(\begin{array}{c}
-\tilde\delta/(2\sqrt{\Omega^{\beta\beta}-\Omega^{\beta\delta^{(-s^{**})}}(\Omega^{\delta^{(-s^{**})}\delta^{(-s^{**})}})^{-1}\Omega^{\delta^{(-s^{**})}\beta}}) \\
\delta^{**}
\end{array}\right)
\end{gather*}
and midpoint
\[\theta_{M,\tilde\delta,\mathcal{F},\mathcal{G}}^*=(\theta_{\tilde\delta,\mathcal{F},\mathcal{G}}^*+\gamma_{\tilde\delta,\mathcal{F},\mathcal{G}}^*)/2=(0,\delta^{**\prime}/2)^\prime,\]
where $\delta^{**}$ has elements $\delta^{(s^{**})}$ and $\delta^{(-s^{**})}$. Formula (23) of AK thus yields
\begin{align*}
\hat L_{\tilde\delta,\mathcal{F},\mathcal{G}}&=\left(\tilde\delta\sqrt{\Omega^{\beta\beta}-\Omega^{\beta\delta^{(-s^{**})}}(\Omega^{\delta^{(-s^{**})}\delta^{(-s^{**})}})^{-1}\Omega^{\delta^{(-s^{**})\beta}}}\right)^{-1}(\gamma_{\tilde\delta,\mathcal{F},\mathcal{G}}^*-\theta_{\tilde\delta,\mathcal{F},\mathcal{G}}^*)^\prime\Omega^{-1}(Y-\theta_{M,\tilde\delta,\mathcal{F},\mathcal{G}}^*) \\
&=\left(\Omega^{\beta\beta}-\Omega^{\beta\delta^{(-s^{**})}}(\Omega^{\delta^{(-s^{**})}\delta^{(-s^{**})}})^{-1}\Omega^{\delta^{(-s^{**})}\beta}\right)^{-1}(1,\ddot{\delta}^{\prime})\left(\begin{array}{cc}
\Omega^{\beta\beta} & \Omega^{\beta\delta} \\
\Omega^{\delta\beta} & \Omega^{\delta\delta}
\end{array}\right)\left(\begin{array}{c}
Y_\beta \\
Y_\delta-\delta^{**}/2
\end{array}\right) \\
&=Y_\beta+\frac{\Omega^{\beta\delta^{(s^{**})}}-\Omega^{\beta\delta^{(-s^{**})}}(\Omega^{\delta^{(-s^{**})}\delta^{(-s^{**})}})^{-1}\Omega^{\delta^{(-s^{**})}\delta^{(-s^{**})}}}{\Omega^{\beta\beta}-\Omega^{\beta\delta^{(-s^{**})}}(\Omega^{\delta^{(-s^{**})}\delta^{(-s^{**})}})^{-1}\Omega^{\delta^{(-s^{**})}\beta}}Y_{\delta}^{(s^{**})} \\
&=Y_\beta-\Omega_{\beta \delta^{(s^{**})}} \Omega_{\delta^{(s^{**})} \delta^{(s^{**})}}^{-1} Y_{\delta}^{(s^{**})},
\end{align*}
where $\ddot{\delta}=\sqrt{\Omega^{\beta\beta}-\Omega^{\beta\delta^{(-s^{**})}}(\Omega^{\delta^{(-s^{**})}\delta^{(-s^{**})}})^{-1}\Omega^{\delta^{(-s^{**})}\beta}}\delta^{**}/\tilde\delta$ and the last equality follows from Lemma \ref{submatrix correspondence}. Similarly, Lemma \ref{submatrix correspondence} implies that
\[
\omega(\tilde\delta;\mathcal{F},\mathcal{G}) = \tilde \delta \sqrt{1-\Omega_{\beta \delta^{(s^{**})}} \Omega_{\delta^{(s^{**})} \delta^{(s^{**})}}^{-1} \Omega_{\delta^{(s^{**})}\beta  }}.
\]
Theorem 3.1 of AK then provides that among all upper one-sided CIs with coverage of at least $(1-\alpha)$ for all $\delta\geq 0$
\begin{equation} \label{form optimal} 
[Y_\beta-\Omega_{\beta\delta^{(s^{**})}}\Omega_{\delta^{(s^{**})}\delta^{(s^{**})}}^{-1}Y_\delta-z_{1-\alpha}\sqrt{1-\Omega_{\beta\delta^{(s^{**})}}\Omega_{\delta^{(s^{**})}\delta^{(s^{**})}}^{-1}\Omega_{\delta^{(s^{**})}\beta}},\infty)
\end{equation}
minimizes all excess length quantiles at $\delta=0$ and quantile levels greater than $\alpha$.

Next, we show that $s^{**} = s^*$. First, note that the excess length of any CI of the form given in \eqref{form optimal}\textemdash for some $\delta^{(s^{**})}$\textemdash at $\delta=0$ and any quantile greater than $\alpha$ is equal to $c \sqrt{1-\Omega_{\beta \delta^{(s^{**})}} \Omega_{\delta^{(s^{**})} \delta^{(s^{**})}}^{-1} \Omega_{\delta^{(s^{**})}\beta } }$ for some $c > 0$. Furthermore, recall (from the discussion in the main text) that for any CI of this form to have coverage of at least $(1-\alpha)$ for all $\delta\geq 0$ we need $\Omega_{\delta^{(s^{**})}\delta^{(s^{**})}}^{-1} \Omega_{\delta^{(s^{**})} \beta} \geq 0$.\footnote{Note that the condition $\Omega_{\delta^{(s)}\delta^{(s)}}^{-1} \Omega_{\delta^{(s)} \beta} \geq 0$ can also be derived from the modulus problem. To see this, note that the set of equations in \eqref{delta constraint} implies
\[
	\Omega^{\delta^{(s)} \delta^{(-s)}} \delta^{(-s)} = (\gamma_1 - \beta) \Omega^{\delta^{(s)} \beta} + \tilde \mu^{(s)}.
\]	
Plugging in the formula for $\delta^{(-s)}$ in equation \eqref{delta minus s}, we get
\begin{eqnarray*}
	& (\gamma_1 - \beta) \Omega^{\delta^{(s)} \delta^{(-s)}} ( \Omega^{\delta^{(-s)}\delta^{(-s)}} )^{-1} \Omega^{\delta^{(-s)}\beta} &= (\gamma_1 - \beta) \Omega^{\delta^{(s)} \beta} + \tilde \mu^{(s)} \\
	\Leftrightarrow & - \frac{\tilde \mu^{(s)}}{\gamma_1 - \beta} &=\Omega^{\delta^{(s)} \beta} - \Omega^{\delta^{(s)} \delta^{(-s)}} ( \Omega^{\delta^{(-s)}\delta^{(-s)}} )^{-1} \Omega^{\delta^{(-s)}\beta}\\
	\Leftrightarrow & \frac{\tilde \mu^{(s)}}{(\gamma_1 - \beta)(\Omega^{\beta \beta} - \Omega^{\beta \delta^{(-s)}} ( \Omega^{\delta^{(-s)}\delta^{(-s)}} )^{-1} \Omega^{\delta^{(-s)}\beta})}&=\Omega_{\delta^{(s)}\delta^{(s)}}^{-1} \Omega_{\delta^{(s)} \beta},
\end{eqnarray*}
where the last step uses Lemma \ref{submatrix correspondence} (with sub- and superscripts interchanged). As the left hand side is non-negative, we conclude that $\Omega_{\delta^{(s)}\delta^{(s)}}^{-1} \Omega_{\delta^{(s)} \beta} \geq 0$.} Therefore, the CI that minimizes excess length at $\delta = 0$ and all quantiles greater than $\alpha$, among all upper one-sided CIs with coverage of at least $(1-\alpha)$ for all $\delta\geq 0$, must also be equal to 
\[[Y_\beta-\Omega_{\beta\delta^{(s^*)}}\Omega_{\delta^{(s^*)}\delta^{(s^*)}}^{-1} Y_{\delta}^{(s^*)}-\sqrt{1-\Omega_{\beta\delta^{(s^*)}}\Omega_{\delta^{(s^*)}\delta^{(s^*)}}^{-1}\Omega_{\delta^{(s^*)}\beta}}z_{1-\alpha},\infty),\]
where $\delta^{(s^*)}$ is the subvector of $\delta$ that maximizes $\Omega_{\beta\delta^{(s)}}\Omega_{\delta^{(s)}\delta^{(s)}}^{-1}\Omega_{\delta^{(s)}\beta}$ amongst all subvectors for which $\Omega_{\delta^{(s)}\delta^{(s)}}^{-1}\Omega_{\delta^{(s)}\beta}\geq 0$.
\end{proof}

%%%%%%%%%%%%%%%%%%%%%%%%%%%%%%%%%%%%%%%%

The following lemmata are used in the proofs of Proposition \ref{prop:c_u existence} and Theorem \ref{thm:uniform coverage}.

\begin{lemma} \label{lem:existence and uniqueness of c-tilde}
The function $\tilde c:\widetilde{\mathcal{C}}\rightarrow\mathbb{R}_\infty$ exists and is continuous.
\end{lemma}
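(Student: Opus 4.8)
The plan is to exhibit $\tilde c(c_u,\tilde\omega)$ as the unique root of a strictly monotone continuous function of one real variable at each interior point (those with $\omega_{12},\omega_{13}\neq 0$), to upgrade this to joint continuity by a compactness argument, and finally to read off the boundary values ($\omega_{12}=0$ or $\omega_{13}=0$) from the limits in the definition. Throughout write $K=z_{1-(\alpha-\gamma)/2}$ and, for $(Z_1,\tilde Z_2,\tilde Z_3)$ as in \eqref{Z-dist}, $L=-\min\{K,-\tilde Z_3+c_u\}=\max\{-K,\tilde Z_3-c_u\}$ and
\[ h_{c_u,\tilde\omega}(c)=P\big(L\le Z_1\le\min\{K,\tilde Z_2+c\}\big),\qquad c\in\mathbb R_\infty, \]
so that $\tilde c(c_u,\tilde\omega)$ solves $h_{c_u,\tilde\omega}(c)=1-\alpha$. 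Since the only elements of $\bar{\mathcal S}$ with both $\omega_{12}\neq 0$ and $\omega_{13}\neq 0$ lie in $\mathcal S$, for such $\tilde\omega$ the matrix in \eqref{Z-dist} is positive definite, so $(Z_1,\tilde Z_2,\tilde Z_3)$ has a positive density on $\mathbb R^3$ and $\tilde Z_2\sim\mathcal N(0,\omega_{12})$, $\tilde Z_3\sim\mathcal N(0,\omega_{13})$ have full support.

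First I would record the one-variable facts. Conditioning on $(\tilde Z_2,\tilde Z_3)$ writes $h_{c_u,\tilde\omega}(c)$ as the integral of a positive part of a difference of two $\Phi$'s against the joint density of $(\tilde Z_2,\tilde Z_3)$; since the event boundary $\{Z_1=\min\{K,\tilde Z_2+c\}\}$ is null, dominated convergence gives that $h_{c_u,\tilde\omega}$ is continuous on $\mathbb R_\infty$. It is clearly nondecreasing, and strictly increasing on $\mathbb R$: for $c<c'$,
\[ h_{c_u,\tilde\omega}(c')-h_{c_u,\tilde\omega}(c)\ \ge\ P\big(c<Z_1-\tilde Z_2\le c',\ \tilde Z_2+c'<K,\ Z_1>L\big)>0, \]
because the positive density of $(Z_1,\tilde Z_2,\tilde Z_3)$ charges an open subset of this event. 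Moreover $h_{c_u,\tilde\omega}(-\infty)=0$ since the event forces $Z_1-\tilde Z_2\le c$ and $\Var(Z_1-\tilde Z_2)=1-\omega_{12}>0$, whereas $h_{c_u,\tilde\omega}(\infty)=P(L\le Z_1\le K)$ is nondecreasing in $c_u$ (as $L=\max\{-K,\tilde Z_3-c_u\}$ is nonincreasing in $c_u$) and equals $1-\alpha$ at $c_u=\underline{c_u}(\tilde\omega)$ by \eqref{lower bar c_u}; hence $h_{c_u,\tilde\omega}(\infty)\ge 1-\alpha$ for every $(c_u,\tilde\omega)\in\widetilde{\mathcal C}$, with equality exactly when $c_u=\underline{c_u}(\tilde\omega)$. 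Continuity, strict monotonicity on $\mathbb R$, and $h_{c_u,\tilde\omega}(-\infty)<1-\alpha\le h_{c_u,\tilde\omega}(\infty)$ then give a unique $\tilde c(c_u,\tilde\omega)\in\mathbb R_\infty$ with $h_{c_u,\tilde\omega}(\tilde c(c_u,\tilde\omega))=1-\alpha$, and $\tilde c(c_u,\tilde\omega)=\infty$ precisely when $c_u=\underline{c_u}(\tilde\omega)$.

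Next I would prove joint continuity of $H(c,c_u,\tilde\omega):=h_{c_u,\tilde\omega}(c)$ on $\mathbb R_\infty\times\{(c_u,\tilde\omega)\in\widetilde{\mathcal C}:\omega_{12},\omega_{13}\neq 0\}$. Continuity in $(c,c_u)$ is dominated convergence again; continuity in $\tilde\omega$ follows because, as $\tilde\omega\to\tilde\omega_0$ within $\mathcal S$, the joint density of $(Z_1,\tilde Z_2,\tilde Z_3)$ converges in $L^{1}$ (Scheff\'e) and the event boundary remains null, and the passages $c,c_u\to\pm\infty$ are monotone limits of $H$. Given the pointwise uniqueness above, $\tilde c$ is continuous by the standard fact that a jointly continuous function with a unique root in a compact codomain has a continuous root map: for $(c_u^{n},\tilde\omega^{n})\to(c_u,\tilde\omega)$, any subsequential limit $c^{*}\in\mathbb R_\infty$ of $\tilde c(c_u^{n},\tilde\omega^{n})$ satisfies $H(c^{*},c_u,\tilde\omega)=1-\alpha$ (the case $c^{n}\to\infty$ via $H(c^{n},c_u^{n},\tilde\omega^{n})\uparrow H(\infty,c_u,\tilde\omega)$), the value $c^{*}=-\infty$ is excluded since $H(c^{n},c_u^{n},\tilde\omega^{n})\le P(Z_1-\tilde Z_2^{n}\le c^{n})\to 0$ with $\Var(Z_1-\tilde Z_2^{n})=1-\omega_{12}^{n}$ bounded away from $0$ near $\tilde\omega$, and uniqueness forces $c^{*}=\tilde c(c_u,\tilde\omega)$.

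It remains to handle the boundary points with $\omega_{12}=0$ or $\omega_{13}=0$, which in $\bar{\mathcal S}$ are the two rays $\{(\omega_{12},0,0):\omega_{12}\in[0,1)\}$ and $\{(0,\omega_{13},0):\omega_{13}\in(0,1)\}$; by the symmetry between $\omega_{12}$ and $\omega_{13}$ underlying Proposition \ref{prop:c_ell} it suffices to treat $\omega_{12}=0$, where \eqref{cl_cc} genuinely has a continuum of roots (once $c\ge K$ the endpoint $\min\{K,\tilde Z_2+c\}=\min\{K,c\}$ no longer moves) and $\tilde c$ is defined by the limit along $\bar\omega_{12}\downarrow 0$. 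The plan is to check that $(c_u,\bar\omega_{12},\omega_{13},\omega_{23})\in\widetilde{\mathcal C}$ for small $\bar\omega_{12}>0$ (positive definiteness of the perturbed matrix, plus continuity of $\underline{c_u}$, proved by the same one-variable argument as above), that $H$ extends jointly continuously to the ray (weak convergence of the law as $\bar\omega_{12}\downarrow 0$, event boundary still null since $Z_1$ has a density), and that on the ray $H(\cdot,c_u,\tilde\omega)$ is strictly increasing on $(-\infty,K)$, constant on $[K,\infty]$, and strictly above $1-\alpha$ at $K$ whenever $c_u>\underline{c_u}(\tilde\omega)$, so it has a unique root $c^{*}<K$; the subsequence argument of the previous paragraph then yields $\tilde c(c_u,\bar\omega_{12},\omega_{13},\omega_{23})\to c^{*}$, giving both existence of the limit and continuity of $\tilde c$ there, while the exceptional case $c_u=\underline{c_u}(\tilde\omega)$ (where $\tilde c\equiv\infty$ on the interior) is read as $\tilde c=\infty$, consistent with the root set $[K,\infty]$ at the limit. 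I expect this last paragraph to be the \emph{main obstacle}: the interior analysis is a routine monotone-root computation for a Gaussian orthant-type probability, whereas on the degenerate rays one must argue that the particular limiting value is well-defined and glues continuously, and carefully dispatch the corner cases $c_u=\underline{c_u}(\tilde\omega)$ and $c_u=\infty$ together with the continuity of $\underline{c_u}$.
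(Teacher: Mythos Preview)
Your proposal is correct and reaches the same conclusion as the paper, but by a genuinely different route. The paper expands $f(\tilde c,c_u,\tilde\omega)=h_{c_u,\tilde\omega}(\tilde c)-(1-\alpha)$ into explicit conditional-normal integrals, computes $\partial f/\partial\tilde c$ as a sum of three strictly positive integrals, and then invokes the implicit function theorem for continuity on the interior $\{\omega_{12},\omega_{13}\neq 0\}$; boundary continuity is asserted directly from the limit-based definition. You instead argue strict monotonicity by a direct probability lower bound on $h_{c_u,\tilde\omega}(c')-h_{c_u,\tilde\omega}(c)$, establish joint continuity of $H$ via Scheff\'e/dominated convergence, and obtain continuity of the root map from a compactness-and-uniqueness subsequence argument on the compact codomain $\mathbb R_\infty$.

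What each approach buys: the paper's derivative computation is mechanical once the integrals are written out, and the implicit function theorem delivers local $C^1$ smoothness for free; your argument avoids all calculus and works directly at the level of probabilities, which makes the $\tilde c=\infty$ case (at $c_u=\underline{c_u}(\tilde\omega)$) transparent rather than something the implicit function theorem must be patched to cover. On the boundary rays you are actually more careful than the paper: you correctly note that at $\omega_{12}=0$ the equation \eqref{cl_cc} has a continuum of solutions and that one must identify which one the interior limit selects, whereas the paper's footnote simply asserts that interior continuity guarantees existence of the defining limits. Your plan there (extend $H$ continuously to the ray via weak convergence, identify the unique sub-$K$ root when $c_u>\underline{c_u}(\tilde\omega)$, and run the same subsequence argument) is sound and would close the gap that both treatments leave somewhat implicit. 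One small caveat: invoking the symmetry ``underlying Proposition~\ref{prop:c_ell}'' to reduce $\omega_{13}=0$ to $\omega_{12}=0$ is slightly circular in presentation, since that proposition concerns $c_\ell,c_u$ rather than $\tilde c$ itself; it would be cleaner to state the relevant distributional symmetry $(Z_1,\tilde Z_2,\tilde Z_3)\mapsto(-Z_1,-\tilde Z_3,-\tilde Z_2)$ directly.
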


\begin{proof} Consider the function $f:\mathbb{R}_\infty\times\widetilde{\mathcal{C}}\rightarrow [\alpha-1,\alpha]$ such that for $(Z_1,\tilde Z_2,\tilde Z_3)$ defined in \eqref{Z-dist},
\[f(\tilde c,c_u,\tilde\omega)=P(-\min\{z_{1-\frac{\alpha-\gamma}{2}},-\tilde Z_3+c_u\}\leq Z_1\leq \min\{z_{1-\frac{\alpha-\gamma}{2}},\tilde{Z}_2+\tilde c\})-(1-\alpha).\]  For $(\tilde c,c_u,\tilde\omega)\in\mathbb{R}_\infty\times\widetilde{\mathcal{C}}$ with $\omega_{12},\omega_{13}\neq 0$,
\begin{align*}
&f(\tilde c,c_u,\tilde\omega) \\
&=\int_{-\infty}^\infty\int_{-\infty}^\infty P(-\min\{z_{1-\frac{\alpha-\gamma}{2}},-\tilde Z_3+c_u\}\leq Z_1\leq \min\{z_{1-\frac{\alpha-\gamma}{2}},\tilde{Z}_2+\tilde c\}|\tilde Z_2=\tilde z_2,\tilde Z_3=\tilde z_3)g(\tilde z_2,\tilde z_3)d\tilde z_2d \tilde z_3-(1-\alpha) \\
&=\int_{-\infty}^\infty\int_{-\infty}^\infty \left[\Phi\left(\frac{\min\{z_{1-\frac{\alpha-\gamma}{2}},\tilde{z}_2+\tilde c\}-\mu(\tilde z_2,\tilde z_3)}{\sigma(\tilde\omega)}\right)-\Phi\left(\frac{-\min\{z_{1-\frac{\alpha-\gamma}{2}},-\tilde{z}_3+ c_u\}-\mu(\tilde z_2,\tilde z_3)}{\sigma(\tilde\omega)}\right)\right]\\
&\qquad\qquad\quad\times\mathbf{1}(\min\{z_{1-\frac{\alpha-\gamma}{2}},\tilde{z}_2+\tilde c\}\geq-\min\{z_{1-\frac{\alpha-\gamma}{2}},-\tilde{z}_3+ c_u\})g(\tilde z_2,\tilde z_3)d\tilde z_2d\tilde z_3-(1-\alpha) \\
&=\int_{-\infty}^\infty\left[\int_{-\infty}^{z_{1-\frac{\alpha-\gamma}{2}}-\tilde c}\Phi\left(\frac{\tilde{z}_2+\tilde c-\mu(\tilde z_2,\tilde z_3)}{\sigma(\tilde\omega)}\right)\mathbf{1}(\tilde{z}_2+\tilde c\geq-\min\{z_{1-\frac{\alpha-\gamma}{2}},-\tilde{z}_3+ c_u\})\right. \\
&\qquad\qquad +\left.\int_{z_{1-\frac{\alpha-\gamma}{2}}-\tilde c}^\infty\Phi\left(\frac{z_{1-\frac{\alpha-\gamma}{2}}-\mu(\tilde z_2,\tilde z_3)}{\sigma(\tilde\omega)}\right)\mathbf{1}(z_{1-\frac{\alpha-\gamma}{2}}\geq-\min\{z_{1-\frac{\alpha-\gamma}{2}},-\tilde{z}_3+ c_u\})\right]g(\tilde z_2,\tilde z_3)d\tilde z_2d\tilde z_3 \\
&\quad-\int_{-\infty}^\infty\left[\int_{-\infty}^{c_u-z_{1-\frac{\alpha-\gamma}{2}}}\Phi\left(\frac{-z_{1-\frac{\alpha-\gamma}{2}}-\mu(\tilde z_2,\tilde z_3)}{\sigma(\tilde\omega)}\right)\mathbf{1}(\min\{z_{1-\frac{\alpha-\gamma}{2}},\tilde{z}_2+\tilde c\}\geq-z_{1-\frac{\alpha-\gamma}{2}})\right. \\
&\qquad\qquad\quad +\left.\int_{c_u-z_{1-\frac{\alpha-\gamma}{2}}}^\infty\Phi\left(\frac{\tilde{z}_3-c_u-\mu(\tilde z_2,\tilde z_3)}{\sigma(\tilde\omega)}\right)\mathbf{1}(\min\{z_{1-\frac{\alpha-\gamma}{2}},\tilde{z}_2+\tilde c\}\geq\tilde z_3-c_u)\right]g(\tilde z_2,\tilde z_3)d\tilde z_3d\tilde z_2-(1-\alpha) \\
&=\int_{-\infty}^{c_u-z_{1-\frac{\alpha-\gamma}{2}}}\int_{-z_{1-\frac{\alpha-\gamma}{2}}-\tilde c}^{z_{1-\frac{\alpha-\gamma}{2}}-\tilde c}\Phi\left(\frac{\tilde z_2+\tilde c-\mu(\tilde z_2,\tilde z_3)}{\sigma(\tilde\omega)}\right)g(\tilde z_2,\tilde z_3)d\tilde z_2\tilde z_3 \\
&\quad+\int_{c_u-z_{1-\frac{\alpha-\gamma}{2}}}^{\infty}\int_{\tilde z_3-c_u-\tilde c}^{z_{1-\frac{\alpha-\gamma}{2}}-\tilde c}\Phi\left(\frac{\tilde z_2+\tilde c-\mu(\tilde z_2,\tilde z_3)}{\sigma(\tilde\omega)}\right)g(\tilde z_2,\tilde z_3)d\tilde z_2\tilde z_3 \\
&\quad + \int_{-\infty}^{c_u+z_{1-\frac{\alpha-\gamma}{2}}}\int_{z_{1-\frac{\alpha-\gamma}{2}}-\tilde c}^{\infty}\Phi\left(\frac{z_{1-\frac{\alpha-\gamma}{2}}-\mu(\tilde z_2,\tilde z_3)}{\sigma(\tilde\omega)}\right)g(\tilde z_2,\tilde z_3)d\tilde z_2\tilde z_3 \\
&\quad - \int_{-\infty}^{c_u-z_{1-\frac{\alpha-\gamma}{2}}}\int_{-z_{1-\frac{\alpha-\gamma}{2}}-\tilde c}^{\infty}\Phi\left(\frac{-z_{1-\frac{\alpha-\gamma}{2}}-\mu(\tilde z_2,\tilde z_3)}{\sigma(\tilde\omega)}\right)g(\tilde z_2,\tilde z_3)d\tilde z_2\tilde z_3 \\
&\quad - \int_{c_u-z_{1-\frac{\alpha-\gamma}{2}}}^{c_u+z_{1-\frac{\alpha-\gamma}{2}}}\int_{z_{1-\frac{\alpha-\gamma}{2}}-\tilde c}^{\infty}\Phi\left(\frac{\tilde z_3-c_u-\mu(\tilde z_2,\tilde z_3)}{\sigma(\tilde\omega)}\right)g(\tilde z_2,\tilde z_3)d\tilde z_2\tilde z_3 \\
&\quad - \int_{c_u-z_{1-\frac{\alpha-\gamma}{2}}}^{\infty}\int_{\tilde z_3-c_u-\tilde c}^{z_{1-\frac{\alpha-\gamma}{2}}-\tilde c}\Phi\left(\frac{\tilde z_3-c_u-\mu(\tilde z_2,\tilde z_3)}{\sigma(\tilde\omega)}\right)g(\tilde z_2,\tilde z_3)d\tilde z_2\tilde z_3-(1-\alpha),
\end{align*}
where $g(\cdot)$ denotes the probability density function of $(\tilde Z_2,\tilde Z_3)$, $\mu(\tilde z_2,\tilde z_3)=(\omega_{12},\omega_{13})\Sigma_{22}^{-1}(\tilde z_2,\tilde z_3)^{\prime}$ and $\sigma(\tilde\omega)=\sqrt{1-(\omega_{12},\omega_{13})\Sigma_{22}^{-1}( \omega_{12}, \omega_{13})^{\prime}}$ with
\[\Sigma_{22}=\left(\begin{array}{cc}
\omega_{12} & \omega_{23} \\
\omega_{23} & \omega_{13}
\end{array}\right).\]
This function is clearly continuously differentiable.  In addition,
\begin{align*}
\frac{\partial f(\tilde c,c_u,\tilde\omega)}{\partial\tilde c}&=\int_{c_u+z_{1-\frac{\alpha-\gamma}{2}}}^\infty\left[\Phi\left(\frac{\tilde z_3-c_u-\mu(z_{1-\frac{\alpha-\gamma}{2}}-\tilde c,\tilde z_3)}{\sigma(\tilde \omega)}\right)\right. \\
&\qquad\qquad\qquad \left.-\Phi\left(\frac{z_{1-\frac{\alpha-\gamma}{2}}-\mu(z_{1-\frac{\alpha-\gamma}{2}}-\tilde c,\tilde z_3)}{\sigma(\tilde \omega)}\right)\right]g(z_{1-\frac{\alpha-\gamma}{2}}-\tilde c,\tilde z_3)d\tilde z_3 \\
&\quad +\int_{-\infty}^{c_u-z_{1-\frac{\alpha-\gamma}{2}}}\int_{-z_{1-\frac{\alpha-\gamma}{2}}-\tilde c}^{z_{1-\frac{\alpha-\gamma}{2}}-\tilde c}\frac{1}{\sigma(\tilde \omega)}\phi\left(\frac{\tilde z_2+\tilde c-\mu(\tilde z_2,\tilde z_3)}{\sigma(\tilde\omega)}\right)g(\tilde z_2,\tilde z_3)d\tilde z_2d\tilde z_3 \\
&\quad +\int_{c_u-z_{1-\frac{\alpha-\gamma}{2}}}^{\infty}\int_{\tilde z_3-c_u-\tilde c}^{z_{1-\frac{\alpha-\gamma}{2}}-\tilde c}\frac{1}{\sigma(\tilde \omega)}\phi\left(\frac{\tilde z_2+\tilde c-\mu(\tilde z_2,\tilde z_3)}{\sigma(\tilde\omega)}\right)g(\tilde z_2,\tilde z_3)d\tilde z_2d\tilde z_3>0
\end{align*}
for all $(\tilde c,c_u,\tilde\omega)\in\mathbb{R}\times\widetilde{\mathcal{C}}$ with $\omega_{12},\omega_{13}\neq 0$ since all three integrals are stricly positive.  

Next, note that for any $(c_u,\tilde\omega)\in\widetilde{\mathcal{C}}$ with $\omega_{12}, \omega_{13}\neq 0$, there exists $\tilde c\in\mathbb{R}_{\infty}$ such that $f(\tilde c,c_u,\tilde\omega)=0$ since $f(\cdot,c_u,\tilde\omega)$ is continuously strictly increasing,
\[\lim_{\tilde c\rightarrow -\infty}f(\tilde c,c_u,\tilde\omega)=-(1-\alpha)<0\]
and
\[\lim_{\tilde c\rightarrow \infty}f(\tilde c,c_u,\tilde\omega)=P(-\min\{z_{1-\frac{\alpha-\gamma}{2}},-\tilde Z_3+c_u\}\leq Z_1\leq z_{1-\frac{\alpha-\gamma}{2}})-(1-\alpha)\geq 0\]
by \eqref{lower bar c_u} and the fact that $P(-\min\{z_{1-\frac{\alpha-\gamma}{2}},-\tilde Z_3+c_u\}\leq Z_1\leq z_{1-\frac{\alpha-\gamma}{2}})$ is increasing in $c_u$.

Thus, the implicit function theorem implies that $\tilde{c}(c_u,\tilde\omega)$ is continuous at all $(c_u,\tilde\omega)\in \widetilde{\mathcal{C}}$ with $\omega_{12},\omega_{13}\neq 0$ and is therefore continuous at all $(c_u,\tilde\omega)\in \widetilde{\mathcal{C}}$ by the definition of $\tilde{c}(c_u,\tilde\omega)$ at $(c_u,\tilde\omega)\in \widetilde{\mathcal{C}}$ with $\omega_{12}=0$ or $\omega_{13}=0$. 
\end{proof}

\begin{lemma} \label{lem:zero var size}
For any $(c_u,\tilde\omega)\in \widetilde{\mathcal{C}}$ with $\omega_{12}=0$ or $\omega_{13}= 0$, 
\[P(- \min \{z_{1-(\alpha-\gamma)/2},-\tilde  Z_3+c_u\} \leq Z_1 \leq \min\{z_{1-(\alpha-\gamma)/2},\tilde  Z_2+\tilde c(c_u,\tilde \omega)\})=1-\alpha.\]
\end{lemma}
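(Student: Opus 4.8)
The plan is to deduce the identity at the degenerate boundary points of $\bar{\mathcal S}$ from the fact that it holds throughout the interior. For $\tilde c,c_u\in\mathbb{R}_\infty$ and $\tilde\omega$ for which the matrix in \eqref{Z-dist} is positive semidefinite, write
\[
g(\tilde c,c_u,\tilde\omega)=P\left(-\min\{z_{1-(\alpha-\gamma)/2},-\tilde Z_3+c_u\}\le Z_1\le\min\{z_{1-(\alpha-\gamma)/2},\tilde Z_2+\tilde c\}\right),
\]
where $(Z_1,\tilde Z_2,\tilde Z_3)$ is distributed as in \eqref{Z-dist}. By the defining relation \eqref{cl_cc} we have $g(\tilde c(c_u,\tilde\omega),c_u,\tilde\omega)=1-\alpha$ for every $(c_u,\tilde\omega)\in\widetilde{\mathcal C}$ with $\omega_{12},\omega_{13}\neq0$, and $\tilde c$ is continuous on all of $\widetilde{\mathcal C}$ by Lemma \ref{lem:existence and uniqueness of c-tilde}. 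It therefore suffices to establish (a) that $g$ is jointly continuous in $(\tilde c,c_u,\tilde\omega)$, and (b) that every $(c_u,\tilde\omega)\in\widetilde{\mathcal C}$ with $\omega_{12}=0$ or $\omega_{13}=0$ is a limit, within $\widetilde{\mathcal C}$, of points $(c_u^{(n)},\tilde\omega^{(n)})$ having $\omega_{12}^{(n)},\omega_{13}^{(n)}\neq0$; given (a) and (b), the continuity of $\tilde c$ then yields $g(\tilde c(c_u,\tilde\omega),c_u,\tilde\omega)=\lim_n g(\tilde c(c_u^{(n)},\tilde\omega^{(n)}),c_u^{(n)},\tilde\omega^{(n)})=1-\alpha$, which is the claim.

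For (b), I would first note that $\omega_{12}=0$ or $\omega_{13}=0$ forces $\omega_{23}=0$ by the description of $\bar{\mathcal S}$, so the point is $\tilde\omega=(\omega_{12},\omega_{13},0)$ with $\omega_{12}+\omega_{13}<1$. Take $\tilde\omega^{(n)}=(\omega_{12}\vee n^{-1},\,\omega_{13}\vee n^{-1},\,0)$, which lies in $\mathcal S$ for $n$ large (its defining inequality reduces to $(\omega_{12}\vee n^{-1})(\omega_{13}\vee n^{-1})(1-(\omega_{12}\vee n^{-1})-(\omega_{13}\vee n^{-1}))>0$), together with $c_u^{(n)}=\max\{c_u,\underline{c_u}(\tilde\omega^{(n)})\}$, so that $(c_u^{(n)},\tilde\omega^{(n)})\in\widetilde{\mathcal C}$ by construction. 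Since $\underline{c_u}$ is continuous — which follows from \eqref{lower bar c_u} by a strict-monotonicity/implicit-function argument identical to the one used in the proof of Proposition \ref{prop:c existence} — and $c_u\ge\underline{c_u}(\tilde\omega)$, one gets $c_u^{(n)}\to c_u$ and $\tilde\omega^{(n)}\to\tilde\omega$, giving (b). Note that the case $\omega_{12}=\omega_{13}=0$ needs no separate diagonal argument, since the continuity of $\tilde c$ on $\widetilde{\mathcal C}$ from Lemma \ref{lem:existence and uniqueness of c-tilde} already delivers $\tilde c(c_u^{(n)},\tilde\omega^{(n)})\to\tilde c(c_u,\tilde\omega)$ along this sequence.

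The main obstacle is (a), namely continuity of $g$ as the covariance matrix degenerates: when $\tilde\omega^{(n)}\to\tilde\omega$ with $\omega_{12}=0$ and/or $\omega_{13}=0$, one or both of $\tilde Z_2^{(n)},\tilde Z_3^{(n)}$ collapses to a point mass at $0$, so no density argument for the joint law is available. I would instead argue via the portmanteau theorem: $(Z_1^{(n)},\tilde Z_2^{(n)},\tilde Z_3^{(n)})$ converges in distribution to $(Z_1,\tilde Z_2,\tilde Z_3)$, the indicator defining $g$ is continuous off the set $\{Z_1=-\min\{z_{1-(\alpha-\gamma)/2},-\tilde Z_3+c_u\}\}\cup\{Z_1=\min\{z_{1-(\alpha-\gamma)/2},\tilde Z_2+\tilde c\}\}$, and this set is null under the limit law: on the region where $-\tilde Z_3+c_u\ge z_{1-(\alpha-\gamma)/2}$ the first event is $\{Z_1=-z_{1-(\alpha-\gamma)/2}\}$, which has probability zero because $Z_1\sim\mathcal N(0,1)$, and on its complement it is $\{Z_1-\tilde Z_3=-c_u\}$, which has probability zero because $\Var(Z_1-\tilde Z_3)=1-\omega_{13}>0$ on $\bar{\mathcal S}$; the analogous statement for the second event uses $\Var(Z_1-\tilde Z_2)=1-\omega_{12}>0$. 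Continuity at $\tilde c=\infty$ and $c_u=\infty$ is then routine: as $\tilde c\to\infty$ the quantity $\min\{z_{1-(\alpha-\gamma)/2},\tilde Z_2+\tilde c\}$ increases monotonically to $z_{1-(\alpha-\gamma)/2}$, so monotone convergence applies, and similarly for $c_u$. Assembling (a) and (b) with the interior identity and the continuity of $\tilde c$ completes the argument.
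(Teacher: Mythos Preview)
Your argument is correct and follows the same strategy as the paper's proof: the identity holds on the interior of $\widetilde{\mathcal C}$ by the definition \eqref{cl_cc}, $\tilde c$ is continuous by Lemma \ref{lem:existence and uniqueness of c-tilde}, and the probability is continuous in $(\tilde c,c_u,\tilde\omega)$, so the identity extends to the boundary by taking limits. The paper compresses this into two sentences, invoking ``the continuity of the density function of $(Z_1,\widetilde Z_2,\widetilde Z_3)$ in $\tilde\omega$'' to justify continuity of the probability. Your version is more careful precisely where the paper is terse: at $\omega_{12}=0$ or $\omega_{13}=0$ the joint density of $(Z_1,\tilde Z_2,\tilde Z_3)$ does not exist, so a density-continuity argument is not literally available; your portmanteau/continuous-mapping argument, together with the verification that the boundary event $\{Z_1=\pm\min(\cdot)\}$ has probability zero because $\Var(Z_1-\tilde Z_2)=1-\omega_{12}>0$ and $\Var(Z_1-\tilde Z_3)=1-\omega_{13}>0$ on $\bar{\mathcal S}$, supplies the missing rigor. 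The explicit construction in (b) and the continuity of $\underline{c_u}$ are likewise implicit in the paper's one-line proof.
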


\begin{proof} By \eqref{cl_cc},
\[P(- \min \{z_{1-(\alpha-\gamma)/2},-\tilde  Z_3+c_u\} \leq Z_1 \leq \min\{z_{1-(\alpha-\gamma)/2},\tilde  Z_2+\tilde c(c_u,\tilde \omega)\})=1-\alpha\]
for all $(c_u,\tilde\omega)\in \widetilde{\mathcal{C}}$ with $\omega_{12},\omega_{13}\neq 0$.  Since the probability on the left hand side of this equality is continuous in $\tilde\omega$ by Lemma \ref{lem:existence and uniqueness of c-tilde} and the continuity of the density function of $(Z_1,\widetilde Z_2,\widetilde Z_3)$ in $\tilde\omega$, the result immediately follows. 
\end{proof}

\begin{proof}[Proof of Proposition \ref{prop:c_u existence}] 
Very similar arguments to those given in the proof of Lemma 1 provide that $\underline{c_u}:\bar{\mathcal{S}}\rightarrow \mathbb{R}$ exists and is continuous.  Thus, $[\underline{c_u}(\cdot),\infty]$ is nonempty, compact-valued and continuous when treated as a correspondence from $\bar{\mathcal{S}}$ into $\mathbb{R}_{\infty}$ (see above).

Next, note that the minimand in \eqref{c_u def} is \begin{gather*}
E[\max\{\min\{z_{1-(\alpha-\gamma)/2},\tilde  Z_2+ \tilde c(c_u,\tilde \omega)\}+\min \{z_{1-(\alpha-\gamma)/2},-\tilde  Z_3+c_u\},0\}] \\
=\int_{-\infty}^\infty \int_{-\infty}^\infty \max\{\min\{z_{1-(\alpha-\gamma)/2},\tilde z_2+\tilde c(c_u,\tilde \omega)\}+\min\{z_{1-(\alpha-\gamma)/2},-\tilde z_3+c_u\},0\}g(\tilde z_2,\tilde z_3)d\tilde z_2d\tilde z_3 \\
=\int_{-\infty}^{c_u-z_{1-(\alpha-\gamma)/2}}\int_{-z_{1-(\alpha-\gamma)/2}-\tilde c(c_u,\tilde \omega)}^{z_{1-(\alpha-\gamma)/2}-\tilde c(c_u,\tilde \omega)}(\tilde z_2+\tilde c(c_u,\tilde \omega)+z_{1-(\alpha-\gamma)/2})g(\tilde z_2,\tilde z_3)d\tilde z_2d\tilde z_3 \\
+\int_{c_u-z_{1-(\alpha-\gamma)/2}}^{\infty}\int_{\tilde z_3-\tilde c(c_u,\tilde \omega)-c_u}^{z_{1-(\alpha-\gamma)/2}-\tilde c(c_u,\tilde \omega)}(\tilde z_2-\tilde z_3+\tilde c(c_u,\tilde \omega)+c_u)g(\tilde z_2,\tilde z_3)d\tilde z_2d\tilde z_3 \\
+\int_{-\infty}^{c_u-z_{1-(\alpha-\gamma)/2}}\int_{z_{1-(\alpha-\gamma)/2}-\tilde c(c_u,\tilde \omega)}^{\infty}2z_{1-(\alpha-\gamma)/2}g(\tilde z_2,\tilde z_3)d\tilde z_2d\tilde z_3 \\
+\int_{c_u-z_{1-(\alpha-\gamma)/2}}^{c_u+z_{1-(\alpha-\gamma)/2}}\int_{z_{1-(\alpha-\gamma)/2}-\tilde c(c_u,\tilde \omega)}^{\infty}(z_{1-(\alpha-\gamma)/2}-\tilde z_3+c_u)g(\tilde z_2,\tilde z_3)d\tilde z_2d\tilde z_3,
\end{gather*}
where $g(\cdot)$ denotes the probability density function of $(\tilde Z_2,\tilde Z_3)$.  When treated as a function from $\widetilde{\mathcal{C}}$ into $\mathbb{R}_+$, this expression is clearly continuous in $(c_u,\tilde\omega)$ since $\tilde c:\widetilde{\mathcal{C}}\rightarrow\mathbb{R}$ is continuous by Lemma \ref{lem:existence and uniqueness of c-tilde}.  The maximum theorem then implies the statement of the proposition. \end{proof}

\begin{proof}[Proof of Proposition \ref{prop:c_ell}]
Let $(Z_1^*,\tilde Z_2^*, \tilde Z_3^*)' = (- Z_1,- \tilde Z_3,- \tilde Z_2)'$ and note that 
\[
	P(- \min \{z_{1-(\alpha-\gamma)/2},-\tilde  Z_3+c_u(\tilde\omega_{12},\tilde\omega_{13},\tilde\omega_{23})\} \leq Z_1 \leq \min\{z_{1-(\alpha-\gamma)/2},\tilde  Z_2+ c_\ell(\tilde\omega_{12},\tilde\omega_{13},\tilde\omega_{23})\})=1-\alpha
	\]
	and 
\[
E[\max\{\min\{z_{1-(\alpha-\gamma)/2},\tilde  Z_2+  c_\ell(\tilde\omega_{12},\tilde\omega_{13},\tilde\omega_{23})\}+\min \{z_{1-(\alpha-\gamma)/2},-\tilde  Z_3+c_u(\tilde\omega_{12},\tilde\omega_{13},\tilde\omega_{23})\},0\}]
\]
	are equivalent to
\[
	P(- \min \{z_{1-(\alpha-\gamma)/2},-\tilde  Z_3^*+c_\ell(\tilde\omega_{12},\tilde\omega_{13},\tilde\omega_{23})\} \leq Z_1^* \leq \min\{z_{1-(\alpha-\gamma)/2},\tilde  Z_2^*+ c_u(\tilde\omega_{12},\tilde\omega_{13},\tilde\omega_{23})\})=1-\alpha
\]
and
\[
E[\max\{\min\{z_{1-(\alpha-\gamma)/2},\tilde  Z_2^*+  c_u(\tilde\omega_{12},\tilde\omega_{13},\tilde\omega_{23})\}+\min \{z_{1-(\alpha-\gamma)/2},-\tilde  Z_3^*+c_\ell(\tilde\omega_{12},\tilde\omega_{13},\tilde\omega_{23})\},0\}].
\]
The result then follows by noting that $(Z_1^*,\tilde Z_2^*, \tilde Z_3^*)'  \sim (Z_1,\tilde Z_3,\tilde Z_2)'$. 
\end{proof}

\begin{proof}[Proof of Theorem \ref{thm:uniform coverage}] First, we provide the proof of the statement of the theorem for $CI_{u,n}(\cdot)$. Under (i)--(ii), standard subsequencing arguments in the uniform inference literature (see e.g.,~\citealp{AG10,McC17,ACG20}) provide that 
\begin{equation}
\liminf_{n\rightarrow\infty}\text{ }\inf_{\lambda\in\Lambda}P_\lambda\left(b\in CI_{u,n}(\hat b_n,\hat d_n;\widehat\Sigma_n)\right)=\lim_{n\rightarrow\infty}P_{\lambda_{k_n,\mathfrak{b},\mathfrak{d},\Sigma^*}}\left(b_{k_n,\mathfrak{b}}\in CI_{u,k_n}(\hat b_{k_n},\hat d_{k_n};\widehat\Sigma_{k_n})\right) \label{1-sided subseq prob}
\end{equation}
for a subsequence $\{k_n:n\geq 1\}$ of $\{n:n\geq 1\}$ such that $\lambda_{k_n,\mathfrak{b},\mathfrak{d},\Sigma^*}\in\Lambda$ for all $n\geq 1$, $\sqrt{k_n}(b_{k_n,\mathfrak{b}},d_{k_n,\mathfrak{d}})\rightarrow (\mathfrak{b},\mathfrak{d}) $ and $\Sigma_{k_n,\Sigma^*}\rightarrow \Sigma^*$ for some $(\mathfrak{b},\mathfrak{d},\Sigma^*)\in \mathbb{R}_\infty\times \mathbb{R}_{+,\infty}^k\times \Phi$ with $\lambda_{\min}(\Sigma^*)\geq \kappa$ and $\lambda_{\max}(\Sigma^*)\leq \kappa^{-1}$.  

Let $\Omega^*=\diag(\Sigma^*)^{-1/2}\Sigma^*\diag(\Sigma^*)^{-1/2}$ and $s^*(\Omega^*)$ denote the subset of the set of indices $\{1,\ldots,k\}$ that maximizes $\Omega_{bd^{(s)}}^*\Omega_{d^{(s)}d^{(s)}}^{*-1}\Omega_{d^{(s)}b}^*$ amongst all subsets of indices $s\subset\{1,\ldots,k\}$ such that the elements of $\Omega_{bd^{(s)}}^*\Omega_{d^{(s)}d^{(s)}}^{*-1}$ are non-negative.  Since $\widehat\Omega_{k_n}\overset{p}\longrightarrow \Omega^*$ under $\{\lambda_{k_n,\mathfrak{b},\mathfrak{d},\Sigma^*}:n\geq 1\}$ by (ii), (iii) and (v), note that $s^*(\widehat\Omega_{k_n})\overset{p}\longrightarrow s^*(\Omega^*)$ under $\{\lambda_{k_n,\mathfrak{b},\mathfrak{d},\Sigma^*}:n\geq 1\}$.  Thus, \eqref{1-sided subseq prob} implies
\begin{align}
&\liminf_{n\rightarrow\infty}\text{ }\inf_{\lambda\in\Lambda}P_\lambda\left(b\in CI_{u,n}(\hat b_n,\hat d_n;\widehat\Sigma_n)\right)=\lim_{n\rightarrow\infty}P_{\lambda_{k_n,\mathfrak{b},\mathfrak{d},\Sigma^*}}\left( b_{k_n,\mathfrak{b}}\geq \hat b_{k_n}-\frac{\sqrt{\widehat\Sigma_{k_n,bb}}}{\sqrt{k_n}}\min\left\{z_{1-\alpha+\gamma},\right.\right. \notag  \\
&\quad\left.\left.\widehat\Omega_{k_n,bd^{(s^*)}}\widehat\Omega_{k_n,d^{(s^*)}d^{(s^*)}}^{-1}\diag(\widehat\Sigma_{k_n,d^{(s^*)}d^{(s^*)}})^{-1/2}\sqrt{k_n}\hat d_{k_n}^{(s^*)}+c\left(\widehat\Omega_{k_n,bd^{(s^*)}}\widehat\Omega_{k_n,d^{(s^*)}d^{(s^*)}}^{-1}\widehat\Omega_{k_n,d^{(s^*)}b}\right)\right\}\right) \notag \\
&=\lim_{n\rightarrow\infty}P_{\lambda_{k_n,\mathfrak{b},\mathfrak{d},\Sigma^*}}\left(\frac{\sqrt{k_n}(\hat b_{k_n}-b_{k_n,\mathfrak{b}})}{\sqrt{\widehat\Sigma_{k_n,bb}}}\leq \min\left\{z_{1-\alpha+\gamma},\right.\right. \notag \\
&\quad\left.\left.\widehat\Omega_{k_n,bd^{(s^*)}}\widehat\Omega_{k_n,d^{(s^*)}d^{(s^*)}}^{-1}\diag(\widehat\Sigma_{k_n,d^{(s^*)}d^{(s^*)}})^{-1/2}\sqrt{k_n}\hat d_{k_n}^{(s^*)}+c\left(\widehat\Omega_{k_n,bd^{(s^*)}}\widehat\Omega_{k_n,d^{(s^*)}d^{(s^*)}}^{-1}\widehat\Omega_{k_n,d^{(s^*)}b}\right)\right\}\right) \notag \\
&=\begin{cases}
P\left(Z_1\leq \min\left\{z_{1-\alpha+\gamma},\Omega_{bd^{(s^*)}}^*\Omega_{d^{(s^*)}d^{(s^*)}}^{-1}Y_\delta^{(s^*)}+c\left(\Omega_{bd^{(s^*)}}^*\Omega_{d^{(s^*)}d^{(s^*)}}^{*-1}\Omega_{d^{(s^*)}b}^*\right)\right\}\right) & \text{if } \|\mathfrak{d}^{(s^*)}\|<\infty, \\
P(Z_1\leq z_{1-\alpha+\gamma}) & \text{if } \|\mathfrak{d}^{(s^*)}\|=\infty
\end{cases} \label{one-sided case 2}
 \end{align}
 by (ii)--(v) and Proposition \ref{prop:c existence}, where we use $s^*$ as shorthand for $s^*(\Omega^*)$ and
\begin{equation*}
\left(\begin{array}{c}
Z_1 \\
Y_{\delta}^{(s^*)}
\end{array}\right)\sim \mathcal{N}
\left(\left(\begin{array}{c}
0 \\
\delta^{(s^*)}
\end{array}\right), 
\left(\begin{array}{cc}
1 & \Omega_{bd^{(s^*)}} \\
\Omega_{d^{(s^*)}b} & \Omega_{d^{(s^*)}d^{(s^*)}}
\end{array}\right)\right)
\end{equation*} 
with $\delta^{(s^*)}=\diag(\Sigma_{d^{(s^*)}d^{(s^*)}}^*)^{-1/2}\mathfrak{d}^{(s^*)}$.  Now for the $\|\mathfrak{d}^{(s^*)}\|<\infty$ case, since $\Omega_{bd^{(s^*)}}^*\Omega_{d^{(s^*)}d^{(s^*)}}^{-1}\delta^{(s^*)}\geq 0$ by (i), \eqref{bd drifting seq} and (v),
\begin{gather}
P\left(Z_1\leq \min\left\{z_{1-\alpha+\gamma},\Omega_{bd^{(s^*)}}^*\Omega_{d^{(s^*)}d^{(s^*)}}^{-1}Y_\delta^{(s^*)}+c\left(\Omega_{bd^{(s^*)}}^*\Omega_{d^{(s^*)}d^{(s^*)}}^{*-1}\Omega_{d^{(s^*)}b}^*\right)\right\}\right) \notag \\
=P\left(Z_1\leq \min\left\{z_{1-\alpha+\gamma},\Omega_{bd^{(s^*)}}^*\Omega_{d^{(s^*)}d^{(s^*)}}^{-1}\delta^{(s^*)}+\tilde Z_2+c\left(\Omega_{bd^{(s^*)}}^*\Omega_{d^{(s^*)}d^{(s^*)}}^{*-1}\Omega_{d^{(s^*)}b}^*\right)\right\}\right) \notag \\
\geq P\left(Z_1\leq \min\left\{z_{1-\alpha+\gamma},\tilde Z_2+c\left(\Omega_{bd^{(s^*)}}^*\Omega_{d^{(s^*)}d^{(s^*)}}^{*-1}\Omega_{d^{(s^*)}b}^*\right)\right\}\right)=1-\alpha \label{finite d case}
\end{gather}
 by the definition of $c(\cdot)$ in \eqref{c-function def}, where 
\begin{equation*}
\left(\begin{array}{c}
Z_1 \\
\tilde Z_2
\end{array}\right)\sim \mathcal{N}
\left(\left(\begin{array}{c}
0 \\
0
\end{array}\right), 
\left(\begin{array}{cc}
1 & \Omega_{bd^{(s^*)}}^*\Omega_{d^{(s^*)}d^{(s^*)}}^{*-1}\Omega_{d^{(s^*)}b}^* \\
\Omega_{bd^{(s^*)}}^*\Omega_{d^{(s^*)}d^{(s^*)}}^{*-1}\Omega_{d^{(s^*)}b}^* & \Omega_{bd^{(s^*)}}^*\Omega_{d^{(s^*)}d^{(s^*)}}^{*-1}\Omega_{d^{(s^*)}b}^*
\end{array}\right)\right).
\end{equation*} 
On the other hand, for the $\|\mathfrak{d}^{(s^*)}\|=\infty$ case,
\begin{equation}
P(Z_1\leq z_{1-\alpha+\gamma})=1-\alpha+\gamma>1-\alpha. \label{infinite d case}
\end{equation}
Together, \eqref{one-sided case 2}--\eqref{infinite d case} yield the lower bound in the statement of the theorem for $CI_{u,n}(\cdot)$. 

To prove the upper bound, note that by nearly identical arguments to those used to establish \eqref{1-sided subseq prob},
\begin{equation*}
\limsup_{n\rightarrow\infty}\text{ }\sup_{\lambda\in\Lambda}P_\lambda\left(b\in CI_{u,n}(\hat b_n,\hat d_n;\widehat\Sigma_n)\right)=\lim_{n\rightarrow\infty}P_{\lambda_{m_n,\mathfrak{b},\mathfrak{d},\Sigma^*}}\left(b_{m_n,\mathfrak{b}}\in CI_{u,m_n}(\hat b_{m_n},\hat d_{m_n};\widehat\Sigma_{m_n})\right) 
\end{equation*}
for a subsequence $\{m_n:n\geq 1\}$ of $\{n:n\geq 1\}$ such that $\lambda_{m_n,\mathfrak{b},\mathfrak{d},\Sigma^*}\in\Lambda$ for all $n\geq 1$, $\sqrt{m_n}(b_{m_n,\mathfrak{b}},d_{m_n,\mathfrak{d}})\rightarrow (\mathfrak{b},\mathfrak{d}) $ and $\Sigma_{m_n,\Sigma^*}\rightarrow \Sigma^*$ for some $(\mathfrak{b},\mathfrak{d},\Sigma^*)\in \mathbb{R}_\infty\times \mathbb{R}_{+,\infty}^k\times \Phi$ with $\lambda_{\min}(\Sigma^*)\geq \kappa$ and $\lambda_{\max}(\Sigma^*)\leq \kappa^{-1}$.  Note that for the probability to the left of the inequality in \eqref{finite d case},
\begin{gather*}
P\left(Z_1\leq \min\left\{z_{1-\alpha+\gamma},\Omega_{bd^{(s^*)}}^*\Omega_{d^{(s^*)}d^{(s^*)}}^{-1}\delta^{(s^*)}+\tilde Z_2+c\left(\Omega_{bd^{(s^*)}}^*\Omega_{d^{(s^*)}d^{(s^*)}}^{*-1}\Omega_{d^{(s^*)}b}^*\right)\right\}\right) \\
\leq P\left(Z_1\leq z_{1-\alpha+\gamma}\right)=1-\alpha+\gamma.
\end{gather*}
Then, nearly identical reasoning used to establish \eqref{one-sided case 2}--\eqref{infinite d case}, replacing ``$\liminf_{n\rightarrow\infty}\inf_{\lambda\in\Lambda}$'' with ``$\limsup_{n\rightarrow\infty}\sup_{\lambda\in\Lambda}$'' and the subsequences $\{k_n:n\geq 1\}$ and $\{\lambda_{k_n,\mathfrak{b},\mathfrak{d},\Sigma^*}\in\Lambda:n\geq 1\}$ with $\{\lambda_{m_n,\mathfrak{b},\mathfrak{d},\Sigma^*}\in\Lambda:n\geq 1\}$, yields the upper bound in the statement of the theorem for $CI_{u,n}(\cdot)$. 

Next, we provide the proof of the statement of the theorem for $CI_{t,n}(\cdot)$. The same arguments used to establish \eqref{1-sided subseq prob} also apply to $CI_{t,n}(\hat b_n,\hat d_n;\widehat\Sigma_n)$ so that it suffices to consider $\lim_{n\rightarrow\infty}P_{\lambda_{k_n,\mathfrak{b},\mathfrak{d},\Sigma^*}}(b_{k_n,\mathfrak{b}}\in CI_{T,k_n}(\hat b_{k_n},\hat d_{k_n};\widehat\Sigma_{k_n}))$ under the same subsequences $\{k_n:n\geq 1\}$ of $\{n:n\geq 1\}$ and sequences of parameters $\{\lambda_{k_n,\mathfrak{b},\mathfrak{d},\Sigma^*}:n\geq 1\}$ described in the proof of the statement of the theorem for $CI_{u,n}(\cdot)$.

Using analogous notation and reasoning to the proof above, 
\begin{align}
&\liminf_{n\rightarrow\infty}\text{ }\inf_{\lambda\in\Lambda}P_\lambda\left(b\in CI_{t,n}(\hat b_n,\hat d_n;\widehat\Sigma_n)\right)=\lim_{n\rightarrow\infty}P_{\lambda_{k_n,\mathfrak{b},\mathfrak{d},\Sigma^*}}\left( \hat b_{k_n}-\frac{\sqrt{\widehat\Sigma_{k_n,bb}}}{\sqrt{k_n}}\min\left\{z_{1-\frac{\alpha-\gamma}{2}},\right.\right. \notag  \\
&\quad\left.\widehat\Omega_{k_n,bd^{(s_1^*)}}\widehat\Omega_{k_n,d^{(s_1^*)}d^{(s_1^*)}}^{-1}\diag(\widehat\Sigma_{k_n,d^{(s_1^*)}d^{(s_1^*)}})^{-1/2}\sqrt{k_n}\hat d_{k_n}^{(s_1^*)}+c_\ell\left(\widehat{\tilde \Omega}_{k_n}^{(s_1^*,s_2^*)}\right)\right\}\leq b_{k_n,\mathfrak{b}}\leq \hat b_{k_n} \notag\\
& \left.+\frac{\sqrt{\widehat\Sigma_{k_n,bb}}}{\sqrt{k_n}}\min\left\{z_{1-\frac{\alpha-\gamma}{2}},-\widehat\Omega_{k_n,bd^{(s_2^*)}}\widehat\Omega_{k_n,d^{(s_2^*)}d^{(s_2^*)}}^{-1}\diag(\widehat\Sigma_{k_n,d^{(s_2^*)}d^{(s_2^*)}})^{-1/2}\sqrt{k_n}\hat d_{k_n}^{(s_2^*)}+c_u\left(\widehat{\tilde \Omega}_{k_n}^{(s_1^*,s_2^*)}\right)\right\} \right) \notag \\
&=\lim_{n\rightarrow\infty}P_{\lambda_{k_n,\mathfrak{b},\mathfrak{d},\Sigma^*}}\left(-\min\left\{z_{1-\frac{\alpha-\gamma}{2}},-\widehat\Omega_{k_n,bd^{(s_2^*)}}\widehat\Omega_{k_n,d^{(s_2^*)}d^{(s_2^*)}}^{-1}\diag(\widehat\Sigma_{k_n,d^{(s_2^*)}d^{(s_2^*)}})^{-1/2}\sqrt{k_n}\hat d_{k_n}^{(s_2^*)}+c_u\left(\widehat{\tilde \Omega}_{k_n}^{(s_1^*,s_2^*)}\right)\right\} \right. \notag \\
&\quad  \leq\frac{\sqrt{k_n}(\hat b_{k_n}-b_{k_n,\mathfrak{b}})}{\sqrt{\widehat\Sigma_{k_n,bb}}} \notag \\
&\quad \left.\leq \min\left\{z_{1-\frac{\alpha-\gamma}{2}},\widehat\Omega_{k_n,bd^{(s_1^*)}}\widehat\Omega_{k_n,d^{(s_1^*)}d^{(s_1^*)}}^{-1}\diag(\widehat\Sigma_{k_n,d^{(s_1^*)}d^{(s_1^*)}})^{-1/2}\sqrt{k_n}\hat d_{k_n}^{(s_1^*)}+\tilde c \left(c_u\left(\widehat{\tilde \Omega}_{k_n}^{(s_1^*,s_2^*)}\right),\widehat{\tilde \Omega}_{k_n}^{(s_1^*,s_2^*)}\right)\right\}\right). \label{two-sided case 4}
\end{align}
Since $\widehat{\tilde \Omega}_{k_n}^{(s_1^*,s_2^*)}\overset{p}\longrightarrow {\tilde \Omega}^{*(s_1^*,s_2^*)}$ under $\{\lambda_{k_n,\mathfrak{b},\mathfrak{d},\Sigma^*}:n\geq 1\}$ as $k_n\rightarrow\infty$ by (ii), (iii) and (v), there exists a subsequence $\{l_n:n\geq 1\}$ of $\{k_n:n\geq 1\}$ such that $\widehat{\tilde \Omega}_{l_n}^{(s_1^*,s_2^*)}\overset{a.s.}\longrightarrow {\tilde \Omega}^{*(s_1^*,s_2^*)}$ under $\{\lambda_{l_n,\mathfrak{b},\mathfrak{d},\Sigma^*}:n\geq 1\}$ as $l_n\rightarrow\infty$.  Next, by the properties of $\tilde c_u:\bar{\mathcal{S}}\rightrightarrows \mathbb{R}$ given in Proposition \ref{prop:c_u existence}, there exists a subsequence $\{h_n:n\geq 1\}$ of $\{l_n:n\geq 1\}$ for which the subsequence $\left\{c_u\left(\widehat{\tilde \Omega}_{h_n}^{(s_1^*,s_2^*)}\right):n\geq 1\right\}$ of $\left\{c_u\left(\widehat{\tilde \Omega}_{l_n}^{(s_1^*,s_2^*)}\right):n\geq 1\right\}$ is such that $c_u\left(\widehat{\tilde \Omega}_{h_n}^{(s_1^*,s_2^*)}\right)\in \tilde c_u\left(\widehat{\tilde \Omega}_{h_n}^{(s_1^*,s_2^*)}\right)$ for all $n\geq 1$ and $c_u\left(\widehat{\tilde \Omega}_{h_n}^{(s_1^*,s_2^*)}\right)\overset{a.s.}\longrightarrow c_u^*\left({\tilde \Omega}^{*(s_1^*,s_2^*)}\right)$ for some $c_u^*\left({\tilde \Omega}^{*(s_1^*,s_2^*)}\right)\in \tilde c_u\left({\tilde \Omega}^{*(s_1^*,s_2^*)}\right)$ as $n\rightarrow\infty$.  In conjunction with Lemma \ref{lem:existence and uniqueness of c-tilde}, this implies that \eqref{two-sided case 4} is equal to
\begin{align}
&\lim_{n\rightarrow\infty}P_{\lambda_{h_n,\mathfrak{b},\mathfrak{d},\Sigma^*}}\left(-\min\left\{z_{1-\frac{\alpha-\gamma}{2}},-\widehat\Omega_{h_n,bd^{(s_2^*)}}\widehat\Omega_{h_n,d^{(s_2^*)}d^{(s_2^*)}}^{-1}\diag(\widehat\Sigma_{h_n,d^{(s_2^*)}d^{(s_2^*)}})^{-1/2}\sqrt{h_n}\hat d_{h_n}^{(s_2^*)}+c_u^*\left({\tilde \Omega}^{*(s_1^*,s_2^*)}\right)\right\} \right. \notag \\
&\quad  \leq\frac{\sqrt{h_n}(\hat b_{h_n}-b_{h_n,\mathfrak{b}})}{\sqrt{\widehat\Sigma_{h_n,bb}}} \notag \\
&\quad \left.\leq \min\left\{z_{1-\frac{\alpha-\gamma}{2}},\widehat\Omega_{h_n,bd^{(s_1^*)}}\widehat\Omega_{h_n,d^{(s_1^*)}d^{(s_1^*)}}^{-1}\diag(\widehat\Sigma_{h_n,d^{(s_1^*)}d^{(s_1^*)}})^{-1/2}\sqrt{h_n}\hat d_{h_n}^{(s_1^*)}+\tilde c \left(c_u^*\left({\tilde \Omega}^{*(s_1^*,s_2^*)}\right),{\tilde \Omega}^{*(s_1^*,s_2^*)}\right)\right\}\right). \label{two-sided subsequence}
\end{align}

If $\|\mathfrak{d}^{(s_1^*)}\|,\|\mathfrak{d}^{(s_2^*)}\|<\infty$, since $\Omega_{bd^{(s_1^*)}}^*\Omega_{d^{(s_1^*)}d^{(s_1^*)}}^{*-1}\delta^{(s_1^*)}\geq 0$ and $\Omega_{bd^{(s_2^*)}}^*\Omega_{d^{(s_2^*)}d^{(s_2^*)}}^{*-1}\delta^{(s_2^*)}\leq 0$ by (i)--(v) and \eqref{bd drifting seq}, \eqref{two-sided subsequence} is equal to
\begin{align}
&P\left(-\min\left\{z_{1-\frac{\alpha-\gamma}{2}},-\Omega_{bd^{(s_2^*)}}^*\Omega_{d^{(s_2^*)}d^{(s_2^*)}}^{*-1}Y_\delta^{(s_2^*)}+c_u^*\left({\tilde \Omega}^{*(s_1^*,s_2^*)}\right)\right\}\right. \notag \\
&\quad\left.\leq Z_1\leq \min\left\{z_{1-\frac{\alpha-\gamma}{2}},\Omega_{bd^{(s_1^*)}}^*\Omega_{d^{(s_1^*)}d^{(s_1^*)}}^{*-1}Y_\delta^{(s_1^*)}+\tilde c \left(c_u^*\left({\tilde \Omega}^{*(s_1^*,s_2^*)}\right),{\tilde \Omega}^{*(s_1^*,s_2^*)}\right)\right\}\right) \notag \\
&=P\left(-\min\left\{z_{1-\frac{\alpha-\gamma}{2}},-\Omega_{bd^{(s_2^*)}}^*\Omega_{d^{(s_2^*)}d^{(s_2^*)}}^{*-1}\delta^{(s_2^*)}-\tilde Z_3+c_u^*\left({\tilde \Omega}^{*(s_1^*,s_2^*)}\right)\right\}\right. \notag \\
&\quad \left.\leq Z_1\leq \min\left\{z_{1-\frac{\alpha-\gamma}{2}},\Omega_{bd^{(s_1^*)}}^*\Omega_{d^{(s_1^*)}d^{(s_1^*)}}^{*-1}\delta^{(s_1^*)}+\tilde Z_2+\tilde c \left(c_u^*\left({\tilde \Omega}^{*(s_1^*,s_2^*)}\right),{\tilde \Omega}^{*(s_1^*,s_2^*)}\right)\right\}\right) \notag \\
& \geq P\left(-\min\left\{z_{1-\frac{\alpha-\gamma}{2}},-\tilde Z_3+c_u^*\left({\tilde \Omega}^{*(s_1^*,s_2^*)}\right)\right\}\right. \notag \\
&\quad \left.\leq Z_1\leq \min\left\{z_{1-\frac{\alpha-\gamma}{2}},\tilde Z_2+\tilde c \left(c_u^*\left({\tilde \Omega}^{*(s_1^*,s_2^*)}\right),{\tilde \Omega}^{*(s_1^*,s_2^*)}\right)\right\}\right)=1-\alpha \label{two-sided both finite d case}
\end{align}
by the definition of $\tilde c(\cdot)$ in \eqref{cl_cc} and Lemma \ref{lem:zero var size}, where
\begin{equation*}
\left(\begin{array}{c}
Z_1 \\
\tilde Z_2 \\
\tilde Z_3
\end{array}\right)\sim \mathcal{N}
\left(\left(\begin{array}{c}
0 \\
0 \\
0
\end{array}\right), 
\left(\begin{array}{ccc}
1 & \Omega_{\beta \delta^{(s_1^*)}}^* & \Omega_{\beta \delta^{(s_2^*)}}^* \\
\Omega_{ \delta^{(s_1^*)}\beta}^* & \Omega_{ \delta^{(s_1^*)} \delta^{(s_1^*)}}^* & \Omega_{ \delta^{(s_1^*)} \delta^{(s_2^*)}}^* \\
\Omega_{ \delta^{(s_2^*)}\beta}^* & \Omega_{ \delta^{(s_2^*)} \delta^{(s_1^*)}}^* & \Omega_{ \delta^{(s_2^*)} \delta^{(s_2^*)}}^*
\end{array}\right)\right).
\end{equation*}
If $\|\mathfrak{d}^{(s_1^*)}\|<\infty$, $\|\mathfrak{d}^{(s_2^*)}\|=\infty$, since $\Omega_{bd^{(s_1^*)}}^*\Omega_{d^{(s_1^*)}d^{(s_1^*)}}^{*-1}\delta^{(s_1^*)}\geq 0$ by (i)--(v) and \eqref{bd drifting seq}, \eqref{two-sided subsequence} is equal to
\begin{gather}
P\left(-z_{1-\frac{\alpha-\gamma}{2}}\leq Z_1\leq \min\left\{z_{1-\frac{\alpha-\gamma}{2}},\Omega_{bd^{(s_1^*)}}^*\Omega_{d^{(s_1^*)}d^{(s_1^*)}}^{*-1}Y_\delta^{(s_1^*)}+\tilde c \left(c_u^*\left({\tilde \Omega}^{*(s_1^*,s_2^*)}\right),{\tilde \Omega}^{*(s_1^*,s_2^*)}\right)\right\}\right) \notag \\
\geq P\left(-z_{1-\frac{\alpha-\gamma}{2}}\leq Z_1\leq \min\left\{z_{1-\frac{\alpha-\gamma}{2}},\tilde Z_2+\tilde c \left(c_u^*\left({\tilde \Omega}^{*(s_1^*,s_2^*)}\right),{\tilde \Omega}^{*(s_1^*,s_2^*)}\right)\right\}\right) \notag \\
\geq P\left(-\min\left\{z_{1-\frac{\alpha-\gamma}{2}},-\tilde Z_3+c_u^*\left({\tilde \Omega}^{*(s_1^*,s_2^*)}\right)\right\}\leq Z_1\leq \min\left\{z_{1-\frac{\alpha-\gamma}{2}},\tilde Z_2+\tilde c \left(c_u^*\left({\tilde \Omega}^{*(s_1^*,s_2^*)}\right),{\tilde \Omega}^{*(s_1^*,s_2^*)}\right)\right\}\right)=1-\alpha \label{two-sided first finite d case}
\end{gather}
by the definition of $\tilde c(\cdot)$ in \eqref{cl_cc} and Lemma \ref{lem:zero var size}.  If $\|\mathfrak{d}^{(s_1^*)}\|=\infty$, $\|\mathfrak{d}^{(s_2^*)}\|<\infty$, since $\Omega_{bd^{(s_2^*)}}^*\Omega_{d^{(s_2^*)}d^{(s_2^*)}}^{*-1}\delta^{(s_2^*)}\leq 0$ by (i)--(v) and \eqref{bd drifting seq}, \eqref{two-sided subsequence} is equal to
\begin{gather}
P\left(-\min\left\{z_{1-\frac{\alpha-\gamma}{2}},-\Omega_{bd^{(s_2^*)}}^*\Omega_{d^{(s_2^*)}d^{(s_2^*)}}^{*-1}Y_\delta^{(s_2^*)}+c_u^*\left({\tilde \Omega}^{*(s_1^*,s_2^*)}\right)\right\}\leq Z_1\leq z_{1-\frac{\alpha-\gamma}{2}}\right) \notag \\
\geq P\left(-\min\left\{z_{1-\frac{\alpha-\gamma}{2}},-\tilde Z_3+c_u^*\left({\tilde \Omega}^{*(s_1^*,s_2^*)}\right)\right\}\leq Z_1\leq z_{1-\frac{\alpha-\gamma}{2}}\right) \notag\\
 \geq P\left(-\min\left\{z_{1-\frac{\alpha-\gamma}{2}},-\tilde Z_3+c_u^*\left({\tilde \Omega}^{*(s_1^*,s_2^*)}\right)\right\}\leq Z_1\leq \min\left\{z_{1-\frac{\alpha-\gamma}{2}},\tilde Z_2+\tilde c \left(c_u^*\left({\tilde \Omega}^{*(s_1^*,s_2^*)}\right),{\tilde \Omega}^{*(s_1^*,s_2^*)}\right)\right\}\right)=1-\alpha \label{two-sided second finite d case}
\end{gather}
by the definition of $\tilde c(\cdot)$ in \eqref{cl_cc} and Lemma \ref{lem:zero var size}.  Finally, if $\|\mathfrak{d}^{(s_1^*)}\|,\|\mathfrak{d}^{(s_2^*)}\|=\infty$, \eqref{two-sided subsequence} is equal to
\begin{equation}
P\left(z_{1-\frac{\alpha-\gamma}{2}}\leq Z_1\leq z_{1-\frac{\alpha-\gamma}{2}}\right)=1-\alpha+\gamma>1-\alpha. \label{two-sided infinite d case}
\end{equation}
by (ii)-(v).  Together, \eqref{two-sided case 4}--\eqref{two-sided infinite d case} yield the lower bound in the statement of the theorem for $CI_{t,n}(\cdot)$.

To prove the upper bound, note that by nearly identical arguments to those used to establish \eqref{two-sided subsequence},
\begin{align*}
&\limsup_{n\rightarrow\infty}\text{ }\sup_{\lambda\in\Lambda}P_\lambda\left(b\in CI_{t,n}(\hat b_n,\hat d_n;\widehat\Sigma_n)\right)= \\
&\lim_{n\rightarrow\infty}P_{\lambda_{m_n,\mathfrak{b},\mathfrak{d},\Sigma^*}}\left(-\min\left\{z_{1-\frac{\alpha-\gamma}{2}},-\widehat\Omega_{m_n,bd^{(s_2^*)}}\widehat\Omega_{m_n,d^{(s_2^*)}d^{(s_2^*)}}^{-1}\diag(\widehat\Sigma_{m_n,d^{(s_2^*)}d^{(s_2^*)}})^{-1/2}\sqrt{m_n}\hat d_{m_n}^{(s_2^*)}+c_u^*\left({\tilde \Omega}^{*(s_1^*,s_2^*)}\right)\right\} \right.  \\
&\quad  \leq\frac{\sqrt{m_n}(\hat b_{m_n}-b_{m_n,\mathfrak{b}})}{\sqrt{\widehat\Sigma_{m_n,bb}}}  \\
&\quad \left.\leq \min\left\{z_{1-\frac{\alpha-\gamma}{2}},\widehat\Omega_{m_n,bd^{(s_1^*)}}\widehat\Omega_{m_n,d^{(s_1^*)}d^{(s_1^*)}}^{-1}\diag(\widehat\Sigma_{m_n,d^{(s_1^*)}d^{(s_1^*)}})^{-1/2}\sqrt{m_n}\hat d_{m_n}^{(s_1^*)}+\tilde c \left(c_u^*\left({\tilde \Omega}^{*(s_1^*,s_2^*)}\right),{\tilde \Omega}^{*(s_1^*,s_2^*)}\right)\right\}\right)
\end{align*}
for a subsequence $\{m_n:n\geq 1\}$ of $\{n:n\geq 1\}$ such that $\lambda_{m_n,\mathfrak{b},\mathfrak{d},\Sigma^*}\in\Lambda$ for all $n\geq 1$, $\sqrt{m_n}(b_{m_n,\mathfrak{b}},d_{m_n,\mathfrak{d}})\rightarrow (\mathfrak{b},\mathfrak{d}) $ and $\Sigma_{m_n,\Sigma^*}\rightarrow \Sigma^*$ for some $(\mathfrak{b},\mathfrak{d},\Sigma^*)\in \mathbb{R}_\infty\times \mathbb{R}_{+,\infty}^k\times \Phi$ with $\lambda_{\min}(\Sigma^*)\geq \kappa$ and $\lambda_{\max}(\Sigma^*)\leq \kappa^{-1}$ and some $c_u^*\left({\tilde \Omega}^{*(s_1^*,s_2^*)}\right)\in \tilde c_u\left({\tilde \Omega}^{*(s_1^*,s_2^*)}\right)$.  Note that for the probability to the left of the inequality in \eqref{two-sided both finite d case},
\begin{gather*}
P\left(-\min\left\{z_{1-\frac{\alpha-\gamma}{2}},-\Omega_{bd^{(s_2^*)}}^*\Omega_{d^{(s_2^*)}d^{(s_2^*)}}^{*-1}Y_\delta^{(s_2^*)}+c_u^*\left({\tilde \Omega}^{*(s_1^*,s_2^*)}\right)\right\}\right. \\
\leq \left. Z_1\leq \min\left\{z_{1-\frac{\alpha-\gamma}{2}},\Omega_{bd^{(s_1^*)}}^*\Omega_{d^{(s_1^*)}d^{(s_1^*)}}^{*-1}Y_\delta^{(s_1^*)}+\tilde c \left(c_u^*\left({\tilde \Omega}^{*(s_1^*,s_2^*)}\right),{\tilde \Omega}^{*(s_1^*,s_2^*)}\right)\right\}\right) \\
\leq P\left(-z_{1-\frac{\alpha-\gamma}{2}}\leq Z_1\leq z_{1-\frac{\alpha-\gamma}{2}}\right)=1-\alpha+\gamma;
\end{gather*}
for the probability to the left of the first inequality in \eqref{two-sided first finite d case},
\begin{gather*}
P\left(-z_{1-\frac{\alpha-\gamma}{2}}\leq Z_1\leq \min\left\{z_{1-\frac{\alpha-\gamma}{2}},\Omega_{bd^{(s_1^*)}}^*\Omega_{d^{(s_1^*)}d^{(s_1^*)}}^{*-1}Y_\delta^{(s_1^*)}+\tilde c \left(c_u^*\left({\tilde \Omega}^{*(s_1^*,s_2^*)}\right),{\tilde \Omega}^{*(s_1^*,s_2^*)}\right)\right\}\right) \\
\leq P\left(-z_{1-\frac{\alpha-\gamma}{2}}\leq Z_1\leq z_{1-\frac{\alpha-\gamma}{2}}\right)=1-\alpha+\gamma;
\end{gather*}
and for the probability to the left of the first inequality in \eqref{two-sided second finite d case},
\begin{gather*}
P\left(-\min\left\{z_{1-\frac{\alpha-\gamma}{2}},-\Omega_{bd^{(s_2^*)}}^*\Omega_{d^{(s_2^*)}d^{(s_2^*)}}^{*-1}Y_\delta^{(s_2^*)}+c_u^*\left({\tilde \Omega}^{*(s_1^*,s_2^*)}\right)\right\}\leq Z_1\leq z_{1-\frac{\alpha-\gamma}{2}}\right) \\
\leq P\left(-z_{1-\frac{\alpha-\gamma}{2}}\leq Z_1\leq z_{1-\frac{\alpha-\gamma}{2}}\right)=1-\alpha+\gamma.
\end{gather*}
Then, nearly identical reasoning used to establish \eqref{two-sided case 4}--\eqref{two-sided infinite d case}, replacing ``$\liminf_{n\rightarrow\infty}\inf_{\lambda\in\Lambda}$'' with ``$\limsup_{n\rightarrow\infty}\sup_{\lambda\in\Lambda}$'' and the subsequences $\{k_n:n\geq 1\}$ and $\{\lambda_{k_n,\mathfrak{b},\mathfrak{d},\Sigma^*}\in\Lambda:n\geq 1\}$ with $\{m_n:n\geq 1\}$ and $\{\lambda_{m_n,\mathfrak{b},\mathfrak{d},\Sigma^*}\in\Lambda:n\geq 1\}$, yields the upper bound in the statement of the theorem for $CI_{t,n}(\cdot)$. 
\end{proof}

\section{Parameter Space for the Standard Linear Regression Model} \label{sec:reg_space}

In this section, we provide details for parameter spaces satisfying (i)--(v) in Section \ref{sec:asymptotics} in the context of the standard linear regression model.  Recall in this setting we are interested in conducting inference on a regression coefficient of interest $b$ in the standard linear regression model for observations $i=1,\ldots,n$
\[y_i=bz_i+x_i^{\prime}d+w_i^{\prime}c+\varepsilon_i,\]
where $y_i$ is the dependent variable, $z_i$ is the scalar regressor of interest, $x_i\in\mathbb{R}^{\mathcal D_x}$ are control variables with \emph{known positive partial effects} $d\geq 0$ on $y_i$, $w_i\in\mathbb{R}^{\mathcal D_w}$ are control variables with unrestricted partial effects $c$ and $\varepsilon_i$ is the error term. 

Define $h_i=(z_i,x_i^{\prime},w_i^{\prime})^{\prime}$ so that the ordinary least squares estimator of $(b,d^{\prime})^{\prime}$, $(\hat b_n,\hat d_n^{\prime})^{\prime}$, is equal to the first $\mathcal D_x+1$ entries of $(\sum_{i=1}^nh_ih_i^{\prime})^{-1}\sum_{i=1}^n h_iy_i$.  Let $F$ denote the joint distribution of the stationary random vectors $\{(h_i^{\prime},\varepsilon_i)^{\prime}:i\geq 1\}$ and define the parameter $\tilde\lambda=(b,d,c,\mathcal{V},Q,F)$.  The parameter space $\widetilde\Lambda$ for $\tilde\lambda$ is defined to include parameters $\tilde\lambda=(b,d,c,\mathcal{V},Q,F)$ such that for some finite $\kappa>0$, the following conditions hold:

(i') $b\in\mathbb{R}$, $d\in\mathbb{R}_+^{\mathcal D_x}$ and $c\in\mathbb{R}^{\mathcal D_w}$;

(ii') $\lim_{n\rightarrow \infty}n^{-1}\sum_{i=1}^n\sum_{j=1}^nE_F[h_ih_j^{\prime}\varepsilon_i\varepsilon_j]$ exists and equals $\mathcal{V}\in\Phi$ with $\lambda_{\max}(\mathcal V)\leq \kappa^{-1}$;

(iii') $E_F[h_ih_i^{\prime}]$ exists and equals $Q\in\Phi$ with $\lambda_{\min}(Q)\geq \kappa$.

In addition, under any sequence of parameters $\{\tilde \lambda_{n,\mathfrak{b},\mathfrak{d},\mathcal V^*,Q^*}=(b_{n,\mathfrak{b}},d_{n,\mathfrak{d}},\mathcal{V}_{n,\mathcal{V}^*},Q_{n,Q^*},F_{n,\mathfrak{b},\mathfrak{d},\mathcal{V}^*,Q^*}):n\geq 1\}$ in $\widetilde\Lambda$ such that \eqref{bd drifting seq} holds, 
\begin{equation}
\mathcal{V}_{n,\mathcal{V}^*}\rightarrow \mathcal{V}^* \label{V conv}
\end{equation}
 and 
\begin{equation}
 Q_{n,Q^*}\rightarrow Q^* \label{Q conv}
 \end{equation} 
 for some $\mathcal{V}^*,Q^*\in \Phi$, the following remaining conditions hold:

(iv') $\widehat{\mathcal V}_n$ and $\widehat Q_n\equiv n^{-1}\sum_{i=1}^nh_ih_i^{\prime}$ exist and $\lambda_{\min}(n^{-1}\sum_{i=1}^nh_ih_i^{\prime})>0$ with probability one for all $n\geq 1$;

(v') $\widehat{\mathcal V}_n\overset{p}\longrightarrow \mathcal{V}^*$;

(vi') $\widehat Q_n\overset{p}\longrightarrow Q^*$;

(vii') $n^{-1/2}\sum_{i=1}^n h_i\varepsilon_i\overset{d}\longrightarrow \mathcal{N}(0,\mathcal{V}^*)$;

(viii') for any sequence $\{\tilde \lambda_{n,\mathfrak{b},\mathfrak{d},\mathcal V^*,Q^*}\}$ in $\widetilde\Lambda$ and any subsequence $\{s_n:n\geq 1\}$ of $\{n:n\geq 1\}$ for which \eqref{bd drifting seq}, \eqref{V conv}--\eqref{Q conv} hold along the subsequence, conditions (iv')--(vii') also hold along the subsequence.

Note that for $\Sigma$ equal to the upper left $(\mathcal D_x+1)\times (\mathcal D_x+1)$ submatrix of $Q^{-1}\mathcal VQ^{-1}$ and $\widehat\Sigma_n$ equal to the upper left $(\mathcal D_x+1)\times (\mathcal D_x+1)$ submatrix of $\widehat Q_n^{-1}\widehat{\mathcal V}_n\widehat Q_n^{-1}$, the conditions (i')--(viii') on the parameter space $\widetilde\Lambda$ imply (i)--(v) for the parameter space $\Lambda$.  More specifically, (i') implies (i), (ii')--(iii') imply (ii), (iv')--(vi') imply (iii), (vi')--(vii') imply (iv) and (viii') implies (v).

In conjunction with a suitable choice of covariance matrix estimator $\widehat{\mathcal V}_n$, the above definition of the parameter space $\widetilde\Lambda$ effectively serves as a set of assumptions on the underlying DGP in the context of the standard linear regression model when using ordinary least squares for estimation.  Part (i') imposes known sign restrictions for the nuisance coefficients $d$ while letting the coefficient of interest $b$ and the other nuisance coefficients $c$ remain unrestricted.  Parts (ii')--(iii') are standard conditions ensuring the existence of asymptotic covariance matrices while parts (iv')--(vi') are high level assumptions that guarantee consistent estimators of these covariance matrices are available, typically shown via application of a law of large numbers.  Part (vii') is a high level assumption that directly assumes a central limit theorem holds for the product of the regressors and error term in the regression model, a result that is typically invoked when proving asymptotic normality of ordinary least squares estimators.  Finally, part (viii') is a mild technical condition used to establish results under relevant drifting sequences of DGPs.

The definition of the parameter space $\widetilde\Lambda$ is written at such a level of generality to allow for heteroskedasticity and/or weak dependence in the data, enabling the use of our CIs in both cross-sectional and time series settings.  We refer the interested reader to Section 3 of \cite{McC20} for two sets of weak low-level sufficient conditions that guarantee the high-level assumptions (iv')--(vii') hold when using standard covariance matrix estimators in the context of estimation robust to heteroskedasticity for randomly sampled data and estimation robust to heteroskedasticity and autocorrelation for time series data.

\section{Additional Tables} \label{AT}

\begin{table}[h!]											
\begin{center}									
\caption{Coefficients for $6^\text{th}$ order polynomial approximation of $c_u(\omega)$ for $\alpha = 0.01$ and $\gamma = \alpha/10$}									
\label{Coefficients_2s_table_1}									
\begin{tabular}{c|rrrrrrr}								
\hline								
\hline			
 & \multicolumn{1}{c}{1} & \multicolumn{1}{c}{$\omega_{12}$} & \multicolumn{1}{c}{$\omega_{12}^2$} & \multicolumn{1}{c}{$\omega_{12}^3$} & \multicolumn{1}{c}{$\omega_{12}^4$} & \multicolumn{1}{c}{$\omega_{12}^5$} & \multicolumn{1}{c}{$\omega_{12}^6$}  \\ 
\hline
1 & $2.5710$ & $1.4378$ & $-4.7977$ & $12.2591$ & $-20.5823$ & $18.2815$ & $-6.5866$ \\ 
$\omega_{13}$ & $1.1854$ & $-1.1672$ & $3.6035$ & $-2.5234$ & $0.2467$ & $0.6751$ & $$ \\ 
$\omega_{13}^2$ & $-16.4621$ & $-2.1843$ & $-2.6765$ & $0.8411$ & $-0.6847$ & $$ & $$ \\ 
$\omega_{13}^3$ & $63.1856$ & $8.4153$ & $1.0849$ & $0.7850$ & $$ & $$ & $$ \\ 
$\omega_{13}^4$ & $-128.0372$ & $-9.2032$ & $-0.3625$ & $$ & $$ & $$ & $$ \\ 
$\omega_{13}^5$ & $123.3096$ & $3.1479$ & $$ & $$ & $$ & $$ & $$ \\ 
$\omega_{13}^6$ & $-45.5050$ & $$ & $$ & $$ & $$ & $$ & $$ \\ 
\hline								
\end{tabular}
\end{center}						
\end{table}

\begin{table}[h!]											
\begin{center}									
\caption{Coefficients for $6^\text{th}$ order polynomial approximation of $c_u(\omega)$ for $\alpha = 0.1$ and $\gamma = \alpha/10$}									
\label{Coefficients_2s_table_10}									
\begin{tabular}{c|rrrrrrr}								
\hline								
\hline			
 & \multicolumn{1}{c}{1} & \multicolumn{1}{c}{$\omega_{12}$} & \multicolumn{1}{c}{$\omega_{12}^2$} & \multicolumn{1}{c}{$\omega_{12}^3$} & \multicolumn{1}{c}{$\omega_{12}^4$} & \multicolumn{1}{c}{$\omega_{12}^5$} & \multicolumn{1}{c}{$\omega_{12}^6$}  \\ 
\hline
1 & $1.6348$ & $1.2890$ & $-4.8501$ & $14.0485$ & $-23.9082$ & $20.3891$ & $-7.0186$ \\ 
$\omega_{13}$ & $1.2271$ & $0.0224$ & $-0.6555$ & $0.7875$ & $1.0308$ & $-0.5813$ & $$ \\ 
$\omega_{13}^2$ & $-11.7243$ & $-2.0585$ & $3.7550$ & $-5.0051$ & $1.5399$ & $$ & $$ \\ 
$\omega_{13}^3$ & $43.6253$ & $3.2898$ & $-1.7097$ & $1.1221$ & $$ & $$ & $$ \\ 
$\omega_{13}^4$ & $-87.8291$ & $-2.6854$ & $0.6640$ & $$ & $$ & $$ & $$ \\ 
$\omega_{13}^5$ & $84.6893$ & $0.5102$ & $$ & $$ & $$ & $$ & $$ \\ 
$\omega_{13}^6$ & $-31.4176$ & $$ & $$ & $$ & $$ & $$ & $$ \\ 
\hline								
\end{tabular}
\end{center}						
\end{table}
%%%%%%%%%%%%%%%
%%%%%%%%%%%%%%%

\newpage 

\bibliographystyle{apalike}
\bibliography{sign_restrict}

%%%%%%%%%%%%%%%%%%%%%%%%%%%%%%%%%%%%%%%%%%%%
%%%%%%%%%%%%%%%%%%%%%%%%%%%%%%%%%%%%%%%%%%%%%

\end{document}